\documentclass[12pt]{amsart}
\usepackage{tikz-cd}
\usepackage{amsmath}
\usepackage{amssymb}
\usepackage{graphicx} 
\usepackage{caption} 
\usepackage{float} 
\usepackage{amsmath, amstext, amsbsy, amssymb, amscd}
\usepackage{amsxtra}
\usepackage{amscd}
\usepackage{amsthm}
\usepackage{amsfonts}
\usepackage{mathrsfs}
\usepackage{eucal}
\usepackage{color}
\usepackage{tikz-cd}
\usepackage[CJKbookmarks=true]{hyperref}
\usepackage[all,cmtip]{xy}
\usepackage{supertabular}
\usepackage{tabularx}

\newtheorem{theorem}{Theorem}[section]
\newtheorem{lemma}[theorem]{Lemma}

\newtheorem{proposition}[theorem]{Proposition}

\theoremstyle{definition}

\newtheorem{definition}[theorem]{Definition}
\newtheorem{example}[theorem]{Example}
\newtheorem{remark}[theorem]{Remark}
\newtheorem{assumption}[theorem]{Assumption}

\newtheorem{notation}[theorem]{Notation}

\numberwithin{equation}{section}

\def\diag{\mathrm{diag}}

\def\Im{\mathrm{Im}\,}

\newcommand{\abs}[1]{\lvert#1\rvert}

\makeatletter

\newcommand{\Rmnum}[1]{\expandafter\@slowromancap\romannumeral #1@}

\newcommand{\Sone}{\mathrm{S}^1}

\newcommand{\Uthree}{\mathbb{U}_3}
\newcommand{\imagunit}{\mathbf{i}}
\newcommand{\SU}{\mathrm{SU}}
\newcommand{\su}{\mathfrak{su}}

\newcommand{\SUtwo}{\SU(2)}
\newcommand{\SUthree}{\SU(3)}

\newcommand{\rme}{\mathrm{e}}

\newcommand{\Hamiltonian}[1]{\mathcal{H}_{#1}}

\makeatother

{\vskip-\lastskip\medskip
	\noindent
	{\em #1.}\enspace
}%
{\qed\par\medskip
}

\newcommand{\RR}{\mathbb{R}}
\newcommand{\CC}{\mathbb{C}}

\newcommand{\gstar}{\mathfrak{g}^*}
\newcommand{\Oxi}{\mathcal{O}_{\xi}}
\newcommand{\galgebra}{\mathfrak{g}}
\newcommand{\CinfM}{C^\infty(M)}
\newcommand{\CinfcalM}{C^\infty(\mathcal{M})}
\newcommand{\Cinf}{C^\infty}
\newcommand{\Zxi}{\mathcal{Z}_{\xi}}
\newcommand{\Reducedxi}{\mathcal{R}_\xi}

\newcommand{\SET}[1]{\{#1\}}
\newcommand{\PoissonBracket}[1]{\{#1\}}

\newcommand{\Pxi}{\mathcal{P}_{\xi}}

\newcommand{\Bxi}{\mathcal{B}_{\xi}}
\newcommand{\Adjointaction}{\mathrm{Ad}}
\newcommand{\Lieg}{\mathfrak{g}}
\newcommand{\Hreg}{\Cartan_{\mathrm{reg}}}
\newcommand{\hreg}{\mathfrak{h}_{\mathrm{reg}}}

\newcommand{\Cartan}{\mathbf{H}}
\newcommand{\cartan}{\mathfrak{h}}

\newcommand{\Oprimexi}{\mathcal{O}^{\prime}_{\xi}}
\newcommand{\maxpart}{\mathrm{max}}

\newcommand{\lowpart}{\mathrm{low}}
\newcommand{\middleK}{\mathcal{K}}

\newcommand{\Projection}{\mathrm{Pr}}

\newcommand{\partialless}{\preccurlyeq}
\newcommand{\Rxizero}{\mathcal{R}_{0}}
\newcommand{\Pxizero}{\mathcal{P}_{0}}
\newcommand{\Bxizero}{\mathcal{B}_{0}}

\begin{document}
	\begin{abstract}We define the notion of superintegrability of a Hamiltonian system on a  stratified symplectic space. We focus on spin Calogero-Moser-Sutherland (sCMS) systems, where the phase space is a stratified symplectic space obtained by the Hamiltonian reduction of the cotangent bundle over a compact Lie group, and demonstrate that the sCMS systems for $\SU(3)$ are superintegrable.



\end{abstract}
	
	\title[]{Superintegrability of stratified symplectic spaces}
	
	
	\author{Zhuo Chen}
	\address{Department of Mathematics, Tsinghua University, Beijing 100084, China}
	\email{\href{mailto:chenzhuo@tsinghua.edu.cn}{chenzhuo@tsinghua.edu.cn}}
	
	\author{Kai Jiang$^*$} 
	\address{Paris Curie Engineer School, Beijing University of Chemical Technology, Beijing 100029, China}
	\email{\href{kai.jiang@buct.edu.cn}{kai.jiang@buct.edu.cn}(corresponding author)}
	
	\author{Nicolai Reshetikhin} 
	\address{YMSC, Tsinghua University, Beijing 100084, China; BIMSA, Beijing, China, \& Department of Mathematics and Computer Sciences, St. Petersburg University, Russian Federation}
	\email{\href{mailto:reshetik@math.berkeley.edu}{reshetik@math.berkeley.edu}}
	
	\author{Husileng Xiao}
	\address{College of Mathematical Sciences, Harbin Engineering University, Harbin, 150001, China}
	\email{\href{hslxiao@hrbeu.edu.cn}{hslxiao@hrbeu.edu.cn}}
	
	\date{First version, \today}
	\subjclass{  54E20, 70H06, 37J39 } 
	\keywords{Spin Calogero-Moser-Sutherland system, Superintegrability, Stratified space, Hamiltonian reduction}%
	
	\maketitle
	\tableofcontents
	
	\section{Introduction}\label{Sec:introduction}

\subsection{An overview of superintegrability}   Liouville integrability in Hamiltonian mechanics has many equivalent formulations. In physics, a dynamical Hamiltonian system is said to be completely integrable if it has the maximal number of Poisson-commuting integrals of motion. We will distinguish here the geometric notion of an integrable system on a symplectic manifold from the dynamical notion of integrability of a given Hamiltonian system. 

Geometrically, a completely integrable system (or Liouville integrable system)   is a Lagrangian fibration on a symplectic manifold that is the phase space of the system. For a symplectic manifold $M^{2n}$, a Lagrangian fibration is the projection 
\begin{equation}\label{Lint}
M^{2n}\stackrel{\psi}\rightarrow B^{n}
\end{equation}
where $B^{n}$ is a manifold, or a smooth stratified space, and $\psi$ is a Lagrangian Poisson surjective submersion, i.e. it is a Poisson surjective submersion where the generic fiber is a Lagrangian submanifold. Note that $B^{n}$ may also have subsets on which fibers of $\psi$ are singular or lower-dimensional; see, for example,  \cite{Zung1996,Zung2003,BF,MGRarxiv2506.16610v2}. In many examples such a Lagrangian fibration is given by level surfaces of $n$ Poisson-commuting independent functions. In this case $B^{n}$ is a subset in $\RR^n$ of all possible values of these functions.

Each Lagrangian fiber of $\psi$ is diffeomorphic to $\RR^k\times \mathbb{T}^{n-k}$ for some $k$. If a Lagrangian fiber is compact, it is diffeomorphic to an $n$-dimensional torus. 

A Hamiltonian vector field generated by a Hamiltonian function $H$ is completely integrable 
(Liouville integrable) if there is a completely integrable system (\ref{Lint}) such that $H$ is the pull-back of a function on $B^{n}$. The structure of Liouville integrable Hamiltonian dynamics is described by the Arnold-Liouville theorem; see, for example, \cite{Arnold}.

A superintegrable system, as a geometric structure on a symplectic manifold, is defined by a sequence of Poisson projections:
\begin{equation}\label{sint}
M^{2n}\stackrel{\psi}\rightarrow P^{2n-k}\stackrel{\pi}\rightarrow B^{k}
\end{equation}
where $k<n$, $P^{2n-k}$ is a Poisson manifold, $B^{k}$ is a manifold with trivial Poisson structure,
mappings $\pi$ and $\psi$ are Poisson surjective submersions, and connected components of $\pi^{-1}(b) $ for generic $b\in B^{k}$ are symplectic leaves. Projections $\psi$ and $\pi$, as in \eqref{sint}  should be Poisson surjective submersions. When $k=n$, then $P^{n}=B^{n}$ and $\pi$ becomes the identity map, and this definition agrees with the Liouville integrable system \eqref{Lint}. A generic fiber of $\psi$ in a superintegrable system is isotropic of dimension $k$.

In physics, a superintegrable system is a dynamical system that possesses the maximal possible number of integrals of motion, which are not necessarily all mutually Poisson commuting.  In this setting, these Poisson-commuting integrals 
play the same role as   in Liouville integrability. The number $k$ of such independent integrals can be less than $n$ on a  symplectic manifold of dimension $2n$. When $k=n$ we are back to the Liouville case. All integrals of motion form a Poisson subalgebra in the algebra of functions on the phase space which is the Poisson centralizer of the subalgebra of Poisson-commuting integrals.  One can say that the Poisson subalgebra of integrals of motions describes the internal symmetry of the dynamics.

For superintegrable systems there is a version of the Arnold-Liouville theorem. It was formulated and proved by Nekhoroshev \cite{Nek1972}. Examples of superintegrable systems were known even before this formal definition was introduced \cite{Nek1972}. A well-known example is the Kepler system in classical mechanics; see  Example \ref{Ex:Kepler}. 
For other examples of superintegrable systems, see,  for example, \cite{Old-review,BF}.

The notion of superintegrability is broader than  Liouville integrability in the sense that for Liouville integrable systems generic Liouville tori are $n$-dimensional and for a superintegrable system (\ref{sint}) they are $k$-dimensional. This means  superintegrability brings more restrictions to dynamics than   Liouville integrability.

In this paper we extend the notion of superintegrability to stratified symplectic spaces. In such a system the phase space $M^{2n}$ is not necessarily a manifold, but a  stratified symplectic space, $P^{2n-k}$ can be a stratified Poisson space and $B^{k}$  is a stratified space. Mappings $\pi$ and $\psi$ are morphisms of stratified Poisson spaces. For details, see Section \ref{Sec:Superintegrability}.

\subsection{Spin Calogero-Moser-Sutherland (sCMS)   systems}\label{Sec:SpinCM}
Many stratified Poisson spaces arise as quotients of Poisson manifolds under Poisson actions of Lie groups, while many stratified symplectic spaces are obtained through Hamiltonian reduction of symplectic manifolds or symplectic stratified spaces. A representative example is the Hamiltonian reduction of \(T^*G\), where \(G\) is a compact connected Lie group, with respect to the coadjoint action of \(G\); see below and Section \ref{sCMS}. The resulting reduced space carries a natural stratified symplectic structure, serving as the phase space of the sCMS model. For \(G=\SU(2)\) and \(G=\SU(3)\), these models provide explicit, nontrivial examples of superintegrable systems on stratified symplectic spaces.  We now  sketch the main ingredients of sCMS systems, which constitute a primary source of inspiration for this paper.

\subsubsection{The  space $\Reducedxi$ and the sCMS geometric structure}\label{subSec:phasespaceRxifromG}
Let $G$ be a compact and connected Lie group. 
For an element $g\in G$ the right translation by $g^{-1}$ identifies $T_gG$ with $T_eG$, i.e. with the space $\gstar$ dual to the Lie algebra $\Lieg=T_eG$. This gives the trivialization of the cotangent bundle by right translations $T^*G\cong \gstar\times G$. 

The action of $G$ on itself by conjugation lifts to a proper $G$-action on $T^*G \cong \gstar \times G$:
\[
g\colon\quad (x,\gamma)  \mapsto (\Adjointaction^*_g (x),  g \gamma g^{-1}).
\]
where $x\in \gstar$ and $\gamma\in G$. We will call this action the {\it coadjoint action}. This action is Hamiltonian with the moment map $J: T^*G \to \mathfrak{g}^*$ which is defined by 
\begin{equation}\label{Eqt:JTstarGgeneral}
		J(x, \gamma)=x-{\Adjointaction^*_{\gamma^{-1}}}(x).
\end{equation}
Note that $J$ is Poisson and $G$-equivariant.

For simple Lie groups this mapping $J$ is indeed surjective. As a demonstration, we prove this fact for $G=\SU(n)$ in 
Proposition \ref{Prop:SUnmomentmapsurjective} of Appendix \ref{Sec:JTstarGsurjectiveSUn}.

Choose $\xi$ in the image of the moment map $J$ and let $\mathcal{O}_\xi\subset \gstar$ be the coadjoint orbit through $\xi$. Define the space 
\begin{eqnarray*}
		&& \Reducedxi :=	J^{-1}(\mathcal{O}_\xi)/G\\
		&=&\{(x,\gamma)\in \mathfrak{g}^*\times G \mid x-{\Adjointaction^*_{\gamma^{-1}}}(x)\in\mathcal{O}_\xi \}/G. \quad  (\subset(\mathfrak{g} ^*\times G) /G ~ \cong T^*G/G) 
\end{eqnarray*}
The space $\Reducedxi$ is the phase space of the sCMS system corresponding to the coadjoint orbit $\mathcal{O}_\xi$ 
 (known as the Hamiltonian reduction of $T^*G$ with respect to  $\xi$). In general, $\Reducedxi$ is not a manifold but a stratified symplectic space.  For special $\xi$  it  may become a genuine manifold. This space can also be viewed as a stratified symplectic leaf of the stratified Poisson space $T^*G/G$.

\subsubsection{The sCMS system on $\Reducedxi$} 

First, let us set notation. For a \emph{compact simple} Lie group $G$, let $G_\CC$ be its complexification. For example, we can take $G=\SU(n)$ and $G_\CC=\mathrm{SL}(n,\CC)$. A choice of Borel subgroup $B\subset G_\CC$  fixes the Cartan subgroup $H_\CC\subset G_\CC$ corresponding to $B$. Let $\Cartan\subset G$ be the corresponding Cartan subgroup in $G$ (the intersection of $H_\CC$ and $G$ in $G_\CC$). For $G=\SU(n)$ the Cartan subgroup consists of unitary unimodular diagonal matrices. 

Denote by $\Hreg $ the open dense subset of \textbf{regular elements} in $\Cartan$ , i.e., elements with the centralizer being the Cartan subgroup. For $\SU(n)$, elements of $\Hreg $ are diagonal unitary unimodular matrices with $n$ distinct eigenvalues.
	
Let $\cartan$   be the Lie algebra of $\Cartan$, and let $j$ be the inclusion map $\cartan \hookrightarrow \Lieg$. The coadjoint orbit $\Oxi$ is a symplectic leaf of $\gstar$. The coadjoint action of the Cartan subgroup $\Cartan$ on $\Oxi$ is Hamiltonian with the moment map $P: \Oxi \hookrightarrow \Lieg^* \stackrel{j^*}{\rightarrow} \cartan^*$. Denote by $\Oprimexi$ the result of the Hamiltonian reduction of $\Oxi$ at the point $0 \in \cartan^*$ for the coadjoint action by $\Cartan$, i.e. $\Oprimexi = P^{-1}(0) /\Cartan$.

The coadjoint action of the normalizer $N(\Hreg )\subset G$  on $\Oxi$ descends to the action of the Weyl group $\mathbf{W}\simeq N(\Hreg )/\Cartan$ on $\Oprimexi$. 

Denote by  $\Reducedxi'$ the open dense subset of $\Reducedxi$, which consists of $G$-orbits through $(x,\gamma)$ where $\gamma\in \Hreg $. The diagonalization of $\gamma$ gives an isomorphism
$$
\Reducedxi' \cong (T^* \Hreg  \times \Oprimexi)/\mathbf{W} \cong (\Hreg \times \cartan^* \times \Oprimexi)/\mathbf{W}.
$$
Here the Weyl group $\mathbf{W}$ acts diagonally on the product $\Hreg \times \cartan^* \times \Oprimexi$.

We refer to \(\Reducedxi\) as the full phase space of the sCMS model, and to its open dense subset \(\Reducedxi'\) as the small phase space. To reveal finer structures within both \(\Reducedxi\) and \(\Reducedxi'\), the present paper decomposes \(\Reducedxi\) into smaller strata that organize its elements. In particular, we obtain a complete classification of the points of \(\Reducedxi\) for \(G=\SU(2)\) and \(G=\SU(3)\), for various types of \(\xi\) (see Sections \ref{Sec:SU(2)} and \ref{Sec:sCMSRPBSU3}). These classification results also identify the corresponding subset \(\Reducedxi' \subset \Reducedxi\) in these examples.

Briefly,   an sCMS   system as a geometric structure is the following combination of Poisson maps:
\begin{equation}\label{sCMS-int}
\Reducedxi\stackrel{\psi}\rightarrow \mathcal{P}_{\xi}\stackrel{\pi}\rightarrow \mathcal{B}_\xi.
\end{equation}
Here $\Reducedxi$ is the full phase space as described above,
\[
\mathcal{P}_{\xi}=\{(x,y)\in \mathfrak{g^*}\times \mathfrak{g^*}|\mathcal{O}_x=-\mathcal{O}_y; x+y\in \mathcal{O}_\xi\}/G\subset (\mathfrak{g^*}\times \mathfrak{g^*})/G,
\]
where $\mathcal{O}_x$ is the coadjoint orbit through $x$, $G$ acts diagonally on $\mathfrak{g^*}\times \mathfrak{g^*}$, and 
\[
\mathcal{B}_\xi=\{x\in \mathfrak{g}^*| \mbox{ there exists }\gamma\in G, x-\Adjointaction^*_{\gamma^{-1}}(x)\in \Oxi\}/G\subset  \mathfrak{g}^*/G.
\]
In Section \ref{sCMS}, we establish refined stratifications of the spaces $\Reducedxi$ and $\Pxi$ in \eqref{sCMS-int},  which are essential for defining a stratified superintegrable system when $G=\SU(2)$ or $\SU(3)$. These results naturally suggest that the sCMS system associated with Sequence \eqref{sCMS-int} may be superintegrable for a broad class of choices of $G$. We leave the investigation of this subject for future research.
 
\subsubsection{The sCMS system as an integrable Hamiltonian dynamical system}

A particular Casimir function on $\mathfrak{g^*}$ is $\frac{1}{2}(x,x)$,  where $(\cdot,\cdot)$ is the dual Killing form. Being pulled back from $\mathfrak{g^*}$ to $T^*G\simeq \mathfrak{g^*}\times G$, it defines a $G$-invariant function on $T^*G$, i.e., a function on $T^*G/G$. 
The Hamiltonian  $H_{\mathcal{O}_\xi}$ of the associated sCMS  system is defined to be the restriction of this function to the full phase space $\Reducedxi=J^{-1}(\mathcal{O}_\xi)/G$. 

Next, we demonstrate that the restriction of   $H_{\mathcal{O}_\xi}$ to the small phase space $\Reducedxi'$ can be computed explicitly as a function on $(\Hreg \times \cartan^* \times \Oprimexi)/\mathbf{W}$.


Let   \(\Delta \subset \cartan^*\)  be the root system of the Lie algebra \(\Lieg_\CC\). Let $\Delta_+$ be the subset of positive roots. The complexification of the Lie algebra $\Lieg$ decomposes as $\mathfrak{g}_\CC=\cartan\oplus \bigoplus_{\alpha\in \Delta_+} (\CC e_\alpha\oplus  \CC e_{-\alpha})$ where $\CC e_\alpha$ are root subspaces. The Killing form defines the dual decomposition. In other words, every $x\in \mathfrak{g}_\CC^*$ decomposes as $p+\sum_{\alpha\in \Delta_+} (x_\alpha e^*_\alpha+ x_{-\alpha}e^*_{-\alpha})$ where $p\in \cartan^*$,  $e^*_{\alpha}$ form a dual root basis, and $x_\alpha$, $x_{-\alpha}\in \CC$. In the compact real form $\mathfrak{g}$ and $\cartan$ are  real vector spaces and $x_{-\alpha}$ is complex conjugate to $x_\alpha$. 


Let \(\gamma_{\alpha}\) be functions on \(\Cartan\) defined by its action on root subspaces in $\mathfrak{g}$:
	\[
	\Adjointaction_{h}(e_{\alpha}) = \gamma_{\alpha}(h) e_{\alpha}.
	\]
For the compact real form we have $\overline{\gamma_{\alpha}}=\gamma_{\alpha}^{-1}=\gamma_{-\alpha}$.
The restriction of the Hamiltonian $H_{\mathcal{O}_\xi}$ to $\Reducedxi'$ takes the form (see, for example, \cite{NR2003}):
\[
H_{\mathcal{O}_\xi}|_{\Reducedxi'}= \frac{1}{2}(p, p) + \sum_{\alpha \in \Delta_{+}} \frac{(\alpha, \alpha) |x_\alpha|^2}{\left(\gamma_{\alpha / 2}-\gamma_{\alpha / 2}^{-1}\right)^2},
\]
where \(\left(\gamma_{\alpha / 2} - \gamma_{\alpha / 2}^{-1}\right)^2 := \gamma_{\alpha} + \gamma_{-\alpha} - 2\). The function $|x_\alpha|^2$ on $P^{-1}(0)$ is $\Cartan$-invariant and therefore is a function on $\mathcal{O}_\xi'=P^{-1}(0)/\Cartan$.

 In general, the Poisson-commuting Hamiltonians of the sCMS system arise from \(G\)-invariant polynomial functions on \(\mathfrak{g}^*\), which are pulled back to \(\mathfrak{g}^*\times G\) and subsequently restricted to \(J^{-1}(\mathcal{O}_\xi)\). In the present paper, we show only that these Hamiltonians preserve the refined strata (see Proposition \ref{Prop:tangenttorefinedstrata}); studying their behavior and the corresponding flows is left for future work.

\subsection{Structure of the paper} 
In Section \ref{Sec:stratifiedpoissonsymplectic} we give an overview of stratified spaces with symplectic or Poisson structures. Stratified superintegrable systems are introduced in Section \ref{Sec:Superintegrability}. Stratified superintegrable structure for an sCMS system is described in Section \ref{sCMS}. Its detailed analysis for $G=\SU(3)$ is given in Section \ref{Sec:sCMSRPBSU3}. In Appendix \ref{Apd:stratifiedspace} we recall basic facts about stratified spaces. Appendix \ref{Sec:JTstarGsurjectiveSUn} contains the proof of surjectivity of the moment map $T^*G\to \mathfrak{g}^*$ for $G=\SU(n)$ and the adjoint action of $G$ on $T^*G$. In Appendix \ref{appendixSec:Horn} we recall Weyl inequalities and Horn's theorem. 



\subsection{Acknowledgments} We are grateful to A. Liashyk, Z. Liu, G. Ma,  I. Sechin, and P. Xu for discussions. ~ H. Xiao is   supported by the Natural Science Foundation of Heilongjiang Province, China (Grant
No.YQ2023A007). The work
of  ~N.~Reshetikhin  was supported by the Changjiang fund, by the Collaboration Grant ``Categorical Symmetries'' from
the Simons Foundation, and by the project 075-15-2024-631 funded
by the Ministry of Science and Higher Education of the Russian Federation. Research of Z. Chen was partially supported by the Natural Science Foundation of China (Grant
No.12071241).
  

 \section{Stratified spaces with Poisson and symplectic  structures}\label{Sec:stratifiedpoissonsymplectic}  
For foundational aspects of stratified spaces, see the references \cite{Pflaumbook} and \cite{Pflaumpaper}. In this paper, we   follow the conventions and notions of \cite{SLannals}. Recall that a stratified space is defined as a pair $(\mathcal{M}, A)$, where $\mathcal{M}$ is the disjoint union of its smooth strata $\mathcal{M}_{\alpha}$ indexed by elements $\alpha$ of the index set $A$ (which is finite and partially ordered). A review of essential concepts pertaining to stratified spaces is provided in Appendix \ref{Apd:stratifiedspace}.

	\subsection{Orbit space of a $G$-manifold}

	Let $G$ be a Lie group, and $H_1$, $H_2\subset G$ be subgroups. We write $H_1 \sim H_2$ if and only if   $H_1$ and $H_2$ are conjugate via elements in $G$. For a subgroup $H$ of $G$, we denote by $(H)$ the conjugacy class of $H$. For simplicity,  we write $H$ instead of $(H)$ in many situations. Conjugacy classes of subgroups are partially ordered. In this partial order,   $(H) \partialless (H')$ if and only if $H'$ is conjugate to a proper subgroup of $H$.

	Let $M$ be a smooth manifold with the \emph{left} action of a Lie group $G$. Throughout this paper, all Lie group actions on manifolds are assumed to be \textit{proper}\footnote{Recall the definition of proper Lie group actions. Let $\sigma$: $G\times M\to M $
		be the action of $G$ on $M$. It   
		is said to be proper if the $\sigma$-preimage of any compact set in $M$ is compact in $G\times M$.}. We denote the action of an element $g \in G$ by $\sigma_g : M \to M$. Let $\galgebra := T_eG$  be the Lie algebra of $G$ whose Lie bracket is determined by associating each element $X \in \galgebra$ with the corresponding \textit{right-invariant} vector field  on $G$ (denoted by $\overrightarrow{X}$).

	The orbit space $M/G$ admits a stratified space structure, as elaborated in \cite{Pflaumpaper}. Recall some basic facts about this space. For any $m \in M$, the  \textbf{stabilizer}  (or the isotropy group) of $m$    is defined as $G_m = \{g \in G \mid \sigma_g m = m\}$. Given a closed subgroup $H \subset G$, we define the subspaces $M_H$ and $M_{(H)}$ of $M$ as follows:
	$$
	M_{H} := \{m \in M \mid G_m = H\}
	$$
	and
	$$
	M_{(H)} := \{m \in M \mid G_m \sim H\},
	$$
	where $G_m \sim H$, as above, indicates that $G_m$ is conjugate to $H$. The closed subgroups $H \subset G$ for which $M_{(H)}$ is non-empty are called  \textit{orbit types}. Note that conjugate subgroups represent the same orbit type. Connected components of both $M_H$ and $M_{(H)}$ are closed submanifolds of $M$.
	
The coset space $M/G$ is a stratified space with each stratum of $M/G$ consisting of a \textit{connected component} of $M_{(H)}/G$ for some orbit type $H$. For  simplicity, we still say that  $M_{(H)}/G$ is a stratum of orbit type $H$ (cf. \cite[Remark 1.3]{SLannals}) and denote it by $(M/G)_{(H)} $. In other words, for any point $[m]$ in the stratum $(M/G)_{(H)}$, its stabilizer $G_m$ is conjugate to $H$. For each stratum $(M/G)_{(H)}$, its dimension is given by $\dim M_{(H)}-\dim G+\dim H$, though this value may vary across different connected components of $M_{(H)}$.

	To establish the smooth structure of the stratified space $M/G$, it is essential to define the sheaf $\Cinf$ of smooth functions on $M/G$. This can be accomplished through a two-step process. First, consider an arbitrary open set $U$ in $M/G$. Then, for this open set $U$, define the space of sections $\Cinf(U)$ as the collection of $G$-invariant smooth functions on the preimage $\pi^{-1}(U)$ where $\pi$ is the canonical quotient map $\pi: M \to M/G$, i.e., 	
	$$\Cinf(U)=\Cinf(\pi^{-1}U)^G:=\SET{f\in \Cinf (\pi^{-1}U)~|~ \sigma_g^* f=f,\forall g\in G}.$$
	For more details about this structure sheaf on the orbit space $M/G$, see \cite{Pflaumbook}.
	Moreover, this definition extends to the orbit space of {a stratified $G$-space}. Here    a \textbf{stratified $G$-space} is defined as a stratified space $\mathcal{M}=\bigsqcup_{\alpha\in{A}} \mathcal{M}_{\alpha}$ equipped with a continuous and proper $G$-action. In addition, the $G$-action should preserve strata, i.e.,  for any element $g\in G$, the corresponding $g$-action $\sigma_g:~\mathcal{M}\to \mathcal{M}$ maps each stratum $\mathcal{M}_{\alpha}$ to itself.
	
	Under these conditions, the resulting orbit space $\mathcal{M}/G$ inherits a stratified structure.
	 Specifically,  the stratification of $\mathcal{M}/G$ is characterized by two parameters: the strata of the original space $\mathcal{M}$ and their associated orbit types. In other words, each stratum of $\mathcal{M}/G$ is uniquely determined by a pair $(\alpha,(H))$:$$(\mathcal{M}/G)_{\alpha,(H)}:=(\mathcal{M}_{\alpha})_{(H)}/G   ~~(\subset \mathcal{M}_{\alpha}/G), $$ where $\alpha\in A$ indexes a stratum of $\mathcal{M}$, and $H$ is an orbit type within $\mathcal{M}_{\alpha}$ when it is considered as a smooth manifold with $G$-action.  
	
	The smooth functions on $\mathcal{M}/G$ can be naturally identified with $G$-invariant smooth functions on $\mathcal{M}$, denoted by $\CinfcalM^G$.

	\subsection{Stratified   Poisson and symplectic spaces}  
	
	Stratified symplectic spaces have been  studied in symplectic and Poisson geometry since the 1990s \cite{BatesLerman1997, ChenRuan2001, Huebschmann2004, SLannals, Pflaumbook}. These structures naturally arise in geometry and mathematical physics applications;  see, for example,  \cite{Huebschmann2004, HuebschmannRudolphSchmidt2009, Karshon1992}.

	\begin{definition}\label{def:poissonsymlecticstratifiedspace}
\cite{Pflaumpaper,anotherpaper}
		A \textbf{stratified Poisson space} $(\mathcal{M},\Pi)$ consists of a stratified space $ \mathcal{M} $ equipped with a Poisson bivector $\Pi$, which is defined as a smooth section $\Pi:~\mathcal{M}\to T\mathcal{M}\wedge T\mathcal{M}$, such that for each stratum $\mathcal{M}_{\alpha}$   of $\mathcal{M}$, the restriction $\Pi|_{\mathcal{M}_{\alpha}}$ yields a smooth Poisson bivector on $\mathcal{M}_{\alpha}$. This Poisson bivector $\Pi$ naturally induces a Poisson bracket $\PoissonBracket{-,-}^{\Pi}$ on the space of smooth functions $\Cinf(\mathcal{M})$, which we denote simply as $\PoissonBracket{-,-}$ when the Poisson structure is understood. 
		
	\end{definition}
	To illustrate these concepts, let us examine some   examples. 
	
	\begin{example}\label{Example:demo}
		Consider a plane $\mathcal{M} =\RR^2$ equipped with the bivector $$\Pi= x \frac{\partial }{\partial x}\wedge \frac{\partial }{\partial y}.$$ When we partition $\mathcal{M} $ into three distinct regions (the left half-plane $(\mathcal{M})^2_l $, the right half-plane $(\mathcal{M})^2_r $, and the $y$-axis $(\mathcal{M})^1_c $), the resulting object is a stratified Poisson space $(\mathcal{M} ,\Pi)$, though   it is not a symplectic space.

	\end{example}
	\begin{example}
		\label{Example:demo2} 
		Consider the three-dimensional Euclidean space $\mathcal{N}=\RR^3$ equipped with the bivector field $$\Sigma= (x \frac{\partial }{\partial x}-z \frac{\partial }{\partial z})\wedge \frac{\partial }{\partial y}.$$ This space admits a natural stratification $\mathcal{N}=\mathcal{M}\times \RR$, where $\mathcal{M}$ is from the previous example and $\RR$ is the $z$-coordinate axis. The space $(\mathcal{N},\Sigma)$ is another example of a stratified Poisson space.
	
	\end{example}

The following notions are generalized from \cite{Pflaumpaper,anotherpaper}.

\begin{definition}\label{def:presymlecticstratifiedspace}
		A \textbf{stratified presymplectic space} $(\mathcal{M},\omega)$ consists of a stratified space $ \mathcal{M} $ equipped with a presymplectic $2$-form $\omega$,  namely a smooth section $\omega:~\mathcal{M}\to T^*\mathcal{M}\wedge T^*\mathcal{M}$, such that for each stratum $\mathcal{M}_{\alpha}$   of $\mathcal{M}$, the restriction $\omega|_{\mathcal{M}_{\alpha}}$ is a smooth presymplectic structure on $\mathcal{M}_{\alpha}$.   When the restriction $\omega|_{\mathcal{M}_{\alpha}}$ is   non-degenerate on every   stratum $\mathcal{M}_{\alpha}$,   we   refer to $(\mathcal{M},\omega)$ as a   \textbf{stratified symplectic} space.
		
	\end{definition}

\begin{example}Consider the stratified space $\mathcal{M} =(\mathbb{R}^{\geqslant 0})^3$ with coordinate functions $x$, $y$, and $z$.  The form $\omega=z dx\wedge dy+ x dy\wedge dz + y dz\wedge dx$ defines a presymplectic structure on $\mathcal{M} $. 
\end{example}	

See Example \ref{Ex:R4} for an example  of a    stratified presymplectic space, and Example \ref{Ex:R6} for a   stratified symplectic space.

	Stratified Poisson geometry is a natural extension of the usual Poisson geometry. In particular, we have stratified analogs of a Poisson map and of symplectic leaves. 
	
	\begin{definition}

		Consider two stratified Poisson spaces $(\mathcal{M},\Pi)$ and $(\mathcal{N},\Sigma)$. A smooth map $\phi:~ \mathcal{M} \to  \mathcal{N}$ is  a  \textbf{Poisson map}  if it satisfies the condition $$\PoissonBracket{\phi^*f_1,\phi^*f_2}^{\Pi}=\phi^*\PoissonBracket{f_1,f_2}^{\Sigma}$$ for all $f_1,f_2\in \Cinf(\mathcal{N})$. Furthermore, when $\phi$ is a  morphism   of stratified spaces, it is called a  morphism  of stratified Poisson spaces.

	\end{definition}

	\begin{definition} 
		A \textbf{single symplectic leaf} of a stratified Poisson space $(\mathcal{M},\Pi)$ is defined as a path-connected, immersed submanifold $F $ that lies within some stratum $\mathcal{M}_{\alpha}$ of $\mathcal{M}$, where $F $ constitutes a classical symplectic leaf of the Poisson manifold $(\mathcal{M}_{\alpha},\Pi|_{\mathcal{M}_{\alpha}} )$.
		
If a non-intersecting union of single symplectic leaves ${F_i}$ for ${i\in I}$ constitutes a path-connected stratified subspace in $\mathcal{M}$, and the index set $I$  cannot be enlarged, then we say that $\sqcup_{i\in I} F_i$ is a \textbf{full symplectic leaf} of the stratified Poisson space $(\mathcal{M},\Pi)$.

	\end{definition}
	
	\begin{example}\label{Example:K}    Let $\galgebra$ be a finite-dimensional Lie algebra, which is  the Lie algebra of a compact and connected  Lie group $G$. 
Denote the coadjoint orbit through $x\in \galgebra^*$ by $\mathcal{O}_x\subset \galgebra^*$. Define the space $\middleK \subseteq \gstar \times \gstar$ as the set of pairs $(x, y)$ such that $y$ belongs to the coadjoint orbit $\mathcal{O}_{-x}$. One verifies directly that  $\middleK$ is a stratified Poisson space and its single  symplectic leaves  and full symplectic leaves are both of the form $\mathcal{O}_{x} \times \mathcal{O}_{-x}$ for $x \in \gstar$. 		
		 \end{example}

\begin{example}Let us examine the stratified Poisson space $(\mathcal{M} , \Pi)$ in Example \ref{Example:demo}. The left stratum $(\mathcal{M})^2_l $ is itself a single symplectic leaf, and so is $(\mathcal{M})^2_r $. Any point in the $y$-axis is a single symplectic leaf in $ (\mathcal{M})^1_c  $. The only full symplectic leaf is the whole space $\mathcal{M} $. 
\end{example}
 	 
	\subsection{Stratified symplectic spaces derived from symplectic reductions}\label{Sec:SLmethod}
	Here we recall the description of the natural stratification on the reduced space of a Hamiltonian $G$-space. We follow  \cite{SLannals}; more details can be found in \cite{BatesLerman1997,Pflaumpaper}.

Let $(M,\omega)$ be a symplectic manifold equipped with a proper action of a Lie group $G$. Assume that the symplectic form $\omega$ is $G$-invariant.

	Let $\galgebra$ be the Lie algebra of $G$. A moment map $J:~M\to \gstar$ is characterized by two key properties: it is $G$-equivariant, and for any $X\in \galgebra$, the vector field $X_{M} \in \mathfrak{X}(M)$ generated by the $G$-action on $M$ is identical to the Hamiltonian vector field associated with $J^X\in \Cinf(M)$ which is defined by $J^X(m)=\langle J(m),X \rangle$ for all $m \in M$. The triple $(M,\omega,J)$ comprising the manifold, symplectic form, and moment map is called a \textbf{Hamiltonian $G$-space}.

	Let $\Oxi$ be the coadjoint orbit through $\xi\in \gstar$. When $\Oxi$ is locally closed in $\gstar$ (which is guaranteed for cases where $G$ is either reductive or a semi-direct product of a reductive group with a vector space; see Bates and Lerman \cite{BatesLerman1997}), and $\xi$ lies in the image of the moment map $J$, we can construct two important spaces from the quintuple $(G,M,\omega,J,\xi)$. The first is the subset $\Zxi$, defined as the preimage of the coadjoint orbit under $J$:
	\begin{equation}
		\label{Eqt:Zxi}
		\Zxi:=J^{-1}(\Oxi)\subset M
	\end{equation}
	The second is the  \textbf{reduced space}  $\Reducedxi$, obtained by taking the quotient of $\Zxi$ by the group action of $G$:
	\begin{equation}
		\label{Eqt:Sxi}
		\Reducedxi:=\Zxi/G
	\end{equation}
	
	The following   theorem was established in \cite{SLannals};  see also   \cite{BatesLerman1997,Pflaumbook}.
	\begin{theorem}[Sjamaar-Lerman]\label{Thm:SLReduced}
		Let $(M,G,\omega,J,\xi)$ be   as above. The   set $\Zxi$ and its corresponding reduced space $\Reducedxi$ have canonical stratifications determined by orbit types:
		
		\begin{itemize}
			\item[(1)] For any given orbit type $H$, the intersection $\Zxi\cap M_{(H)}$ is a submanifold of $M$. Furthermore, the corresponding orbit space, denoted by $(\Reducedxi)_{(H)}:=(\Zxi\cap M_{(H)})/G$, forms a manifold.
			
			\item[(2)] The decomposition of $\Zxi$ into the manifolds $\Zxi\cap M_{(H)}$, along with the partition of $\Reducedxi$ into the manifolds $(\Reducedxi)_{(H)}$, respectively establish stratifications of $\Zxi$ and $\Reducedxi$.
		\end{itemize}

		For the smooth structures on $\Zxi$ and $\Reducedxi$,  the following holds:
		
		\begin{itemize}
			\item[(3)] The smooth structure on $\Zxi$ is inherited from the  ambient manifold $M$ via its sheaf of smooth functions. Specifically, for any small open subset $U\subset \Zxi$, we have
			$$\Cinf_{\Zxi}(U)= \Cinf (U') / Z(U',U),$$
			where $U'\subset M$ denotes an open neighborhood of $U$, and $Z(U',U)$ represents the ideal of smooth functions on $U'$ that vanish on $U$.
			
			\item[(4)] For the quotient space $\Reducedxi$, with canonical quotient map $\pi:\Zxi\to \Reducedxi$, the sheaf of smooth functions is characterized by $G$-invariant functions on $\Zxi$. Precisely, for any open subset $V\subset \Reducedxi$,
			$$\Cinf_{\Reducedxi}(V)=\Cinf_{\Zxi}(\pi^{-1}V)^G:=\SET{f\in \Cinf_{\Zxi}(\pi^{-1}V)~|~ \sigma_g^* f=f,\forall g\in G}.$$
		\end{itemize}

		Furthermore, the reduced space $\Reducedxi$ is a stratified symplectic space:
		
		\begin{itemize}
			\item[(5)] Each stratum $(\Reducedxi)_{(H)}$ is endowed with a canonical symplectic structure $\omega_{(H)}$, the pullback of which to $\Zxi\cap M_{(H)}$ is identical to the restriction of $\omega$ to $\Zxi\cap M_{(H)}$.
			
			\item[(6)] For any pair of functions $f,f'\in \Cinf(\Reducedxi)$ and their corresponding $G$-invariant functions $\bar{f},\bar{f'}\in \CinfM$ satisfying $\pi^*f=\bar{f}|_{\Zxi}$ and $\pi^*f'=\bar{f'}|_{\Zxi}$, the following relation holds:
			$$
			\pi^*\PoissonBracket{f,f'}=\PoissonBracket{\bar{f},\bar{f'}}|_{\Zxi}
			.$$
			
			\item[(7)] Under the canonical embedding $\Reducedxi\hookrightarrow M/G$, each connected component of the stratum $(\Reducedxi)_{(H)}$ maps to a symplectic leaf within the Poisson manifold $M_{(H)}/G$. The Poisson structure of $M_{(H)}/G$ derives naturally from the symplectic structure of $M$.
		\end{itemize}
		
	\end{theorem}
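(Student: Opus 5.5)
The plan is to follow the local-normal-form strategy of Sjamaar and Lerman. The basic tool is the Marle--Guillemin--Sternberg (MGS) normal form near a $G$-orbit in $\Zxi$.

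First, fix $m\in \Zxi$ with stabilizer $H=G_m$. Since the action is proper, $H$ is compact. I will reduce to the case $\xi=0$ by the shifting trick: replace $M$ by $M\times \mathcal{O}_{\xi}^-$ with moment map $J-\iota$. Then $J^{-1}(\Oxi)/G\cong (J-\iota)^{-1}(0)/G$, and the orbit-type strata correspond, because stabilizers in the product are $G_m\cap G_\eta$. Local closedness of $\Oxi$ is what makes this product reduction well behaved.

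Next, I will invoke the MGS theorem. A $G$-invariant neighbourhood of $G\cdot m$ is equivariantly symplectomorphic to a model $Y=G\times_H(\mathfrak{m}^*\times V)$. Here $\mathfrak{g}=\mathfrak{h}\oplus\mathfrak{m}$ is an $H$-invariant splitting, and $V$ is the symplectic slice, a symplectic $H$-representation. In this model the moment map is $[g,\mu,v]\mapsto \mathrm{Ad}^*_g(\mu+\Phi_V(v))$, where $\Phi_V$ is the quadratic moment map of $V$. Consequently $J^{-1}(0)/G\cong \Phi_V^{-1}(0)/H$ locally near $[m]$. Points of orbit type $(K)$ in $Y$ correspond to points $v\in V$ with $H_v$ conjugate to $K$ in $G$. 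This makes statement (1) a linear problem. Writing $V_K$ for the fixed vectors with stabilizer exactly $K$, I will show that $\Phi_V^{-1}(0)\cap V_{(K)}=H\times_{N_H(K)}(\Phi_V^{-1}(0)\cap V_K)$. Moreover, $\Phi_V$ restricted to $V^K$ takes values in the annihilator-type subspace $(\mathfrak{h}^*)^K$ and is a moment map for the free action of $N_H(K)/K$ on $V_K$. For a free action, the zero level is a submanifold and the quotient is a manifold; this is ordinary Marsden--Weinstein reduction. That gives (1) and also (5): the reduced form on $(\Reducedxi)_{(H)}$ pulls back to $\omega|$, because $V^K$ is a symplectic subspace.

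For (2) I must check local finiteness and the frontier condition; I will deduce them from the corresponding properties of the orbit-type stratification of $M/G$. For the local conical structure, the links will come from the $\mathbb{R}_{>0}$-homogeneity of $\Phi_V$, since $\Phi_V$ is quadratic. Items (3) and (4) are definitions of the sheaves, and I only need to check they are compatible with restriction to strata. This reduces to the fact that $G$-invariant smooth functions restrict to smooth functions on the manifolds $(\Zxi\cap M_{(H)})/G$.

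For (6), I will show the bracket is well defined. If $\bar f$ vanishes on $\Zxi$, then $\{\bar f,\bar f'\}$ vanishes on $\Zxi$ for $G$-invariant $\bar f'$. The reason is that $X_{\bar f'}$ is tangent to $\Zxi$: by invariance $J$ is conserved along it, and it preserves the orbit-type pieces. I will then combine this with the identification (5) on each stratum.

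For (7), observe that $M_{(H)}/G$ carries the quotient Poisson structure. Its symplectic leaves are the reduced spaces at the coadjoint orbits within the orbit-type manifold, by Poisson reduction for a proper action of constant orbit type. The connected components of $(\Reducedxi)_{(H)}$ are exactly these leaves.

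The main obstacle will be the linear analysis in the slice, namely proving that $\Phi_V^{-1}(0)\cap V_{(K)}$ is a manifold of the claimed form, and verifying the frontier and conical conditions uniformly.
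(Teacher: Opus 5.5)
The paper does not prove this theorem itself; it quotes Sjamaar--Lerman (reduction at $0$) and Bates--Lerman (locally closed orbits). Your route (shifting trick, Marle--Guillemin--Sternberg model, Marsden--Weinstein reduction of $V_K$ by the free action of $N_H(K)/K$) is theirs.

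The step where you conclude (5) fails for $\xi\neq 0$. The shifting trick says that the reduced form pulls back to the restriction of $\omega\oplus(-\omega_{\Oxi})$ to the zero level of $J-\iota$ in $M\times\Oxi^-$. Here $\omega_{\Oxi}$ is the orbit form normalized so that $\iota\colon\Oxi\hookrightarrow\gstar$ is a moment map. Transported back along $m\mapsto(m,J(m))$, this reads
\[
\pi^*\omega_{(H)}=\omega|_{\Zxi\cap M_{(H)}}-\bigl(J|_{\Zxi\cap M_{(H)}}\bigr)^*\omega_{\Oxi},
\]
and the last term is not zero in general. In fact (5) as printed is false whenever $\dim\Oxi>0$. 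Take $M=\Oxi$ with $J=\iota$ (say $G=\mathrm{SU}(2)$, $\xi\neq0$). Then $\Zxi=M$ is a single orbit-type stratum and $\Reducedxi$ is a point, so $\omega_{(H)}=0$ while $\omega|_{\Zxi}\neq0$.

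So your argument proves (5) only at $\xi=0$. For general $\xi$ you should prove the corrected formula above. Equivalently, you can prove the literal identity after pulling back only to $J^{-1}(\xi)\cap M_{(H)}$, on which $J$ is constant; this is point reduction, with $J^{-1}(\xi)/G_\xi\cong\Reducedxi$. Items (6) and (7) are unaffected. Since $dJ(X_{\bar f})=0$ for $G$-invariant $\bar f$, contraction with $X_{\bar f}$ kills the correction term.

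Second, in (2) the frontier condition cannot be deduced from that of $M/G$, because closure relations between orbit-type pieces are not inherited by the closed subset $\Zxi$. For example, let $\mathrm{S}^1$ act on $\CC^2$ with weights $(1,1)$ and its quadratic moment map. Then $\mathcal{Z}_0=\{0\}$ lies in the closure of the free stratum of $M$, but it meets no free orbit.

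The frontier condition has to come from the local model instead:
\begin{itemize}
\item Write $V=V^H\oplus W$ with $W$ the symplectic complement of $V^H$. Then $\Phi_V(v_0+w)=\Phi_W(w)$.
\item Hence near $[m]$ the reduced space is $V^H\times\bigl(\Phi_W^{-1}(0)/H\bigr)$.
\item By homogeneity the second factor is a cone, with $0$ as its only point of type $H$.
\item This product-with-a-cone structure makes the set of points of a connected stratum lying in the closure of another stratum both open and closed, which gives the frontier condition.
\end{itemize}

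Finally, the paper's Appendix definition of a stratified space also requires Whitney (B) regularity. Local conical structure does not imply it, so that part needs a separate argument or reference.
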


\subsection{Orbit space of a stratified Poisson $G$-space}
In the above Theorem \ref{Thm:SLReduced},  $M$ is a symplectic manifold.   Fernandes, Ortega, and   Ratiu generalized this theorem to the realm of  Poisson manifolds \cite{FernandesOrtegaRatu}.  
In this part, we extend it to the case when $M$ is \emph{a stratified Poisson space}. In other words, we will start with a \textit{stratified Poisson $G$-space} and do the reduction following the same procedure as in Theorem \ref{Thm:SLReduced}.

\begin{definition}\label{Defn:stratifiedPoissonGspace}A \textbf{stratified Poisson $G$-space} is   a   stratified $G$-space $\mathcal{M}$ together with a
		stratified $G$-invariant Poisson structure $\Pi$ on $ \mathcal{M}  $.  \end{definition}

 Hence for every stratum $\mathcal{M}_{\alpha}$ of $\mathcal{M}$, the Poisson manifold $(\mathcal{M}_{\alpha},\Pi|_{\mathcal{M}_{\alpha}})$ together with the $G$-action on it forms a Poisson $G$-space.

\begin{definition}\label{Defn:momentmapofstratifiedPoissonGspace}Let $\galgebra$ be the Lie algebra of $G$. A smooth map $J: \mathcal{M} \to \gstar$ is called a  \textbf{moment map}  if it satisfies two conditions: First, $J$ is $G$-equivariant. Second, for any $X \in \galgebra$, consider the function $J^X \in \Cinf(\mathcal{M})$ defined by $J^X(m) = \langle J(m), X \rangle$ for all $m \in \mathcal{M}$; then the vector field $X_\mathcal{M} \in \mathfrak{X}(\mathcal{M})$ induced by the $G$-action on $\mathcal{M}$ must coincide with the Hamiltonian vector field of $J^X$. Equivalently, when restricted to any stratum $\mathcal{M}_{\alpha}$ of $\mathcal{M}$, the map $J|_{\mathcal{M}_{\alpha}}: \mathcal{M}_{\alpha} \to \gstar$ serves as a moment map for the Poisson $G$-space $(\mathcal{M}_{\alpha}, \Pi|_{\mathcal{M}_{\alpha}})$.
\end{definition}
	
Let $(\mathcal{M}, \Pi)$ be a stratified Poisson $G$-space as   in Definition \ref{Defn:stratifiedPoissonGspace}. The $G$-invariance of the Poisson bivector $\Pi$ implies that the  Poisson bracket $\PoissonBracket{-,-}$ is $G$-equivariant:
	$$\PoissonBracket{\sigma_g^*f_1,\sigma_g^*f_2}=\sigma_g^*\PoissonBracket{f_1,f_2}$$
	for all $f_1,f_2\in \Cinf(\mathcal{M})$ and $g\in G$.
	As a consequence, the space of $G$-invariant functions $\Cinf(\mathcal{M})^G$ forms a Poisson subalgebra of $\CinfcalM$. Given that $\Cinf(\mathcal{M}/G)=\Cinf(\mathcal{M})^G$,   the space of smooth functions on $\mathcal{M}/G$ is endowed with a Poisson algebra structure. However, this   does not guarantee that $\mathcal{M}/G$ inherits the structure of a stratified Poisson space in the sense of Definition \ref{def:poissonsymlecticstratifiedspace}. Specifically, additional conditions are required to ensure the existence of a bivector field $\Pi$ on $\mathcal{M}/G$. The following lemma  addresses this issue. For completeness, we provide a sketch of its proof.

	\begin{lemma}\label{lemma:tangentproperty} Let $(\mathcal{M} ,\Pi)$ be a stratified Poisson $G$-space. Suppose that there exists a  moment map $J:~\mathcal{M}\to \gstar$. Then for any $f\in \CinfcalM^G$  the following holds:
		\begin{itemize}
			\item[(1)]The Hamiltonian vector field $\Hamiltonian{f}$  is tangent to   $J^{-1}(\eta)$   for all $ \eta\in \Im J$;
			\item[(2)]If $G$ is a connected Lie group, then   $\Hamiltonian{f}$  is tangent to subspaces   $(\mathcal{M}_{\alpha})_H$, for all strata $\mathcal{M}_{\alpha}$ of $\mathcal{M}$ and orbit types $H$.
		\end{itemize}
	\end{lemma}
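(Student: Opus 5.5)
Both parts reduce to facts about Poisson manifolds, applied stratum by stratum. By Definition \ref{Defn:stratifiedPoissonGspace} and Definition \ref{Defn:momentmapofstratifiedPoissonGspace}, each stratum $\mathcal{M}_{\alpha}$ with $\Pi|_{\mathcal{M}_{\alpha}}$ is a Poisson $G$-space. The restriction $J|_{\mathcal{M}_{\alpha}}$ is a moment map for it, and the restriction of a $G$-invariant $f$ is $G$-invariant. The Hamiltonian vector field $\Hamiltonian{f}=\Pi^\sharp(df)$ at a point of $\mathcal{M}_{\alpha}$ is computed inside that stratum, since $\Pi$ restricts to a bivector on $\mathcal{M}_{\alpha}$. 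It therefore suffices to prove (1) and (2) for a Poisson manifold $M$ with a $G$-invariant Poisson structure, a moment map $J$, and $f\in \CinfM^G$. For (1) we then note that $J^{-1}(\eta)\cap\mathcal{M}_{\alpha}$ is preserved by the flow within the stratum.

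For (1), I will show that every component $J^X$ is constant along the flow of $\Hamiltonian{f}$. We compute
\[
\Hamiltonian{f}(J^X)=\{J^X,f\}=-\{f,J^X\}=-X_M(f)=0,
\]
using that the Hamiltonian vector field of $J^X$ is the infinitesimal generator $X_M$ (up to the sign convention, which does not matter here) and that $f$ is $G$-invariant, so $X_M f=0$. Hence $dJ(\Hamiltonian{f})=0$, and $J$ is constant along integral curves of $\Hamiltonian{f}$. Each integral curve starting in $J^{-1}(\eta)$ therefore stays there. This is the meaning of tangency, since $J^{-1}(\eta)$ need not be a submanifold.

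For (2), the strategy is to show that the local flow $\phi_t$ of $\Hamiltonian{f}$ on $\mathcal{M}_{\alpha}$ commutes with the $G$-action. Since $\sigma_g$ is a Poisson diffeomorphism and $\sigma_g^*f=f$, we get $(\sigma_g)_*\Hamiltonian{f}=\Hamiltonian{(\sigma_g^{-1})^*f}=\Hamiltonian{f}$, so $\sigma_g\circ\phi_t=\phi_t\circ\sigma_g$ wherever both sides are defined. It follows that $G_{\phi_t(m)}\supseteq G_m$, and applying the same argument to $\phi_{-t}$ gives the reverse inclusion, so $G_{\phi_t(m)}=G_m$. Thus each integral curve starting in $(\mathcal{M}_{\alpha})_H$ stays in $(\mathcal{M}_{\alpha})_H$. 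Because connected components of $(\mathcal{M}_{\alpha})_H$ are submanifolds, $\Hamiltonian{f}$ is tangent to them.

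The main obstacle is the domain issue: the flow is only local, so $\sigma_g\circ\phi_t=\phi_t\circ\sigma_g$ must be justified for small $t$ uniformly enough. I would argue via uniqueness of integral curves. The curve $\sigma_g(\phi_t(m))$ is an integral curve of $(\sigma_g)_*\Hamiltonian{f}=\Hamiltonian{f}$ through $\sigma_g m$, so it coincides with $\phi_t(\sigma_g m)$ on the common interval. For $g\in G_m$ this already gives $g\in G_{\phi_t(m)}$, so no uniformity in $g$ is needed. I would note that connectedness of $G$ is not really needed in this argument. It could alternatively be used to reduce to the infinitesimal statement $[X_M,\Hamiltonian{f}]=0$, which follows from $G$-invariance of $\Pi$ and $f$ together with the bracket computation above.
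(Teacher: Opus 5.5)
Your proof is correct. Part (1) is the same computation the paper does.

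For part (2) you take a different route. The paper argues infinitesimally. It uses the moment map to write $X_{\mathcal{M}}=\Hamiltonian{J^X}$ and computes $[\Hamiltonian{f},X_{\mathcal{M}}]=\Hamiltonian{\PoissonBracket{f,J^X}}=0$. It then invokes connectedness of $G$ to integrate this to commutation of the flow with the whole group action.

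You instead use the global $G$-invariance of $\Pi$ that is built into the definition of a stratified Poisson $G$-space. Each $\sigma_g$ preserves every stratum and is a Poisson diffeomorphism of it, so $(\sigma_g)_*\Hamiltonian{f}=\Hamiltonian{f}$ for every $g\in G$. Uniqueness of integral curves then gives $\sigma_g\circ\varphi^t=\varphi^t\circ\sigma_g$ directly. This also handles the local-flow domain issue cleanly, which the paper passes over: for $g\in G_m$ no uniformity in $g$ is needed.

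Your route uses neither the moment map nor connectedness in (2). So, as you observe, the connectedness hypothesis is superfluous under the paper's standing assumptions, and the argument covers disconnected groups.

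The paper's route uses only infinitesimal data, namely that the generators $X_{\mathcal{M}}$ are Hamiltonian, and this already forces invariance of $\Pi$ under the identity component. It therefore does not need the global invariance hypothesis, and connectedness is exactly the ingredient that lets it pass from the Lie algebra to the group.
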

	
	\begin{proof} It suffices to consider the flow $\varphi^t$ of $\Hamiltonian{f}$ on  an arbitrary stratum $\mathcal{M}_{\alpha}$ of $\mathcal{M}$. For any $m\in \mathcal{M}_{\alpha}$, small enough $t$, and $X\in \galgebra$, we have
		\begin{eqnarray*}
			\frac{d}{dt}\langle J(\varphi^t(m)),X\rangle &=&  \frac{d}{dt}  J^X  (\varphi^t(m))
			= \Hamiltonian{f}(J^X)|_{\varphi^t(m)}=\PoissonBracket{f,J^X}|_{\varphi^t(m)}\\
			&=& - X_\mathcal{M}(f)|_{\varphi^t(m)}=0.
		\end{eqnarray*}
		The last step is due to $f$ being $G$-invariant.
		Hence $J(\varphi^t(m))\in \gstar$ is a constant, for all $t$ small enough. This proves statement (1). For (2),  consider an arbitrary $ X\in \galgebra$. On $\mathcal{M}_{\alpha}$, one computes the following
		\begin{eqnarray*}
			[\Hamiltonian{f}, X_\mathcal{M} ]=[\Hamiltonian{f}, \Hamiltonian{J^X} ]&=& \Hamiltonian{\PoissonBracket{f,J^X}}\\
			&=&-\Hamiltonian{\Hamiltonian{J^X}(f)}=-\Hamiltonian{X_\mathcal{M}(f)}=0.
		\end{eqnarray*}  
		The last step is again due to $f$ being $G$-invariant. As $G$ is connected, the flow $\varphi^t$ of $\Hamiltonian{f}$ commutes with the $G$-action on $\mathcal{M}_{\alpha}$, i.e., for all $m\in \mathcal{M}_{\alpha}$ and $g\in G$, we have 
		$$
		\varphi^t(\sigma_g m)=\sigma_g \varphi^t(m).
		$$
		It follows that $G_m$ coincides with $G_{\varphi^t(m)}$, and hence  $\Hamiltonian{f}|_m\in T_m (\mathcal{M}_{\alpha})_H$ for $H=G_m$.
		
	\end{proof}

	\begin{theorem}\label{Thm:Poissonreducedspace} 
		Let $(\mathcal{M},\Pi)$ be a stratified Poisson $G$-space with the moment map $J$ satisfying the conditions in Lemma \ref{lemma:tangentproperty}, where $G$ is a connected Lie group. Then the following holds:		
		\begin{enumerate}
			\item[(1)] The orbit space $\mathcal{M}/G$ inherits a stratified Poisson structure, and the natural quotient projection $\mathcal{M}\to \mathcal{M}/G$ preserves the Poisson structure.
			
			\item[(2)] For any $\xi\in \Im J$, the level set $\Zxi:=J^{-1}(\Oxi)$ admits a stratification whose strata are given by the intersections $\Zxi\cap (\mathcal{M}_{\alpha})_{(H)}$, where $\mathcal{M}_{\alpha}$ denotes a stratum of $\mathcal{M}$ and $(H)$ represents an orbit type. Consequently, the reduced space $\Reducedxi:=J^{-1}(\Oxi)/G$ inherits a stratification with strata of the form
		$$(\Reducedxi)_{\alpha,(H)}:=(\Zxi\cap (\mathcal{M}_{\alpha})_{(H)})/G.$$
			The smooth function spaces on $\Zxi$ and $\Reducedxi$ are characterized analogously to parts (3) and (4) of Theorem \ref{Thm:SLReduced}.
			
			\item[(3)] The reduced space $\Reducedxi$ carries a stratified Poisson structure compatible with the canonical embedding $\epsilon:\Reducedxi\hookrightarrow \mathcal{M}/G$, making $\epsilon$ a morphism of stratified Poisson spaces.
			
			\item[(4)] The embedding $\epsilon$ maps each single symplectic leaf within a stratum $(\Reducedxi)_{\alpha,(H)}$ of $\Reducedxi$ onto a single symplectic leaf in the corresponding stratum $(\mathcal{M}/G)_{\alpha,(H)}$ of $\mathcal{M}/G$.
		\end{enumerate}
		
	\end{theorem}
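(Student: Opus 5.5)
The plan is to reduce everything to the stratum-by-stratum Poisson theory, using Lemma \ref{lemma:tangentproperty} to show that invariant Hamiltonian vector fields descend and respect the refined strata.

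\textbf{Part (1).} I would work on one stratum $\mathcal{M}_\alpha$ and one orbit type $(H)$ at a time.

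\begin{itemize}
\item By Lemma \ref{lemma:tangentproperty}(2), for every $f\in\CinfcalM^G$ the field $\Hamiltonian{f}$ is tangent to $(\mathcal{M}_\alpha)_H$. Since it is $G$-equivariant (it commutes with the $G$-action, as shown in that proof), it is also tangent to $(\mathcal{M}_\alpha)_{(H)}=G\cdot(\mathcal{M}_\alpha)_H$.
\item The manifold $(\mathcal{M}_\alpha)_{(H)}$ carries a free action of $N(H)/H$ on $(\mathcal{M}_\alpha)_H$, and $(\mathcal{M}_\alpha)_{(H)}/G\cong(\mathcal{M}_\alpha)_H/(N(H)/H)$.
\item On this smooth quotient, define the bracket of two functions by pulling them back to $G$-invariant functions, taking the Poisson bracket on $\mathcal{M}_\alpha$, restricting, and descending. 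This is well defined because $G$-invariant functions form a Poisson subalgebra.
\item The bracket depends only on the restrictions of $f,f'$ to $(\mathcal{M}_\alpha)_{(H)}$. Indeed, $\{f,f'\}=\Hamiltonian{f}(f')$, and $\Hamiltonian{f}$ is tangent to the submanifold. So the bracket defines a bivector $\Pi_{\alpha,(H)}$ on each stratum.
\item The Jacobi identity is inherited from $\Pi|_{\mathcal{M}_\alpha}$.
\item The projection is Poisson by construction.
\end{itemize}

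The remaining issue in (1) is that the strata-wise bivectors should assemble into a smooth section on $\mathcal{M}/G$. I would argue that the bracket is globally defined on $\Cinf(\mathcal{M}/G)=\CinfcalM^G$, and that its restriction to each stratum is exactly $\Pi_{\alpha,(H)}$. This is all that Definition \ref{def:poissonsymlecticstratifiedspace} requires, interpreting ``smooth section'' through the sheaf of smooth functions.

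\textbf{Part (2).} By Lemma \ref{lemma:tangentproperty}(1) and equivariance of $J$, the set $\Zxi$ is a $G$-invariant union of fibres. For each $\alpha$, the map $J|_{\mathcal{M}_\alpha}$ is a moment map of the Poisson $G$-manifold $\mathcal{M}_\alpha$. I would then invoke the Fernandes--Ortega--Ratiu reduction \cite{FernandesOrtegaRatu}, or mimic Theorem \ref{Thm:SLReduced}(1)--(2). This shows that $\Zxi\cap(\mathcal{M}_\alpha)_{(H)}$ is a submanifold and that these pieces, indexed by pairs $(\alpha,(H))$, stratify $\Zxi$. The frontier condition is inherited from those of $\mathcal{M}$ and of the orbit-type decomposition. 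The smooth structures are defined by the quotient-ideal and invariant-function recipes of Theorem \ref{Thm:SLReduced}(3)--(4).

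\textbf{Part (3).} For $f,f'\in\Cinf(\Reducedxi)$, I would choose local $G$-invariant extensions $\bar f,\bar f'$ and set $\pi^*\{f,f'\}=\{\bar f,\bar f'\}|_{\Zxi}$.
\begin{itemize}
\item Independence of the extensions requires that $\Hamiltonian{\bar f}$ be tangent to $\Zxi\cap(\mathcal{M}_\alpha)_{(H)}$. Tangency to $J^{-1}(\eta)$ comes from Lemma \ref{lemma:tangentproperty}(1), hence tangency to $J^{-1}(\Oxi)$. Tangency to the orbit-type pieces comes from Lemma \ref{lemma:tangentproperty}(2). Consequently, if $\bar f'$ vanishes on $\Zxi$, then so does $\Hamiltonian{\bar f}\bar f'$.
\item Since the bracket on $\Reducedxi$ is built from the bracket on $\CinfcalM^G$, the embedding $\epsilon$ is Poisson.
\item $\epsilon$ is a stratified morphism because it sends $(\Reducedxi)_{\alpha,(H)}$ into $(\mathcal{M}/G)_{\alpha,(H)}$.
\end{itemize}

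\textbf{Part (4).} This is the step I expect to be the main obstacle. Inside the Poisson manifold $(\mathcal{M}/G)_{\alpha,(H)}$, the image $\epsilon((\Reducedxi)_{\alpha,(H)})$ is a union of $J$-level data. All Hamiltonian vector fields of invariant functions are tangent to it, so it is a union of symplectic leaves, and $\epsilon$ maps leaves into leaves.

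Surjectivity onto a leaf needs the following: every Hamiltonian flow in the stratum of $\mathcal{M}/G$ is generated by an invariant function on the stratum $\mathcal{M}_\alpha$ that extends, locally, to a function in $\CinfcalM^G$. I would obtain such extensions using $G$-invariant bump functions and the fact that $(\mathcal{M}_\alpha)_{(H)}$ is locally closed. Then I would use the Fernandes--Ortega--Ratiu description, which says that leaves of $(\mathcal{M}_\alpha)_{(H)}/G$ are connected components of $(J^{-1}(\Oxi)\cap\mathcal{M}_\alpha\cap(\mathcal{M}_\alpha)_{(H)})/G$ intersected with images of leaves of $\mathcal{M}_\alpha$. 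The main difficulty is matching this with the single-leaf notion on a non-symplectic stratum $\mathcal{M}_\alpha$, and I would handle it by working leafwise in $\mathcal{M}_\alpha$.
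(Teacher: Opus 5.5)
Your proposal is correct and follows essentially the same route as the paper. Tangency of Hamiltonian vector fields of $G$-invariant functions to $(\mathcal{M}_\alpha)_H$ and to $J$-level sets (Lemma \ref{lemma:tangentproperty}) gives (1) and (3); the paper packages your ``bracket depends only on restrictions'' step as the Poisson-ideal criterion of \cite[Proposition 4.5]{Pflaumpaper}, and it defers (2) to the Sjamaar--Lerman/Bates--Lerman arguments exactly as you do. For (4) the paper takes the claim as a direct consequence of tangency, and your leafwise Fernandes--Ortega--Ratiu detour is not needed: a Poisson submanifold that is a union of leaves has exactly those ambient leaves as its own leaves, which already gives surjectivity.
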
  
	
	\begin{proof}
     The Poisson structure on the quotient space $\mathcal{M}/G$ requires careful verification. First, observe that the algebra of functions $\Cinf(\mathcal{M}/G)=\CinfcalM^G$ naturally inherits a Poisson bracket $\PoissonBracket{\cdot,\cdot}$ from the Poisson structure on $\Cinf(\mathcal{M})$ induced by $\Pi$. To establish that this inherited bracket corresponds to a well-defined Poisson bivector field $\Pi$ on $\mathcal{M}/G$, we must verify certain compatibility conditions.
		
	According to \cite[Proposition 4.5]{Pflaumpaper}, it suffices to demonstrate that for each stratum $\mathcal{M}_{\alpha}$ of $\mathcal{M}$ and its associated orbit type submanifold $(\mathcal{M}_{\alpha})_{(H)}$, the following relation holds:
		$$\PoissonBracket{f_1,f_2}|_{(\mathcal{M}_{\alpha})_{(H)}}=0$$
		for all $f_1,f_2\in \Cinf(\mathcal{M}/G)$ where $f_2|_{(\mathcal{M}_{\alpha})_{(H)}}=0$. This condition is indeed satisfied since, by Lemma \ref{lemma:tangentproperty} (2), the Hamiltonian vector field $\Hamiltonian{f_1}$ of $f_1$ is tangent to $(\mathcal{M}_{\alpha})_{H}$.
		
		The properties outlined in (2) are fundamentally equivalent to conditions (1)-(4) of Theorem \ref{Thm:SLReduced}, and their verification follows the arguments presented in \cite{SLannals,BatesLerman1997,Pflaumbook}. 
		
		Regarding (3), the tangency of the Poisson bivector field $\Pi$ to the pieces $\Zxi\cap (\mathcal{M}_{\alpha})_{(H)}$ is established by claims (1) and (2) of Lemma \ref{lemma:tangentproperty}. As a direct consequence, Property (4) follows from this tangency condition.

	\end{proof}
	\begin{remark}
		In the context of Theorem \ref{Thm:Poissonreducedspace}, one can demonstrate that $\Zxi\cap \mathcal{M}_{\alpha}$ is \textit{coisotropic} within the Poisson manifold $(\mathcal{M}_{\alpha},\Pi | _{\mathcal{M}_{\alpha}})$ across all strata $\mathcal{M}_{\alpha}$ of $\mathcal{M}$.
		
	\end{remark}

\section{Stratified superintegrable systems}\label{Sec:Superintegrability}

In this section, we start with a brief overview of classical superintegrability on symplectic manifolds and then   generalize  the definition of
superintegrability to stratified spaces. 
\subsection{Classical superintegrability for symplectic manifolds}\label{Sec:sub-classicalsuperint}
For details on superintegrability on manifolds, see,  for example, 
\cite{Old-review,Nek1972,NR2003}. Our definition below is an improvement based on \cite{NR2003}.
 \begin{definition}\label{Def:superint3}  A strong classical  \textbf{superintegrable system}   on a $2r$-dimensional symplectic manifold $M $ 
 is a structure comprising three key components:  ~ a Poisson manifold $P $ of dimension ${2r-k}$, a smooth manifold $B$ (of dimension   $\geqslant k$)  with a trivial Poisson structure, and a sequence of   Poisson maps 
	\begin{equation}\label{Seq:generalMPB}
		M  \overset{\psi}{\longrightarrow} P  \overset{\pi}{\longrightarrow} B  \,,
	\end{equation}
satisfying the following conditions:
\begin{itemize}
  \item[(1)] The map  $\psi$ is a surjective  submersion; 
  \item[(2)]  The rank of $\pi$ is $k$ (i.e., the image of the tangent map $\pi_{*p}$ is a $k$-dimensional subspace in $T_{\pi(p)}B$ for all $p\in P $);
  \item[(3)] For any point  $b$ in $\pi(P )\subset B$, connected components of the preimage $\pi^{-1}(b)$   are symplectic leaves of $P $. 
\end{itemize}
	
\end{definition} 

Here, the adjective ``strong'' indicates that Condition (2) must hold throughout \(P\), while Condition (3) must hold throughout \(\pi(P)\). In the literature, by contrast, a superintegrable system usually requires these conditions to hold only at generic points.
 
Although it is common in the literature to require the space $B$ to have dimension exactly $k$, we allow $\dim B \geqslant k$. This more general setting arises in our subsequent analysis of concrete examples. For instance, the sequence  $(\Reducedxi)^2_{(1')}$  $\to (\Pxi)^1_{(1')}$  $\to (\Bxi)^2$ in Diagram \eqref{Diagram:SU3arrowsabcrefined} provide such an example. 
 
Superintegrability extends the concept of Liouville integrability. When $k=r$, a  strong classical superintegrable system  is a Liouville integrable system.

 From a strong classical superintegrable system as in Definition \ref{Def:superint3}, we can find a  Poisson subalgebra  $\mathscr{A}:= \psi^* \Cinf(P ) $ in ${C}^\infty(M )$. 
 The Poisson center $Z(\mathscr{A})$ of $\mathscr{A}$ is also easily found:
 $$Z(\mathscr{A})=(\pi\circ \psi)^*\Cinf(B ).
 $$
 Note that $ \mathscr{A} $ is of rank $2r-k$ and $Z(\mathscr{A})$ is of rank $k$. Here, by saying that a subalgebra $\mathscr{C}\subset {C}^\infty(M )$ is of rank $s$, we mean that the subspace of $T^*_p M $ spanned by differentials of functions in $\mathscr{C}$ is   of dimension $s$ for almost all $p\in M$. 
 
 Conversely, suppose that we are given a Poisson subalgebra $\mathcal{A} \subset C^\infty(M )$ of rank $2r-k$ whose Poisson center $Z(\mathcal{A})$ has rank $k$. If the level sets $P $ and $B $, defined by the functions in $\mathcal{A}$ and $Z(\mathcal{A})$, respectively, are smooth manifolds, then the sequence \eqref{Seq:generalMPB} can be naturally constructed to form a strong classical superintegrable system. See the next Example \ref{Ex:Kepler}.

 In practice, however, these smoothness conditions are often prohibitively restrictive. This necessitates decomposing the relevant spaces into smaller components and identifying superintegrable systems on these local partitions, a procedure illustrated in the subsequent examples.

\begin{example}[Kepler system]\label{Ex:Kepler} 
	The phase space of the classical $3$-dimensional Kepler system is the $6$-dimensional symplectic manifold $M = T^*(\mathbb{R}^3 \setminus \{0\})$ with the standard Darboux symplectic form $\omega$ on the cotangent bundle. Denote by $(q,p)$ the standard coordinate system  on $M$ where $q  $ gives the position in $\mathbb{R}^3 \setminus \{0\}$ and $p $ stands for the momentum. The standard Hamiltonian on $M$ reads $$h(q,p) = \frac{1}{2}\|p\|^2 - \frac{\mu}{\|q\|} ,$$ where $\mu>0$ denotes the gravitational parameter. The global constants of motion are the energy $h$, the angular momentum ${L} = {q} \times {p}$, and the Runge-Lenz vector ${A} = {p} \times {L} - \mu \frac{{q}}{\|{q}\|}$. 
		
	We construct a  strong classical superintegrable system according to Definition \ref{Def:superint3} as follows:
	\begin{enumerate}
		\item \textbf{The Manifolds:} Define the moment map $\psi: M \to \mathbb{R}^7$ by $\psi(q,p) = (h, {L}, {A})$ and take its image $P=\Im \psi$ as the intermediate   manifold. This $P$ is indeed a smooth manifold of dimension $5$ because it is defined by   two algebraic constraints:
		$ {L} \cdot {A} = 0$ and $  \|{A}\|^2 - 2h\|{L}\|^2 - \mu^2 = 0$;
		and the Jacobian matrix of $({L} \cdot {A}, \|{A}\|^2 - 2h\|{L}\|^2 - \mu^2)$ with respect to $(h, {L}, {A})$ has  full rank $2$ everywhere.
		Meanwhile, we   take $B  = \mathbb{R}$.
			
		\item \textbf{The Poisson structures:} The functions $h$, $L_i$, $A_j$ ($i,j=1,2,3$) together generate a   Poisson subalgebra in $\Cinf(M)$. Any function in $\psi^*\Cinf(P)$ can be (locally) approximated by elements in this subalgebra and hence it induces  a smooth Poisson structure on $P$ such that $\psi$ is a   Poisson map.  
		Define $\pi: P (\subset \mathbb{R}^7)\to B=\mathbb{R}$ simply by projecting to its first component. It is   a Poisson map if we treat $B$ as  a trivial Poisson manifold. One can easily check that $\psi$ and $\pi$ are both submersions.

			
			
			
		\item \textbf{Symplectic leaves:} The Poisson center of $\psi^*\Cinf(P)$ is clearly generated by $h$. Therefore, every connected component of the inverse image $\pi^{-1}(b)$ ($b\in B$) is a symplectic leaf of $P$. 
	\end{enumerate}
In summary, the sequence $M \overset{\psi}{\longrightarrow} P \overset{\pi}{\longrightarrow} B$ constitutes a strong classical superintegrable system; for additional background, see \cite{Old-review,Keplerbook}.\end{example}

\subsection{Two illustrative examples}\label{Ex:partitionnotstratified}
We wish to extend  the concept of superintegrability for smooth symplectic manifolds to stratified symplectic spaces. To illustrate our  approach, we begin with two simple models.

 \begin{example}
    \label{Ex:R4}
    Consider $M =\mathbb{R}^4$ with standard coordinates $(x,y,r,s)$ and the symplectic form
    $dx\wedge dy+dr\wedge ds$.  Take the   Poisson subalgebra $\mathcal{A}\subset \Cinf(M )$ with the trivial bracket generated by   $  f_1=   x +r$ and  $f_2= {x}^2+r^2 $. Clearly, we have $Z(\mathcal{A})=\mathcal{A}$. Both $\mathcal{A}$ and $Z(\mathcal{A})$ have rank $2$. So we can consider the level set of functions in $\mathcal{A}$, namely
    $$
    \mathcal{P} =\Im \psi, \mbox{ where } \psi: ~M \to \mathbb{R}^2,\quad (x,y,r,s)\mapsto (f_1,f_2).
    $$
    It can be easily seen that 
 $$\mathcal{P}  =\SET{(u,v)\in \mathbb{R}^2|  u^2\leqslant 2v} .$$

     Now we form the following sequence of maps
 \begin{equation} \label{Seq:ExampleToySequenceR4} M   \overset{\psi}{\longrightarrow} \mathcal{P}  \overset{\pi}{\longrightarrow} \mathcal{B} ,
  \end{equation}
 where  $\mathcal{P} =\mathcal{B}  $   and
 $\pi$ is the identity map. Thus, we obtain objects similar to those in Definition \ref{Def:superint3}.  But we \emph{cannot} claim that the above sequence \eqref{Seq:ExampleToySequenceR4} is superintegrable in the classical sense because both $\mathcal{P} $ and $\mathcal{B} $ are  stratified spaces.  
 Indeed, $\mathcal{P} $ is stratified into the following two strata: $$(\mathcal{P})^2=\SET{(u,v)\in \mathbb{R}^2|  u^2<2v},$$ 
 $$(\mathcal{P})^1 =\SET{(u,v)\in \mathbb{R}^2|  u^2=2v},$$
 and so is $\mathcal{B} $. Here, the notation $(\mathcal{P})^2$ and $(\mathcal{P})^1$ represent, respectively,  strata of  dimension $2$ and $1$   of $\mathcal{P} $,  and similar notation  in the sequel is interpreted in a comparable manner.

  It is natural to expect that $M $ can be partitioned into smaller strata which are superintegrable.   The method of partition we propose is to consider  inverse images of strata:  
      $$
   (M)^4=\psi^{-1} (\mathcal{P})^2 =
   \SET{(x,y,r,s)\in \mathbb{R}^4| x\neq r},
   $$
   $$(M)^3 =\psi^{-1}(\mathcal{P})^1 =
   \SET{(x,y,x,s)\in \mathbb{R}^4 }.
      $$

On $(M)^4$ there is an evident strong classical superintegrable system:
\begin{equation} \label{Seq:ExampleToySequenceR4max} (M)^4  \overset{\psi}{\longrightarrow} (\mathcal{P})^2   \overset{\pi}{\longrightarrow} (\mathcal{B})^2  .
  \end{equation}
On $(M)^3 $    (a \emph{presymplectic} submanifold in $M $) there is a sequence of maps which does \emph{not} form     a strong classical superintegrable system:
\begin{equation} \label{Seq:ExampleToySequenceR4min} (M)^3   \overset{\psi}{\longrightarrow} (\mathcal{P})^1    \overset{\pi}{\longrightarrow} (\mathcal{B})^1    .
  \end{equation}
  However, the   \emph{reduced symplectic space}\footnote{The reduced symplectic space $\underline{M}$ of a presymplectic manifold $(M,\omega)$ is the quotient space of $M$   by the flow of the vector fields in the kernel of $\omega$. If it happens that $\underline{M}$ is also a smooth manifold, then $\underline{M}$ inherits a   symplectic structure from $(M,\omega)$ in a natural way.} $\underline{(M)^3}    $ (which is $2$-dimensional)  admits a sequence 
  \begin{equation} \label{Seq:ExampleToySequenceR4minreduced} \underline{(M)^3}      \overset{\underline{\psi}}{\longrightarrow} (\mathcal{P})^1    \overset{\pi}{\longrightarrow} (\mathcal{B})^1 ,   
  \end{equation}
  which is indeed a strong classical superintegrable system and the map $\underline{\psi}$ fits into the following commutative diagram: 
     \begin{equation} \nonumber     \begin{tikzcd}
 	 (M)^3 \arrow[dr,"\psi "] \arrow[d,"p"] & ~   \\
 	\underline{(M)^3}   \arrow[r,"\underline{\psi} "']     & (\mathcal{P})^1      
 \end{tikzcd}
\end{equation}
where $p$ is the standard quotient map.

In summary, we have a  symplectic stratified space $M =(M)^4\sqcup (M)^3$   whose maximum stratum $(M)^4$ admits a strong classical superintegrable system, and the symplectic reduction of the minimum stratum $(M)^3$ does so as well.
  \end{example}
  
  \begin{example}
    \label{Ex:R6}
Consider $M = \mathbb{C} ^3\cong (\mathbb{R}^2)^3   $, which is symplectic as the triple product  of $\mathbb{R}^2$. Write $(x=u+\imagunit v,y=p+\imagunit q,z=r+\imagunit s)$ for points in $M $ with the   only nontrivial Poisson brackets as follows:
$$
\SET{u,v}=\SET{p,q}=\SET{r,s}=1.
$$
 Take the Poisson subalgebra $\mathcal{A}\subset \Cinf(M )$  generated by the following functions $$ f_1=\abs{x}^2=u^2+v^2, \quad f_2=\abs{y}^2-\abs{z}^2=p^2+q^2-r^2-s^2,$$$$\quad  f_3=\mathrm{Re}(y\bar{z})=pr+qs, \quad  f_4=\mathrm{Im}(y\bar{z})=qr-ps.$$ The Poisson center  $Z(\mathcal{A})$ is clearly generated by $f_1$ and $$f_5=(\abs{y}^2+\abs{z}^2)^2= (p^2+q^2+r^2+s^2)^2=f_2^2+4(f_3^2+f_4^2).$$ 
 Therefore, we can consider the level set $\mathcal{P} $ of functions in $\mathcal{A}$
 and the level set $\mathcal{B} $ of functions in $Z(\mathcal{A})$. Namely, we define $$\psi:~M \to \mathbb{R}^4,\quad (u,v,p,q,r,s)\mapsto (f_1,f_2,f_3,f_4), $$
 $$\mathcal{P}  :=\Im \psi=\SET{(a,b,c,d)\in \mathbb{R}^4 ~|~ a\geqslant 0}\cong   \mathbb{R}^{\geqslant 0}\times \mathbb{R}^3,$$
 and define
 $$
 \pi:~\mathbb{R}^4\to \mathbb{R}^2,\quad (a,b,c,d)\mapsto (a,b^2+4(c^2+d^2)),
 $$
 $$\mathcal{B} =\Im (\pi\circ \psi)=\SET{(\alpha,\beta)\in \mathbb{R}^2| \alpha\geqslant 0,\beta\geqslant 0}\cong \mathbb{R}^{\geqslant 0}\times \mathbb{R}^{\geqslant 0}. $$
 Note that the map $\pi\circ \psi$ is indeed $(f_1,f_5)$, and we obtain a sequence of maps
 \begin{equation} \label{Seq:ExampleToySequence2} M   \overset{\psi}{\longrightarrow} \mathcal{P}  \overset{\pi}{\longrightarrow} \mathcal{B} .
  \end{equation}

  Next, we wish to see how Sequence \eqref{Seq:ExampleToySequence2} is integrated as a superintegrable system by decomposing    it into subsequences.   Clearly, $\mathcal{P}$ is stratified as:
 $$\mathcal{P} =(\mathcal{P})^4 \sqcup (\mathcal{P})^3 ,$$
 where
 $$(\mathcal{P})^4 =\mathbb{R}^{> 0}\times \mathbb{R}^3,\qquad (\mathcal{P})^3 =\SET{0}\times \mathbb{R}^3.$$
 Similarly, $\mathcal{B}$ has the following strata: 
 $$(\mathcal{B})^2 =\mathbb{R}^{> 0}\times \mathbb{R}^{> 0},\quad (\mathcal{B})^1_{(1)}  =\SET{0}\times \mathbb{R}^{> 0},\quad (\mathcal{B})^1_{(2)}  = \mathbb{R}^{> 0}\times \SET{0},\quad  (\mathcal{B})^0  =\SET{0}\times \SET{0}.
 $$
At this point, we cannot claim that Sequence \eqref{Seq:ExampleToySequence2} forms a superintegrable system. Also,    the maps $\psi$ and  $\pi$ do not  send stratum to stratum. 

We need to  refine $\mathcal{P} $. Note that $\pi$ maps $(\mathcal{P})^4    $ to $(\mathcal{B})^2$ $\sqcup$ $ (\mathcal{B})^1_{(2)}  $, and $(\mathcal{P})^3 $ to $(\mathcal{B})^1_{(1)}  $ $\sqcup$ $(\mathcal{B})^0    $. So we can define the following subspaces:
$$(\mathcal{P})^0  =\pi^{-1} (\mathcal{B})^0\cap (\mathcal{P})^3  =\SET{(0,0,0,0)}, 
 $$
$$  (\mathcal{P})^1   =
 \pi^{-1} (\mathcal{B})^1_{(2)} \cap (\mathcal{P})^4 =\mathbb{R}^{> 0}\times \SET{(0,0,0)},
$$
$$(\mathcal{P})^3_{*}  
=\pi^{-1} (\mathcal{B})^1_{(1)}   \cap (\mathcal{P})^3
= (\mathcal{P})^3  \setminus    (\mathcal{P})^0  ,
$$
$$(\mathcal{P})^4_{*}   =\pi^{-1}
(\mathcal{B} )^2 \cap (\mathcal{P})^4
= (\mathcal{P})^4   \setminus    (\mathcal{P})^1    .
$$
We thus have a refined stratification $$
  \mathcal{P} = (\mathcal{P})^4_*         \sqcup (\mathcal{P})^3_{*}   \sqcup (\mathcal{P})^1       \sqcup (\mathcal{P})^0     .
  $$ 
 We then further decompose $M =\mathbb{C}^3$ into the following four strata:
 $$(M)^6        :=\psi^{-1} (\mathcal{P})^4_*         
 \cong    (\mathbb{C}\setminus \SET{0})\times (\mathbb{C}^2\setminus \SET{(0,0)}),
 $$
 $$
 (M)^4          :=
 \psi^{-1}(\mathcal{P} )^3_* 
 \cong      \SET{0} \times (\mathbb{C}^2\setminus \SET{(0,0)}),
  $$
$$(M)^2         :=\psi^{-1}
 (\mathcal{P})^1      
 \cong    (\mathbb{C}\setminus \SET{0})\times \SET{(0,0)} ,
 $$
$$(M)^0          :=\psi^{-1}
 (\mathcal{P})^0      
 \cong      \SET{(0,0,0)} .
 $$

 With the new stratifications of $M$ and $\mathcal{P}$, Sequence \eqref{Seq:ExampleToySequence2} becomes a morphism of stratified spaces. In fact,   we can see that $\psi$ and $\pi$ are partitioned into   the following components:
\begin{equation}
	\label{Diagram:EXAMPLEM6}
	\begin{tikzcd}
		(M)^6         \arrow[d,dashed] \arrow[dd,bend left=55,dashed]    \arrow[rr,"\psi "] && (\mathcal{P})^4_*           \arrow[d,dashed] \arrow[dd,bend left=55,dashed] \arrow[rr,"\pi " ]   && (\mathcal{B})^2    \arrow[d,dashed] \arrow[dd,bend left=55,dashed] \\
		(M)^4           \arrow[dd,bend left=55,dashed]  \arrow[rr ,"\psi "] && (\mathcal{P})^{3}_*  \arrow[dd,bend left=55,dashed]  \arrow[rr ,"\pi "]  && (\mathcal{B})^1_{(1)}   \arrow[dd,bend left=55,dashed]  \\ (M)^2           \arrow[rr,"\psi "]\arrow[d,dashed]
		&& (\mathcal{P})^1      \arrow[rr,"\pi "] \arrow[d,dashed]&&  (\mathcal{B})^1_{(2)}   \arrow[d,dashed]
\\ (M)^0     \arrow[rr,"\psi "]
		&& (\mathcal{P})^0       \arrow[rr,"\pi "] && (\mathcal{B})^0  \,, \end{tikzcd}
\end{equation}
  such that all vertical sequences are superintegrable in the strong classical sense.

  In this diagram (and the sequel), the dashed arrows connecting strata of $M $, $\mathcal{P} $, and $\mathcal{B} $ are drawn to  show   boundary relations ---   the target stratum is a boundary of the source stratum.  
  
  \end{example}

 \subsection{Definition of stratified superintegrable systems}
 The preceding Examples \ref{Ex:R4} and \ref{Ex:R6} demonstrate that, under certain circumstances, the initial symplectic space requires decomposition into distinct strata, each characterized by either a symplectic or a presymplectic structure. (The maximum stratum certainly admits a symplectic structure.)  Furthermore,   each stratum -- or its corresponding symplectic reduction -- admits a strong classical superintegrable system. Based on these observations, we are now in a position to establish a formal definition of stratified superintegrable systems as a generalization of the strong classical framework.

 \begin{definition}\label{Def:stratifiedsuperintNEW}
 A \textbf{stratified superintegrable system} consists of the following data:
\begin{itemize}
    \item[(1)~] A stratified symplectic space $(\mathcal{M},\omega)$ (see Definition \ref{def:presymlecticstratifiedspace}) with index set of strata denoted by $A$.

    \item[(2)~] A stratified Poisson space $(\mathcal{P},\Pi)$ (see Definition \ref{def:poissonsymlecticstratifiedspace})  with index set of strata denoted by $C$.

    \item[(3)~] A stratified space $\mathcal{B}$ endowed with the \textit{trivial Poisson} structure  with index set of strata denoted by $B$.

    \item[(4)~] A sequence of smooth Poisson maps between stratified Poisson spaces 
    \begin{equation}\label{NEWSeq:generalMPBstratified}
        \mathcal{M} \overset{\psi}{\longrightarrow} \mathcal{P}
        \overset{\pi}{\longrightarrow} \mathcal{B},
    \end{equation}
    where $\psi$ is surjective.
\end{itemize}

These data are required to satisfy the following conditions:
\begin{itemize}
    \item[(5)~] The stratification of $\mathcal{P}$ is refined by intersecting each of its strata with the inverse images under $\pi$ of the strata of $\mathcal{B}$. Thus,
    \[
        \mathcal{P}
        = \bigsqcup_{(\delta,\beta)\in C\times B}
        \mathcal{P}_{\delta,\beta},
    \]
    where
    \[
        \mathcal{P}_{\delta,\beta}
        := \mathcal{P}_{\delta}\cap \pi^{-1}(\mathcal{B}_{\beta}).
    \](See Appendix \ref{Apd:refinebyinverseimage} for more details of this method.) 
   With respect to this refined stratification, $\mathcal{P}$ inherits a stratified Poisson structure. Equivalently, the Poisson structure $\Pi$ is tangent to every refined stratum $\mathcal{P}_{\delta,\beta}$.

    \item[(6)~] The stratification of $\mathcal{M}$ is then refined by taking the inverse images under $\psi$ of the refined strata of $\mathcal{P}$. Consequently,
    \[
        \mathcal{M}
        = \bigsqcup_{(\alpha,\delta,\beta)\in A\times C\times B}
        \mathcal{M}_{\alpha,\delta,\beta},
    \]
    where
    \[
        \mathcal{M}_{\alpha,\delta,\beta}
        := \mathcal{M}_{\alpha}\cap\psi^{-1}(\mathcal{P}_{\delta,\beta})
        = \mathcal{M}_{\alpha}\cap\psi^{-1}(\mathcal{P}_{\delta})
        \cap(\pi\circ\psi)^{-1}(\mathcal{B}_{\beta}).
    \]
     With respect to this refined stratification, $\mathcal{M}$ inherits a stratified presymplectic structure. More precisely, if
    \[
        j:\mathcal{M}_{\alpha,\delta,\beta}\hookrightarrow\mathcal{M}_{\alpha}
    \]
    denotes the inclusion map, then the presymplectic form on $\mathcal{M}_{\alpha,\delta,\beta}$ is
    \[
       \omega_{\alpha,\delta,\beta}
       =j^*(\omega|_{\mathcal{M}_{\alpha}}).
    \]

    \item[(7)~] For each refined stratum $\mathcal{M}_{\alpha,\delta,\beta}$, equipped with the presymplectic form $\omega_{\alpha,\delta,\beta}$, let $\underline{\mathcal{M}}_{\alpha,\delta,\beta}$ denote its symplectic reduction by the null foliation of $\omega_{\alpha,\delta,\beta}$. We require each $\underline{\mathcal{M}}_{\alpha,\delta,\beta}$ to be a genuine smooth manifold and   thus inherit  a symplectic structure. The collection of these reduced spaces is required to form a stratified symplectic space $(\underline{\mathcal{M}},\underline{\omega})$.

    \item[(8)~] With respect to the refined stratifications, the morphism of stratified spaces $\psi:\mathcal{M}\to\mathcal{P}$ factors through $\underline{\mathcal{M}}$. In other words, there exists a morphism of stratified spaces
    \[
        \underline{\psi}:\underline{\mathcal{M}}\to\mathcal{P}
    \]
    such that the following diagram commutes:
    \begin{equation*}
    \begin{tikzcd}
        \mathcal{M} \arrow[dr,"\psi"] \arrow[d,"p"] & & \\
        \underline{\mathcal{M}} \arrow[r,"\underline{\psi}"'] & \mathcal{P}
    \end{tikzcd}
    \end{equation*}
    where $p$ is the standard quotient map. Moreover, $\underline{\psi}$ is Poisson.

    \item[(9)~] For every triple $(\alpha,\delta,\beta)$, the sequence
    \begin{equation}\label{Eqt:subsuperintsequence}
        \underline{\mathcal{M}}_{\alpha,\delta,\beta}
        \overset{\underline{\psi}}{\longrightarrow}
        \mathcal{P}_{\delta,\beta}
        \overset{\pi}{\longrightarrow}
        \mathcal{B}_{\beta}
    \end{equation}
    is superintegrable in the strong classical sense (see Definition \ref{Def:superint3}).
\end{itemize}
     \end{definition}
    
    
 In this definition, the original stratifications of \(\mathcal{M}\) and \(\mathcal{P}\) are referred to as their primary stratifications. A sequence of the form \eqref{Eqt:subsuperintsequence}, constructed from refined stratifications, is called a \textbf{sub-superintegrable system} of the stratified superintegrable system \eqref{NEWSeq:generalMPBstratified}.  The two examples in Section \ref{Ex:partitionnotstratified}, along with the refined stratifications and maps, demonstrate key components of a stratified superintegrable system and its sub-superintegrable systems.

	\section{General construction of spin Calogero-Moser-Sutherland  systems}\label{sCMS}
	
	Here we   describe the  sCMS  systems on the cotangent bundle associated with a \textit{compact} and \textit{connected}   Lie group $G$.
 Our aim is to   construct a  sequence of stratified Poisson spaces:
	\[
	\Reducedxi  \stackrel{\psi}{\longrightarrow} \Pxi \stackrel{\pi}{\longrightarrow} \Bxi ,
	\]
	where $\xi \in \Im J_{T^*G}\subset \gstar$ is fixed and $J_{T^*G}$ is the moment map defined in \eqref{Eqt:JTstarGgeneral}. The stratified space $\Reducedxi$ is defined in Section \ref{subSec:phasespaceRxifromG}. The mappings $\psi$ and $\pi$ and the spaces $\Pxi$ and $\Bxi$ are defined in the next subsections. This sequence of maps will be useful to establish the stratified superintegrability of sCMS systems associated with   $G=\SU(2)$ (see Section \ref{Sec:SU(2)}) or $\SU(3)$ (see Section \ref{Sec:sCMSRPBSU3}).
	
	\subsection{The stratified    space  $\Bxi$}\label{subsubSec:Bxi}~
	
First, we introduce $\Gamma_{\xi}\subset \gstar$:
		
		\begin{eqnarray*}
			\Gamma_{\xi} &:=& \SET{x \in \gstar ~|~ \exists \gamma \in G \mbox{ such that } x - \Adjointaction^*_{\gamma^{-1}}(x) \in \Oxi}.  
		\end{eqnarray*} Here $\mathcal{O}_{\xi} \subset \gstar$ is the coadjoint orbit through $\xi$. 
		
The coadjoint action of $G$ on $\gstar$ preserves $\Gamma_{\xi}$ naturally. So we have the quotient space $\Bxi:=\Gamma_{\xi}/G$. This $\Bxi$ is usually a stratified space. Let us suppose that 
$$
\Bxi=\sqcup_{\beta\in B} (\Bxi)_{\beta},
$$
where $B$ is the index set. Accordingly, we can decompose
\begin{equation}\label{Eqt:GammaxidecomposebyB} 
\Gamma_{\xi}=\sqcup_{\beta\in B} (\Gamma_{\xi})_{\beta},\quad \mbox{ where }(\Gamma_{\xi})_{\beta}=\SET{x\in \Gamma_{\xi}~|~[x]\in (\Bxi)_{\beta}}.
\end{equation}

We shall treat $\Bxi$ as a stratified Poisson space whose Poisson structure is simply \textit{trivial}. In fact, $\gstar$ possesses the (nontrivial) Kirillov-Kostant Poisson   structure. But the natural Poisson structure on $\gstar/G$ is trivial. Consequently, the subspace $\Bxi \subset \gstar/G$ only admits a \textit{trivial} Poisson structure.

 We further assume   the following condition. 

\begin{assumption}\label{Assumption:sameorbittype}
Every stratum $(\Bxi)_{\beta}$ corresponds to a unique conjugacy class of orbit types $E\subset G$. In other words,   for all   $x\in  (\Gamma_{\xi})_{\beta}$ we have $G_x\sim E$.
\end{assumption}

	\subsection{The stratified  Poisson space $\Pxi$}\label{subsubSec:RPBxi}~
	
	First, let us construct the stratified Poisson space $\Pxi$ through the following steps:
	\begin{itemize}
		
		\item[(1)] Consider the Poisson $G$-manifold $\gstar \times \gstar$ with the product Poisson structure of  Kirillov-Kostant Poisson structures on $\gstar$. On each factor $\gstar$, the action of $G$ is coadjoint, and on the product, $G$ acts diagonally.
		
		\item[(2)] The diagonal action of $G$ on $\gstar \times \gstar$ is Hamiltonian with the 
		moment map $J_{\gstar \times \gstar}: \gstar \times \gstar \to \gstar$ acting as $(x,y)\mapsto x+y$.
		
		\item[(3)] Define a subset $\middleK \subset \gstar \times \gstar$ as follows:
		$$
		\middleK := \SET{(x, y) \in \gstar \times \gstar ~|~ y \in \mathcal{O}_{-x}} = \SET{(x, -\Adjointaction^*_{\gamma^{-1}}(x)) \in \gstar \times \gstar, \text{for some } \gamma \in G}.
		$$
		
		\item[(4)]The Lie group $G$ acts on $\middleK$,  as it is a $G$-invariant subset of $\gstar \times \gstar$. Hence, $\middleK$ has a natural stratification determined by the stabilizers with strata parameterized by conjugacy classes of closed subgroups: 
		$$
		\middleK(E) := \SET{(x, y) \in \middleK ~|~ G_x (\text{or } G_y) \sim E}.
		$$
		Here $E \subset G$ is a closed subgroup. Note that if $(x,y)\in \middleK$ we have $G_x\sim G_y$. 

		
		\item[(5)] Alternatively,  $\middleK$ can be described as follows
		$$
		\middleK = \bigsqcup_{[x] \in \gstar/G} \mathcal{O}_x \times \mathcal{O}_{-x}.
		$$
		From this description we see that Hamiltonian vector fields on $\gstar \times \gstar$ are tangent to every piece $\mathcal{O}_x \times \mathcal{O}_{-x}$, and to every stratum  $\middleK(E)$ as well. Consequently, the stratified space $\middleK$ inherits a Poisson bivector field from $\gstar \times \gstar$, making it a stratified Poisson $G$-space.
		It is clear that we can also write $\middleK$ as a fibered product
		\[
		\middleK =\gstar ~\tilde{\times}_{\gstar/G}~\gstar=\{(x,y)\in \gstar\times \gstar ~|~ [x]=-[y]\in \gstar/G\}
		\]

		\item[(6)] By restricting the moment map $J_{\gstar \times \gstar}$ to $\middleK$, we obtain a new moment map $J_\middleK: \middleK \to \gstar$. Note that $\Im J_\middleK = \Im J_{T^*G}$, where $J_{T^*G}$ is the moment map defined in Section \ref{Sec:introduction}. For $\xi \in \Im J_{T^*G} = \Im J_\middleK$, we have 
$$J_\middleK^{-1}(\mathcal{O}_\xi)  = \SET{(x, -\Adjointaction^*_{\gamma^{-1}}(x)) \in \gstar \times \gstar ~|~ x - \Adjointaction^*_{\gamma^{-1}}(x) \in \mathcal{O}_\xi} = \Projection_1^{-1}(\Gamma_{\xi}).$$
		Here $\Projection_1$ denotes the natural projection from $\middleK$ to its first component:
$$\Projection_{1}:  \quad \middleK \to \gstar.
$$ 

Following the decomposition \eqref{Eqt:GammaxidecomposebyB} of $\Gamma_{\xi}$, we have
 $$J_\middleK^{-1}(\mathcal{O}_\xi)=\sqcup_{\beta\in B} (J_\middleK^{-1}(\mathcal{O}_\xi))_{\beta},\quad \mbox{ where } (J_\middleK^{-1}(\mathcal{O}_\xi))_{\beta}=
\Projection_1^{-1}(\Gamma_{\xi})_{\beta}.$$

		\item[(7)]  Define the reduced space
		$$
		\Pxi := J_\middleK^{-1}(\mathcal{O}_\xi)/G
=  \SET{(x, -\Adjointaction^*_{\gamma^{-1}}(x)) \in \gstar \times \gstar ~|~ x - \Adjointaction^*_{\gamma^{-1}}(x) \in \mathcal{O}_\xi}/G,
		$$
		which is a stratified Poisson space, by Theorem \ref{Thm:Poissonreducedspace}.  Note that  
the primary stratification of $\Pxi$ is indexed by two orbit types:
$$
		(\Pxi)_{(E),(H)} = \SET{(x, -\Adjointaction^*_{\gamma^{-1}}(x)) \in J_\middleK^{-1}(\mathcal{O}_\xi)   ~|~G_x\sim E,~ G_{x, \Adjointaction^*_{\gamma^{-1}}(x)} \sim H }/G.
		$$
\item[(8)]We have a natural sequence of maps
	\begin{equation}\label{Seq:SxiPxiBxi}
		\Reducedxi\stackrel{\psi}{\longrightarrow} \Pxi \stackrel{\pi}{\longrightarrow} \Bxi
	\end{equation}
	defined as 
	\begin{equation}\label{Seq:SxiPxiBxi2}
		[(x,\gamma)]\stackrel{\psi}\mapsto [(x,-{\Adjointaction^*_{\gamma^{-1}}}(x))]\stackrel{\pi}\mapsto [x].
	\end{equation}
	\item[(9)] 
According to our Assumption \ref{Assumption:sameorbittype}, we know that every $(J_\middleK^{-1}(\mathcal{O}_\xi))_{\beta}$ belongs to some stratum $\middleK(E)$.
Therefore, any \textbf{refined stratum} of $\Pxi$ (obtained by intersecting each of its primary strata, say $(\Pxi)_{(E),(H)}$, with the
inverse images under $\pi$ of the strata of $\Bxi$, say $(\Bxi)_{\beta}$) is easily seen --- it is indexed by a pair  $(\beta, (H))$ where $\beta\in B$, $H\subset G$ such that
		$$
		(\Pxi)_{\beta,(H)} = \SET{(x, -\Adjointaction^*_{\gamma^{-1}}(x)) \in (J_\middleK^{-1}(\mathcal{O}_\xi))_{\beta} ~|~ G_{x, \Adjointaction^*_{\gamma^{-1}}(x)} \sim H }/G.
		$$
(See Section \ref{Apd:refinebyinverseimage} for more details about the refined stratification.) 
In the notation $(\Pxi)_{\beta,(H)}$, we omit the index $(E)$, as it is determined by $\beta$. In what follows, we use only this refined stratification of $\Pxi$.
 
 	\end{itemize}

	\subsection{The primary stratification of $\Reducedxi$}
	\label{subsubSec:psipi} In the sequence \eqref{Seq:SxiPxiBxi} which we established previously, 
the map $\pi: \Pxi\to \Bxi$ is by definition a morphism of stratified spaces (because $\Pxi$ is already refined). However, the first one, $\psi$, is \textit{not} necessarily a morphism. In fact, we have  the inclusion of stabilizers:
		\begin{equation}\label{Eqt:inclusionofGstuff}
		G_{x,\gamma}~\subset~ G_{x, {\Adjointaction^*_{\gamma^{-1}}}(x)} ~\subset~ G_{x}.
	\end{equation}

As detailed in Section \ref{Sec:SLmethod}, the reduced space $\Reducedxi$ is initially partitioned by orbit types   $\Reducedxi = \sqcup (\Reducedxi)_{(H)}$, which we shall call the \textbf{primary stratification} (namely the original Sjamaar-Lerman stratification) of $\Reducedxi$. Under this decomposition, however, a single stratum $(\Reducedxi)_{(H)}$ may be mapped by $\psi$ across multiple distinct strata in $\Pxi$. To ensure that $\psi$ constitutes a genuine morphism of stratified spaces, we will introduce a \textit{refined stratification} of $\Reducedxi$ in Section \ref{Sec:refineReducedxi}.

Nevertheless, one can see that both $\psi$ and $\pi$ are \textit{Poisson} maps of stratified Poisson spaces.  Indeed, we have mappings
	\begin{equation}
		\label{Seq:TstarGgstargstargstar}
		T^*G\stackrel{\Psi}{\longrightarrow} \gstar\times \gstar \stackrel{\Projection_1}{\longrightarrow} \gstar.
	\end{equation}
	where: 
	\begin{equation*} 
		\Psi(x,\gamma) =  (x,-{\Adjointaction^*_{\gamma^{-1}}}(x)),\quad \mbox{and } \Projection_1(x,y)=  x .
	\end{equation*}
	Both of these maps are Poisson. The map  $\Psi=\Projection_1 \times J_2$ is Poisson because both $\Projection_1$,  the projection to the first component, and 
$J_2: T^*G \to \gstar$, acting as $(x,\gamma) \mapsto -{\Adjointaction^*_{\gamma^{-1}}}(x)$,
	are Poisson.  Since $\Im(\Psi)=\middleK$, the sequence of projections \eqref{Seq:TstarGgstargstargstar} gives a sequence of surjective Poisson $G$-invariant maps :
	\begin{equation}\label{Seq:TstarGgstargstargstar2}
		T^*G\stackrel{\Psi}{\longrightarrow}~\middleK  ~ \stackrel{\Projection_1}{\longrightarrow} \gstar.
	\end{equation}
	Passing to quotient spaces we get a sequence of surjective Poisson maps within stratified Poisson spaces:
	\begin{equation}\label{Seq:TstarGgstargstargstaroverG}
		T^*G/G\stackrel{\tilde{\psi}}{\longrightarrow}~ \middleK/G~ \stackrel{\tilde{\pi}}{\longrightarrow} \gstar/G.
	\end{equation}
	The last space $\gstar/G$ in this sequence has a trivial Poisson structure.
	Restricting these maps to the subspace $\Reducedxi$ we obtain \eqref{Seq:SxiPxiBxi}.

	\subsection{Symplectic leaves of $\Pxi$}\label{Sec:symplecticleavesinPxi}~
	
	In the product space $\gstar \times \gstar$, symplectic leaves are products of coadjoint orbits $\mathcal{O}_{x} \times \mathcal{O}_{y}$, where $x, y \in \gstar$. Single symplectic leaves of $\middleK$ are $\mathcal{O}_{x} \times \mathcal{O}_{-x}$ for $x \in \gstar$. When considering the reduced space $\Pxi= J_\middleK^{-1}(\Oxi )/G$, a single symplectic leaf within a stratum 
	$(\Pxi)_{\beta,(H)}$ 
	is identified as (a connected component of) the set     
	$$\SET{[(x,-\Adjointaction^*_{\gamma^{-1}}(x))] ~|~ x - \Adjointaction^*_{\gamma^{-1}}(x) \in \Oxi, G_{x,\Adjointaction^*_{\gamma^{-1}}(x)} \sim H}$$ 
	for some fixed $x \in (\Gamma_{\xi})_{\beta}$  (according to Theorem \ref{Thm:Poissonreducedspace} (4)). 
This set precisely corresponds to $\pi^{-1}[x] \cap (\Pxi)_{\beta, (H)}$. Thus, any connected component of the fibers of the submersion 
	$$\pi:~(\Pxi)_{\beta,(H)} \to (\Bxi)_{\beta}$$ 
	is indeed a single symplectic leaf in $(\Pxi)_{\beta,(H)}$.

\subsection{The refined stratification of $\Reducedxi$}\label{Sec:refineReducedxi}
Consider the \emph{primary stratification} of $\Reducedxi$. Let $(\Reducedxi)_{(H_1)}$ be a stratum of $\Reducedxi$ and $(\Pxi)_{\beta,(H_2)}$ be a stratum of $\Pxi$. If the image of $(\Reducedxi)_{(H_1)}$ under $\psi$ has a non-empty intersection with $(\Pxi)_{\beta,(H_2)}$, the relation \eqref{Eqt:inclusionofGstuff} implies that $H_1$ is conjugate to a subgroup of $H_2$, denoted by $(H_2) \partialless (H_1)$. Under this condition, we define the subspace $(\Reducedxi)_{(H_1),\beta,(H_2)} \subseteq (\Reducedxi)_{(H_1)}$ as:
$$
(\Reducedxi)_{(H_1),\beta,(H_2)}=(\Reducedxi)_{(H_1)}\cap \psi^{-1} (\Pxi)_{\beta,(H_2)}.
$$
Suppose that after an appropriate subdivision of $(\Reducedxi)_{(H_1),\beta,(H_2)}$, the resulting pieces are submanifolds of $(\Reducedxi)_{(H_1)}$ and inherit \emph{presymplectic} structures from the symplectic structure of the ambient stratum. These subsets collectively define a new stratification of $\Reducedxi$, which we call the \textbf{refined stratification} (see Section \ref{Apd:refinebyinverseimage} for further details). For simplicity, we continue to denote the refined strata by $(\Reducedxi)_{(H_1),\beta,(H_2)}$, although a more detailed indexing system may be required in specific contexts to distinguish individual components.

From every refined stratum we have a sequence of restricted maps:
\begin{equation}\label{Seq:fromRH1betaH2}
		(\Reducedxi)_{(H_1),\beta,(H_2)} \stackrel{\psi}{\longrightarrow} (\Pxi)_{\beta,(H_2)} \stackrel{\pi}{\longrightarrow} (\Bxi)_{\beta}.
	\end{equation}
Such sequences collectively constitute the sCMS system \eqref{Seq:SxiPxiBxi} (equipped with the  refined stratification).  It is anticipated that such a system  on  $\Reducedxi$ is superintegrable in the sense of Definition \ref{Def:stratifiedsuperintNEW}. Indeed, the superintegrability of this system   requires specific conditions on the initial Lie group $G$. In the current paper, we can confirm this criterion for the cases $G=\SU(2)$ and $\SU(3)$, as detailed in Sections \ref{Sec:SU(2)} and \ref{Sec:sCMSRPBSU3}, respectively.
 

	\subsection{Hamiltonian functions} 
 
For any function $h \in \Cinf(\Bxi)$, the pullback $(\pi \circ \psi)^*h \in \Cinf(\Reducedxi)$ defines a Hamiltonian function on the symplectic stratified space $\Reducedxi$ (endowed with its primary  stratification). For notational simplicity, we shall denote $(\pi \circ \psi)^*h$ simply by $h$. For each orbit type $H_1$ within the primary stratification, there exists a corresponding Hamiltonian vector field $\Hamiltonian{h} \in \mathfrak{X}((\Reducedxi)_{(H_1)})$ and an associated flow $\varphi_h^t$ on the stratum $(\Reducedxi)_{(H_1)}$. The following fact implies that    the flow $\varphi_h^t$ preserves the refined stratification of $\Reducedxi$. 
\begin{proposition}\label{Prop:tangenttorefinedstrata}
  The Hamiltonian vector field $\Hamiltonian{h}$ generated by $h \in \Cinf(\Bxi)$ is tangent to every refined stratum $(\Reducedxi)_{(H_1),\beta,(H_2)}$.

\end{proposition}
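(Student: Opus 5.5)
We work upstairs on $\Zxi=J^{-1}(\Oxi)\subset T^*G$ and then pass to the quotient. Lift $h$ to the $G$-invariant function $\bar h:=\Projection_1^*\tilde h$ on $T^*G\cong\gstar\times G$, where $\tilde h$ is a $G$-invariant (Casimir) function on $\gstar$ extending $h$. Thus $\bar h(x,\gamma)=\tilde h(x)$. By Theorem \ref{Thm:SLReduced} (6), the Hamiltonian vector field $\Hamiltonian{h}$ on a primary stratum $(\Reducedxi)_{(H_1)}$ is the pushforward of the restriction of $\Hamiltonian{\bar h}$ to $\Zxi\cap (T^*G)_{(H_1)}$. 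By Lemma \ref{lemma:tangentproperty}, that restriction is tangent there. So it suffices to show that the flow $\bar\varphi^t$ of $\Hamiltonian{\bar h}$ preserves the two additional conditions defining a refined stratum: the class $\beta$ of $x$ in $\Bxi$, and the conjugacy class of $G_{x,\Adjointaction^*_{\gamma^{-1}}(x)}$.

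The key computation is the explicit flow. Since $\tilde h$ is a Casimir of $\gstar$ and the first component $\Projection_1$ is a Poisson map $T^*G\to\gstar$, we have $\{\bar h,\Projection_1^*f\}=\Projection_1^*\{\tilde h,f\}=0$. Hence $x$ is constant along the flow. For the $\gamma$-component, the standard formula for collective Hamiltonians of right-trivialized $T^*G$ gives $\bar\varphi^t(x,\gamma)=(x,\exp(t\,d\tilde h(x))\gamma)$, with $d\tilde h(x)\in\galgebra$. Up to the sign convention for the trivialization, this is left multiplication by a one-parameter subgroup. The element $X:=d\tilde h(x)$ is fixed by $G_x$, since $\tilde h$ is invariant, so $\exp(tX)$ lies in the center of the identity component of $G_x$. (Equivalently, $\exp(tX)\in G_x$ and it commutes with $G_x^0$.) Consequently $x$, and therefore its class $[x]\in(\Bxi)_\beta$, is unchanged. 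Moreover,
$$\Adjointaction^*_{(\exp(tX)\gamma)^{-1}}(x)=\Adjointaction^*_{\gamma^{-1}}\Adjointaction^*_{\exp(-tX)}(x)=\Adjointaction^*_{\gamma^{-1}}(x),$$
so the point $\Psi(x,\gamma)=(x,-\Adjointaction^*_{\gamma^{-1}}(x))\in\middleK$ is literally fixed by the flow. Hence $\psi\circ\varphi^t_h=\psi$ on each primary stratum, which is stronger than needed. It implies $\varphi^t_h$ maps $(\Reducedxi)_{(H_1)}\cap\psi^{-1}(\Pxi)_{\beta,(H_2)}$ into itself, because the primary stratum is preserved by Lemma \ref{lemma:tangentproperty} (2) and the $\psi$-value is constant. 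Differentiating gives tangency, provided the refined stratum is a submanifold, as assumed in Section \ref{Sec:refineReducedxi}. If the refined strata were further subdivided into connected components, continuity of the flow keeps points in the same component.

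The main obstacle is to justify that $\Hamiltonian{h}$ is well defined, independent of the chosen extension. Here $h$ is only a smooth function on $\Bxi=\Gamma_\xi/G$, so it is a restriction of an invariant smooth function on $\gstar$ (by the definition of the sheaf in Theorem \ref{Thm:SLReduced} (4) applied to $\gstar/G$). Different extensions differ by functions vanishing on $\Gamma_\xi$. By Theorem \ref{Thm:SLReduced} (6), their reduced Hamiltonian vector fields agree on $\Reducedxi$. Alternatively, one can avoid the explicit flow: $\psi$ is Poisson, $\pi^*h$ is a Casimir on each stratum of $\Pxi$ (the fibers of $\pi$ are symplectic leaves, Section \ref{Sec:symplecticleavesinPxi}), and so $\psi_*\Hamiltonian{h}=\Hamiltonian{\pi^*h}=0$. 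This is the same conclusion $\psi\circ\varphi_h^t=\psi$, and I would present this as the conceptual argument with the explicit flow as confirmation.
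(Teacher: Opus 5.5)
Your argument is correct and is essentially the paper's: the flow of the invariant lift is $(x,\gamma)\mapsto(x,\exp(tX)\gamma)$ with $\exp(tX)\in G_x$, so $x$ and $\Adjointaction^*_{\gamma^{-1}}(x)$ are fixed, hence $\psi$ is constant along the flow, and together with Lemma \ref{lemma:tangentproperty} this keeps points in their refined stratum. The one difference is how $\exp(tX)\in G_x$ is obtained: you take $X=d\tilde h(x)$ for a smooth invariant extension $\tilde h$, where it follows most directly from infinitesimal invariance, $\mathrm{ad}^*_{d\tilde h(x)}x=0$, whereas the paper uses a power-series computation for polynomial invariants $X\in S^n(\galgebra)^G$ and then a polynomial-approximation step that your version avoids.
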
 

\begin{proof}
  Since we have identified $T^*G$ with $ \gstar\times G$ (via right translations), the algebra   $C^\infty(T^*G)$ is generated by two types of basic functions:  (1) vectors $X\in \galgebra$, seen as linear functions on $\gstar$ and on $T^*G$ as well; (2) functions $f\in C^\infty(G)$, seen as functions on $T^*G$ which do not change along the $\gstar$-direction. As for the Poisson bracket relations, $\SET{X_1,X_2}$ coincides with $[X_1,X_2]_{\galgebra}$, $\SET{f_1,f_2}$ is simply zero, and $\SET{X,f}$ is $\overrightarrow{X}(f)$. Therefore, the Hamiltonian vector field $\Hamiltonian{X}$ of $X\in \galgebra$ on $T^*G\cong \gstar \times G$ is expressed as follows: \begin{equation}\label{Eqt:HamiltonianXsingle}
\Hamiltonian{X}|_{(x,\gamma)}=(-\mathrm{ad}_X^*(x),\overrightarrow{X}|_{\gamma}),\quad \forall (x,\gamma)\in \gstar\times G.            
         \end{equation}
Here we have identified $ T_{(x,\gamma)}(\gstar\times G)$ with $\gstar\times T_{\gamma}G$.
 One could also consider 
 $$X=\sum_i X_{i_1}\odot X_{i_2} \odot \cdots \odot X_{i_n}\in S^n(\galgebra), \quad(\mbox{the $n$-th symmetric product of }\galgebra)  $$
where every $X_{\cdot}$ is a vector in $\galgebra$. This $X$ is also regarded as a function on $\gstar$, and on $\gstar\times G$ as well. Based on \eqref{Eqt:HamiltonianXsingle}, we can inductively show that
\begin{equation}\label{Eqt:HamiltonianXmulti}
\Hamiltonian{X}|_{(x,\gamma)}=
(-\mathrm{ad}_{x \llcorner X}^*(x),\overrightarrow{x\llcorner X}|_{\gamma}).            
         \end{equation}
         Here, we define 
         $$x\llcorner X:=\sum_i \sum_{j=1}^n \langle X_{i_1},x \rangle \langle X_{i_2},x \rangle   \cdots \widehat{X_{i_j}} \cdots \langle X_{i_n}, x \rangle X_{i_j} \in \galgebra. $$

When $X\in S^n(\galgebra)$ is $G$-invariant, we can regard $X$ as a function on $\gstar/G$, and also on $\Bxi$.  However, $X$ being $G$-invariant implies that 
\begin{equation}\label{Eqt:tempXGinvairantzero}
 [X,Y]:=\sum_i \sum_{j=1}^n  X_{i_1}  \odot X_{i_2}    \cdots \widehat{X_{i_j}} \cdots \odot X_{i_n} \odot [X_{i_j},Y]=0,\quad \forall Y \in \galgebra,
\end{equation}
and hence 
$\mathrm{ad}_{x \llcorner X}^*(x)=0$. Therefore, according to \eqref{Eqt:HamiltonianXmulti}, only the second component,  i.e., the $T_xG$-direction, of $\Hamiltonian{X}|_{(x,\gamma)}$  is nontrivial. The flow generated by $\Hamiltonian{X}$ is given by
\begin{equation}\label{Eqt:flowX} \varphi^t_X(x,\gamma)=(x,(\exp(t ~x\llcorner X))\gamma).
\end{equation}
 In this expression, the element $\exp(t ~x\llcorner X)$ belongs to $G_x$ because we can show the following identity:
 \begin{equation}\label{Eqt:temphere}
  \langle Y, \Adjointaction^*_{\exp(t ~x\llcorner X)}(x) \rangle \equiv \langle Y, x \rangle,\quad \forall Y\in \galgebra, t\in \mathbb{R}.
 \end{equation}

  In fact, the left-hand side of Equation \eqref{Eqt:temphere} can be alternatively computed by:
  \begin{eqnarray*}
    \langle  \Adjointaction_{\exp(-t ~x\llcorner X)}(Y),x \rangle &=& \langle  \exp( -t ~\mathrm{ad}_{x\llcorner X} ) Y,x \rangle \\
      &=& \langle Y, x \rangle+\sum_{j=1}^\infty \frac{(-t)^j}{j!} \langle \mathrm{ad}^j_{x\llcorner X}Y, x \rangle.
  \end{eqnarray*}
  It can be inductively shown that
  \begin{equation}\label{Eqt:tempadtoj}
   \langle \mathrm{ad}^j_{x\llcorner X}Y, x \rangle= {[X,\mathrm{ad}^{j-1}_{x\llcorner X}Y ]} (x),\qquad j\geqslant 1.
   \end{equation}
  Here the right-hand side is interpreted as the value of the function $[X,\mathrm{ad}^{j-1}_{x\llcorner X}Y ]\in S^n{(\galgebra)}\subset C^\infty(\gstar)$ on $x\in \gstar$. However, since we have assumed the $G$-invariant property of $X$, namely Equation \eqref{Eqt:tempXGinvairantzero}, the above Equation \eqref{Eqt:tempadtoj} simply yields zero, and \eqref{Eqt:temphere} is justified.
  
  Finally, from Equation \eqref{Eqt:flowX}  and $\exp(t ~x\llcorner X) \in G_x$, it can be easily seen that   $[(x,\gamma)]\in (\Reducedxi)_{(H_1),\beta,(H_2)}$ implies $\varphi^t_X[(x,\gamma)]=[\varphi^t_X(x,\gamma)]\in (\Reducedxi)_{(H_1),\beta,(H_2)}$ as well. Since every function $h \in \Cinf(\Bxi)$ can be (locally) approximated by  polynomial functions on $ \gstar$, the flow $\varphi^t_h$ satisfies the same property.
   
\end{proof}
   
\subsection{The special case of $G=\SU(2)$}\label{Sec:SU(2)} The sCMS system  arising from the special unitary group  \( G = \SU(2) \) and its Lie algebra \( \mathfrak{g} \cong \mathfrak{g}^* = \su(2) \) can be easily derived.  Here we sketch the construction. Suppose that $\Reducedxi$  consists of pairs $[(x,\gamma)]$ where $x$ is diagonalized as $$x=\imagunit \diag(p,-p),\qquad \mbox{ for some }p\geqslant 0.$$
\begin{itemize}
  \item[\textbf{(I)}]       
If $\xi\in \su(2)$ is simply zero, then $x$ and $\gamma$ commute, and hence $\gamma$ can   also be diagonalized, say $\gamma=\diag(e^{\imagunit \alpha},e^{-\imagunit \alpha})$. Therefore, in the first step,
  the reduced space $\Rxizero$ is stratified by its stabilizers as follows:

         $\bullet$ \textit{The maximum stratum}  $(\Rxizero)_{(\Cartan)}$ corresponds to the orbit type $\Cartan$  which is the Cartan subgroup of $\SU(2)$ consisting of diagonal matrices. Indeed, an element $[(x,\gamma)]$ $\in (\Rxizero)_{(\Cartan)}$ can be either of the form $x=\imagunit \diag(p,-p)$ with   $p> 0$, or $\gamma=\diag(e^{\imagunit \alpha},e^{-\imagunit \alpha})$   with $e^{\imagunit \alpha} \neq \pm 1$.
     
     $\bullet$ \textit{The minimum stratum} $(\Rxizero)_{(\SU(2))}$ corresponds to the trivial orbit type $\SU(2)$, and it consists of two points: $[(0,\diag(1,1))]$ and $[(0,\diag(-1,-1))]$.
   
   The space  $\Bxizero$ (consisting of the conjugacy classes $[x]$)  is stratified by two pieces: 
   $$(\Bxizero)^1=\SET{[x= \imagunit \diag(p,-p)]~|~ p> 0},\qquad \mbox{and } (\Bxizero)^0=\SET{0}.$$
  
  The stratified space $\Pxizero=(\Pxizero)^1\sqcup (\Pxizero)^0  $ (consisting of pairs $[(x,-x)]$) can be identified with $\Bxizero$. So, the refined stratification of $\Rxizero$ is as follows. First, $(\Rxizero)_{(\Cartan)}$ is divided into two pieces:
\begin{eqnarray*}  && (\Rxizero)^2_{(\Cartan)  }:=(\Rxizero) _{(\Cartan)}\cap \psi^{-1}(\Pxizero)^1 \\
&=&\SET{[(x=\imagunit \diag(p,-p),\gamma=\diag(e^{\imagunit \alpha},e^{-\imagunit \alpha}))]|p>0}~(\cong \mathbb{R}^{>0}\times \Sone),\\
&& (\Rxizero)^1_{(\Cartan)} := (\Rxizero) _{(\Cartan)}\cap \psi^{-1}(\Pxizero)^0   
\\&=&\SET{[(x=0,\gamma=\diag(e^{\imagunit \alpha},e^{-\imagunit \alpha}))]~|~e^{\imagunit \alpha} \neq \pm 1  }~(\cong (0,\pi)).
\end{eqnarray*}  
Because   $(\Rxizero)_{(\SU(2))}\cap \psi^{-1}(\Pxizero)^0=(\Rxizero)_{(\SU(2))}$, it remains unchanged. Let us denote the only two points in $(\Rxizero)_{(\SU(2))}$ by $(\Rxizero)^0_{(+)}=[(0,\diag(1,1))]$ and $(\Rxizero)^0_{(-)} =[(0,\diag(-1,-1))]$.

The diagram below shows how these refined strata are linked, and the four underlying  sub-superintegrable systems can be read off directly. 

\begin{equation}
	\label{Diagram:SU2arrows00NEW}
	\begin{tikzcd}
		(\Rxizero)^2_{(\Cartan)  }  \arrow[d,dashed] \arrow[rr,"\psi"] && (\Pxizero)^1      \arrow[d,dashed] \arrow[rr ,"\pi(\cong)"]   && (\Bxizero)^1  \arrow[d,dashed]   \\
		(\Rxizero)^1_{(\Cartan)} \arrow[rr,"\psi" ]\arrow[d,dashed] \arrow[dd,bend right=55, dashed] && (\Pxizero)^0    \arrow[rr,"\pi(\cong)" ]  && (\Bxizero)^0  \\ (\Rxizero)^0_{(+)} \arrow[rru,"\psi"] && &&\\
(\Rxizero)^0_{(-)}. \arrow[rruu,"\psi"]&& &&
	\end{tikzcd}
\end{equation}

Elements in the first row of the sequence constitute  a strong classical superintegrable system. The second row starts from an odd-dimensional object. So the whole system can only be regarded as superintegrable in the sense of Definition \ref{Def:stratifiedsuperintNEW}.

  \item[\textbf{(II)}]  If $\xi\in \su(2)$ is not zero, then we  assume that $\xi=\imagunit \diag(a,-a)$ for some $a>0$. Suppose that $(x,\gamma)$  satisfies   
      \begin{equation}\label{Eqt:tempSU2}
         x-\gamma^{-1} x\gamma=z \sim \xi=\imagunit \diag(a,-a),
      \end{equation} where $x$ has eigenvalues $\imagunit p$ and $-\imagunit p$, for some $p>0$ and $z$ has eigenvalues $\imagunit a$ and $-\imagunit a$. According to Theorem \ref{Thm:2X2Horn}, such a pair $(x,\gamma)$ exists if and only if $p\geqslant \frac{a}{2}$. Thus  $\Bxi $ is stratified into the maximum stratum
      $$(\Bxi)^1=\SET{[x=\imagunit \diag(p,-p)]|p>\frac{a}{2}},
      $$
      and the minimum one $$(\Bxi)^0=\SET{[x=\imagunit \diag(\frac{a}{2},-\frac{a}{2})] }.
      $$

      The stratifications of $\Pxi $ and $\Reducedxi $ are accordingly determined as follows:
      \begin{itemize}
        \item[$\bullet$]  The stratum $ (\Pxi)^1$ ($=\pi^{-1}(\Bxi)^1$) consists of    equivalence classes  $[(x,-\gamma^{-1}x\gamma)]$ where $x  =\imagunit \diag(p,-p)$, $ p>\frac{a}{2}$, and 
 $\gamma^{-1}x\gamma$ is of the form
 $$\gamma^{-1}x\gamma=\begin{pmatrix}
		 \imagunit u    & v    \\
		-v 		    &  -\imagunit u	 
	\end{pmatrix}
 $$  
 where $u$, $v$ are two real numbers, and $v>0$. Since  we have the relations $u^2+v^2=p^2$ and $(p-u)^2+v^2=a^2$, this pair $(u,v)$ is indeed uniquely determined by $p$.
 \item[$\bullet$] The stratum $(\Pxi)^0$ ($=\pi^{-1}(\Bxi)^0$) consists of a single point  $[(x_0,-y_0)] $, where $x_0=\imagunit \diag(\frac{a}{2},-\frac{a}{2})$ and $y_0=\imagunit \diag(-\frac{a}{2}, \frac{a}{2})$.
     \item[$\bullet$] The reduced space $\Reducedxi $ is a genuine smooth manifold and it corresponds to the stabilizer $$H=\SET{\diag(1,1),\diag(-1,-1)} .$$
 \item[$\bullet$] As for the refined stratification of $\Reducedxi $, we have the maximum stratum $(\Reducedxi) ^2=\psi^{-1}(\Pxi) ^1$,    which consists of equivalence classes $[(x,\gamma)]$ of  pairs $(x,\gamma)$ satisfying Equation \eqref{Eqt:tempSU2},   where $x=\imagunit \diag(p,-p)$, $p>\frac{a}{2}$, and every entry $\gamma_{ij} $ of $\gamma$ is nonzero.  We also have the minimum stratum $(\Reducedxi) ^0=\psi^{-1}(\Pxi) ^0$, which  consists of a single point  $[(x_0,\gamma_0)]$     where $x_0=\imagunit \diag(\frac{a}{2},-\frac{a}{2})$   and   $$\gamma_0= \begin{pmatrix}
		 0    & 1    \\
		-1 		    &  0	 
	\end{pmatrix}.$$
\item[$\bullet$]
In summary, we obtain   two sub-superintegrable systems as presented in the following diagram.
\begin{equation}
	\label{Diagram:SU2arrows000NEW}
	\begin{tikzcd}
		(\Reducedxi)^2   \arrow[d,dashed] \arrow[rr,"\psi"] && (\Pxi)^1      \arrow[d,dashed] \arrow[rr ,"\pi"]   && (\Bxi)^1  \arrow[d,dashed]   \\
		(\Reducedxi)^0  \arrow[rr,"\psi" ] && (\Pxi)^0    \arrow[rr,"\pi" ]  && (\Bxi)^0  .  
	\end{tikzcd}
\end{equation}

      \end{itemize}
\end{itemize}

	\section{Spin Calogero-Moser-Sutherland  systems   for   $ \SUthree$}\label{Sec:sCMSRPBSU3}

In this section, we analyze   specific sCMS systems arising from the special unitary group  \( G = \SU(3) \) and its Lie algebra \( \mathfrak{g} \cong \mathfrak{g}^* = \su(3) \).

\begin{notation}
	\begin{itemize}
		\item Let $\Cartan$ be the Cartan subgroup within $\SU(3)$. Specifically, $\Cartan$ consists of diagonal matrices $\diag( e^{\imagunit  a},e^{\imagunit  b},e^{ \imagunit  c})$ satisfying $a+b+c=0 \pmod {2\pi}$. 
		\item Denote by $\cartan$ the Cartan subalgebra of $\su(3)$, which comprises   matrices of the form $\imagunit \cdot \diag(a,b,c)$ with $a+b+c=0$. 
		\item We define $\Hreg$ as the set of regular diagonalized matrices in $\SU(3)$, characterized as the open subset of $\Cartan$ containing matrices with distinct diagonal elements.
		\item Similarly, $\hreg$ denotes the set of regular elements within $\cartan$.
	\end{itemize}
\end{notation}

\subsection{Some facts}

\subsubsection{Stabilizers in $\SUthree$}\label{Sec:stabilizerinSUthree}

For a matrix $r\in \SUthree$ or $\su(3)$, we denote  by \( G_r \subset \SUthree \)   the stabilizer of \( r \), which is the group of special unitary matrices that commute with \( r \). 

\begin{itemize}
  \item[(1)]  For any $x\in \Hreg$ (or $\hreg$), we have   $G_x$   $=\Cartan$.
  \item[(2)]  For any $x\in \Cartan$ (or $\cartan$), if  among the three diagonal elements of $x$,  two of them coincide and they are not equal to the third one, then we have
\begin{equation}\label{Eqt:Kgroup}
	G_x\sim	K:=\Big\{
	\begin{pmatrix}
		{r_{11}}     		&   r_{12}				 & 0  \\
		{r_{21}}	    &  {r_{22}}    & 0  \\
		0					 					    & 0 										 & e^{\imagunit  \alpha}
	\end{pmatrix}\in \SU(3)  \Big\}.
\end{equation}

  \item[(3)]  If the three diagonal elements of $x\in \Cartan$ (or $\cartan$) are all equal, then we have  $G_x$    $=\SUthree$.
\end{itemize}
	
\subsubsection{Subgroups of $\Cartan$}
\begin{notation}
  We shall need the following subgroups of $\Cartan$:
 $$
  H(2,1):=\SET{\diag( e^{\imagunit  \alpha},e^{\imagunit  \alpha},e^{-2\imagunit  \alpha})~|~\alpha\in \mathbb{R}}~(\cong \Sone) ,
  $$
  $$
  H(1,2):=\SET{\diag( e^{-2\imagunit  \alpha},e^{\imagunit  \alpha},e^{ \imagunit  \alpha})~|~\alpha\in \mathbb{R}}~(\cong \Sone) ,
  $$
  and
   $$
  H(3):= \SET{\diag( \omega,\omega,\omega )~|~\omega^3=1}~(\cong \Uthree)  .
  $$
  Here $\Uthree$ denotes the set of $s\in \mathbb{C}$ satisfying $s^3=1$.   Note that $H(3)$ is the center of $\SUthree$, and $H(3)\subset H(2,1)\cap H(1,2)$. Note also that $H(2,1)\sim H(1,2)$. 
\end{notation}
The following two propositions can be proved by direct verification.
\begin{proposition}\label{Prop:firstandsecondmethodg0}
   Let $r\in \SUthree$ or $ \su(3)$ be a matrix. 
   \begin{itemize}
     \item[(1)]If $r$ is diagonal, then 
     $$
   \Cartan  \cap G_r = \Cartan.
     $$ 
     \item[(2)]If $r$   takes any of the following forms or their transposes, where $\neq 0$ denotes a nonzero complex number and   $\star$ represents an arbitrary complex number:
     $$
     \begin{pmatrix}
     	\star     		&   \neq 0				 &   0  \\
     	\star	    &  \star    &   0  \\
     	0					 					    &  		0								 & \star
     \end{pmatrix},\mbox{ } \begin{pmatrix}
     	\star     		& 0   				 & \neq 0  \\
     	 0					 					    &  		\star								 & 0\\
     	\star	    &    0    & \star  \\
     \end{pmatrix},\mbox{ } \begin{pmatrix}
     \star     		& 0   				 &  0  \\
     0					 					    &  	\star	& \neq 0								 \\
      0    &    \star    & \star  \\
     \end{pmatrix},
     $$ then 
     $$
     \Cartan  \cap G_r \sim H(2,1).
     $$
     \item[(3)] If $r$ is not included in the previous cases (1) or (2), then we have
     $$
     \Cartan  \cap G_r = H(3).
     $$ 
   \end{itemize}
\end{proposition}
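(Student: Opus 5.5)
The plan is to compute $\Cartan\cap G_r$ directly, using the conjugation action of a diagonal matrix on the entries of $r$. Let $h=\diag(e^{\imagunit a},e^{\imagunit b},e^{\imagunit c})\in\Cartan$ with $a+b+c\equiv 0 \pmod{2\pi}$. Then
$$(h r h^{-1})_{jk}=e^{\imagunit(\theta_j-\theta_k)}r_{jk},\qquad (\theta_1,\theta_2,\theta_3)=(a,b,c).$$
So $h\in G_r$ if and only if $e^{\imagunit\theta_j}=e^{\imagunit\theta_k}$ for every off-diagonal pair $(j,k)$ with $r_{jk}\neq 0$. This holds equally for $r\in\SUthree$ (conjugation) and $r\in\su(3)$ (adjoint action). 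The intersection is therefore determined entirely by the \emph{support graph} $\Gamma_r$ on the vertices $\{1,2,3\}$, which has an edge $\{j,k\}$ whenever $r_{jk}\neq0$ or $r_{kj}\neq0$. Concretely, $h\in G_r$ exactly when the eigenvalues $e^{\imagunit\theta_j}$ are constant on each connected component of $\Gamma_r$.

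Next we classify $\Gamma_r$ by its number of components, which is 3, 2 or 1.

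\textbf{Three components.} The graph has no edges, so $r$ is diagonal. There is no constraint on $h$, and $\Cartan\cap G_r=\Cartan$. This gives case (1).

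\textbf{Two components.} Exactly one pair $\{j,k\}$ is joined, and the remaining index is isolated. These are precisely the three displayed patterns, or their transposes: at least one of $r_{jk},r_{kj}$ is nonzero and all other off-diagonal entries vanish. The constraint is $e^{\imagunit\theta_j}=e^{\imagunit\theta_k}$. Together with the determinant condition this gives a circle of the form $\diag(e^{\imagunit\alpha},e^{\imagunit\alpha},e^{-2\imagunit\alpha})$ up to ordering of the coordinates, i.e.\ $H(2,1)$ after a permutation. Permutation matrices (adjusted by a sign so that they lie in $\SUthree$) normalize $\Cartan$ and realize this conjugation. Hence $\Cartan\cap G_r\sim H(2,1)$, which is case (2).

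\textbf{One component.} The graph is connected, so all three eigenvalues coincide, $e^{\imagunit a}=e^{\imagunit b}=e^{\imagunit c}=\omega$. The determinant condition then gives $\omega^3=1$. This yields $H(3)$, which is case (3). Conversely, $H(3)$ is central, so it is always contained in $G_r$.

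The only point requiring care is matching ``not in cases (1) or (2)'' with ``$\Gamma_r$ connected''. I will check this by noting that any matrix outside cases (1) and (2) has nonzero off-diagonal entries in at least two distinct index pairs. Any two distinct pairs in $\{1,2,3\}$ share a vertex, so $\Gamma_r$ is connected. No real obstacle is expected; the argument is a routine entrywise verification.
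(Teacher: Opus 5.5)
Your proof is correct and is essentially the ``direct verification'' the paper invokes without writing it out: the identity $(hrh^{-1})_{jk}=e^{\imagunit(\theta_j-\theta_k)}r_{jk}$ reduces everything to the connectivity of the support graph, and your observation that any two distinct index pairs in $\{1,2,3\}$ share a vertex correctly identifies case (3) with the connected case. The sign-adjusted permutation matrices lie in $\SUthree$ and conjugate the circles for the pairs $\{1,3\}$ and $\{2,3\}$ onto $H(2,1)$, as case (2) requires.
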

\begin{proposition}\label{Prop:typeII}
    Let $r\in \SU(3)$ be a special unitary   matrix. 
     \begin{itemize}
     \item[(1)]
     If every row of $r$ has only one nonzero entry (which must be a unimodular number $e^{\imagunit  \alpha}\in \Sone$), then
     $$
     \Cartan \cap r^{-1} \Cartan r= \Cartan.
     $$    
     \item[(2)]If there is a unimodular entry $r_{ij}=e^{\imagunit  \alpha}$  (hence the other four positions in the $i$-th row or the $j$-th column are all zero), and the remaining four positions neither in the $i$-th row nor the $j$-th column are all nonzero complex numbers, for example
         $$
         r=\begin{pmatrix}
     	\neq 0     		&   \neq 0				 &   0  \\
     	0	    &  0    &   e^{\imagunit \alpha}  \\
   \neq  	0					 					    &  		\neq 0								 & 0
     \end{pmatrix},
         $$
                  then 
                  $$
     \Cartan \cap r^{-1} \Cartan r\sim H(2,1).
     $$ 
     \item[(3)] If $r$ is not included in the above two cases (1) or (2) (in other words, none of the entries $r_{ij}$ is unimodular), then every row or column of $r$ has at least two nonzero entries, and we have $$
     \Cartan \cap r^{-1} \Cartan r= H(3).
     $$ 
   \end{itemize}
\end{proposition}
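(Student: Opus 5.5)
The plan is to reduce both intersections to a condition on diagonal elements and then read off the answer from the zero pattern of $r$. For $h=\diag(h_1,h_2,h_3)\in\Cartan$, the condition $h\in r^{-1}\Cartan r$ means $rhr^{-1}\in\Cartan$. Equivalently, there is $d=\diag(d_1,d_2,d_3)\in\Cartan$ with $rh=dr$. Entrywise this reads $r_{ij}h_j=d_ir_{ij}$. So whenever $r_{ij}\neq 0$ we must have $h_j=d_i$, and $d$ is then forced by $h$ wherever row $i$ is nonzero. Since $r$ is unitary, every row and every column contains a nonzero entry. Hence $\Cartan\cap r^{-1}\Cartan r$ consists of those $h\in\Cartan$ such that, for each row $i$, all $h_j$ with $r_{ij}\neq0$ coincide. (The determinant condition on $d$ is then automatic, because $d$ is a permutation of $h$ or is conjugate to it.)

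We then introduce the bipartite graph $\Gamma_r$ on row and column indices, with an edge $i$--$j$ whenever $r_{ij}\neq 0$. The condition above says that $h_j$ is constant on the column indices of each connected component of $\Gamma_r$. It remains to classify the components, using only that $r$ is unitary.

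\textbf{Case (1).} Each row has exactly one nonzero entry. Unitarity then makes $r$ a monomial matrix, so $\Gamma_r$ is a perfect matching. Every column forms its own component, there is no constraint on $h$, and the intersection is all of $\Cartan$.

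\textbf{Case (2).} There is a unimodular entry $r_{ij}$. Its row and column are then zero elsewhere, since rows and columns are unit vectors. The complementary $2\times2$ block is unitary. If this block has all four entries nonzero, $\Gamma_r$ has exactly two components: $\{i,j\}$ and a second component joining the two remaining columns. The constraint is that the two remaining $h$'s are equal. This gives a subgroup conjugate, by a permutation in the Weyl group, to $H(2,1)$.

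\textbf{Case (3).} No entry is unimodular. We first show that each row and each column has at least two nonzero entries: a single nonzero entry in a row or column would have modulus $1$. Next we show that $\Gamma_r$ is connected. Suppose not. Then some component would contain column set $S$ and row set $T$ with $|S|=|T|$, since $r$ restricted to it is unitary. With $3$ columns this forces $|S|\in\{1,2\}$ for one of the components. If $|S|=1$, that column has a single nonzero entry, contradicting the previous step. If $|S|=2$, the complementary component has $|S|=1$, giving the same contradiction. So $\Gamma_r$ is connected, all $h_j$ are equal, and together with $\det h=1$ this gives $h\in H(3)$.

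The main obstacle is this connectivity argument in Case (3), namely the reduction to the size count on components. Once $\Gamma_r$ is known to be connected, the rest of the proof is bookkeeping.
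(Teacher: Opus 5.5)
Your proof is correct, and it is essentially the ``direct verification'' the paper invokes without giving details, organized cleanly through the support condition $r_{ij}(h_j-d_i)=0$ that comes from $rh=dr$. The only unstated step is that ``not in cases (1) or (2)'' really means ``no unimodular entry.'' This follows from your own Case (2) observation that the complementary $2\times 2$ block is unitary: a $2\times 2$ unitary block with a zero entry is monomial, so $r$ then falls into case (1).
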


\subsubsection{Classification of $\xi$}

 Standard Lie theory implies that any $\xi \in \su(3)$ can be conjugated into the closed subset $S \subset \Cartan$:
   
\begin{figure}[htbp]   \centering  
  \scalebox{0.6}{ 
      \begin{tikzpicture}[>=stealth, line width=0.8pt, font=\large]
            \coordinate (O) at (0,0);
       \def\a{60}    
      \def\b{0}     
      \def\c{-60}   
      \def\d{-120}  
      \def\e{180}   
      \def\f{120}   
      \def\lp{30}   
      \def\lm{90}  
      \filldraw[green,line width=2](0.05,0.1)-- (5.2,3.1) arc (30:89.5:6);
      \draw[black] (1.5,5.7) node[right] {$S$};
\draw[black] (-0.1,-0.3) node[left] {0};
      \draw[->] (O) -- ++(\a:4.8) node[above right] {$\vec{c}= \vec{\alpha}_{1}+\vec{\alpha}_{2}$};
      \draw[->] (O) -- ++(\b:4.7) node[right] {$\vec{\alpha}_{1}$};
      \draw[->] (O) -- ++(\c:4.7) node[below right] {$-\vec{\alpha}_{2}$};
      \draw[->] (O) -- ++(\d:4.7) node[below left] {$-\vec{\alpha}_{1}-\vec{\alpha}_{2}$};
      \draw[->] (O) -- ++(\e:4.7) node[left] {$-\vec{\alpha}_{1}$};
      \draw[->] (O) -- ++(\f:4.7) node[above right] {$\vec{\alpha}_{2}$};
      \draw[red,   ->] (O) -- ++(\lp:2.9) node[below right, red] {$\vec{\omega}_1$};
      \draw[red,  ->] (O) -- ++(\lm:2.9) node[below left, red] {$\vec{\omega}_2$};
      \draw[black,dashed] (0.05,0.05)--(89.5:6.5)
      node[below right, blue] {${l_2}$};
      \draw[black,dashed] (0.05,0.05)--(31:6.5)
      node[below right, blue] {${l_1}$};
    \end{tikzpicture}
    }
  \caption{The stratified space $S\cong \su(3)/\SU(3)$\\~} \label{Picture:chamber}~ 
\end{figure}
 
Specifically, in the above figure, the six vectors denoted as $\vec{\alpha}_{1}$, $\vec{\alpha}_{2}$ and their combinations represent the standard roots; for instance, $\vec{\alpha}_{1}=  \diag(1, -1,0)$, $\vec{\alpha}_{2}= \diag(0,1, -1)$, and $$\vec{c}:=\vec{\alpha}_{1}+\vec{\alpha}_{2}=  \diag(1,0,-1) .$$ Denote   by $$\vec{\omega}_1=  \frac{1}{3} \diag(2,-1,-1) ~\mbox{ and }~  \vec{\omega}_2= \frac{1}{3} \diag(1,1,-2)$$ the two weight vectors shown   in the figure.

Let us use the peak point $0$,   the slanted  `wall'  ${l_1}:=\SET{\imagunit d\vec{\omega}_1, ~ d>0}$,   and the vertical `wall' ${l_2}=\SET{\imagunit d\vec{\omega}_2, ~ d>0}$   to enclose the region  $S$ (the green area) as illustrated   in Figure \ref{Picture:chamber}. This $S$ can be identified with the quotient space $\su(3)/\SU(3)$, and it is clearly stratified: 
\begin{equation}\label{Eqt:su3quotientstratified}
\su(3)/\SU(3)\cong S =S_o \sqcup {l_1} \sqcup {l_2} \sqcup \SET{0},
\end{equation}
where $S_o$ denotes the interior,  i.e., the regular part. This stratification   is also determined by the stabilizer relationships. Indeed, since elements in $S_o$ are regular, they are of orbit type $ \Cartan$; elements on the walls ${l_1}\sqcup {l_2} $ have orbit type $K$, while the peak point $0$ is the only point with orbit type $\SU(3)$.

Let $\xi = \imagunit \cdot \diag(a, b, c) \in S$, where $a, b, c \in \mathbb{R}$ and satisfy the constraint $a + b + c = 0$. We call $\xi$   \textbf{generic} if it lies in $S_o$ but not in the $\vec{c}=\vec{\alpha}_1+\vec{\alpha}_2$  direction; for such $\xi$,   the values $a$, $b$, and $c$ are distinct and nonzero. 

We analyze the reduced phase space \( \Reducedxi \). The dimension of a stratum in $\Reducedxi$, specifically $(\Reducedxi)_{(H)}$, is generally an even number that depends on the choice of $\xi$ and the orbit type $H$.  
As will be shown in the following Sections \ref{subSec:xi=0} - \ref{SubSec:xiinalpha13},  we exhibit  four distinct cases: when $\xi$ is \textit{zero} (the peak point), when $\xi$ is \textit{generic}, when $\xi$ lies \textit{on the wall} ${l_1}$ (or ${l_2}$), and when $\xi$ is \textit{in the $\vec{\alpha}_1+\vec{\alpha}_2$ direction}.

\subsubsection{The triple $(x,\gamma,z)$}

For $T^*G\cong \su(3)\times \SU(3)$, 
  	the associated moment map $J_{T^*G}: \su(3)\times \SU(3)\to \su(3)$  given by 
		$J_{T^*G}(x, \gamma)=x- {\gamma^{-1}}x\gamma $ (see Equation \eqref{Eqt:JTstarGgeneral}) is surjective (by Proposition \ref{Prop:SUnmomentmapsurjective}). 

Fixing an element $\xi\in \su(3)$,  our focus is on triples \( (x, \gamma, z) \), where \( x \in \su(3) \), \( \gamma \in \SU(3) \), and \( z \in \mathcal{O}_\xi \subset \su(3) \), which satisfy the equation  
\begin{equation}
	\label{Eqt:SU3eqt}
	x - {\gamma^{-1}}x\gamma = z.
\end{equation}
This equation characterizes the elements of the reduced phase space \( \Reducedxi \) through equivalence classes \( [(x, \gamma)] \). Additionally, the corresponding equivalence classes \( [(x, -{\gamma^{-1}}x\gamma)] \) and \( [x] \) are associated with the spaces \( \Pxi \) and \( \Bxi  \), respectively.

\begin{remark}   
     Consider    the following functions on $\su(3)$:
  $$
  h_1(x)=\frac{1}{2}\mathbf{tr}(x^2),\quad h_2(x)=\frac{1}{3}\mathbf{tr}(x^3),\quad \forall x\in \su(3).
  $$
  Since they are $\SU(3)$-invariant, we treat them as functions on $\Bxi $, and on  $\Reducedxi$ by pulling back via $\pi\circ \psi$. In fact, $h_1$ and $h_2$ together form a coordinate system on $\Bxi $ \cite{KKS,NR2003}.
\end{remark}

We have more facts about the triple $(x,\gamma,z)$ satisfying Equation \eqref{Eqt:SU3eqt}. Let us suppose that the matrix $\gamma$ is diagonalized such that $$\gamma = \diag(\gamma_1, \gamma_2, \gamma_3)$$ with $\gamma_i \in \Sone$, and $\gamma_1 \gamma_2 \gamma_3 = 1$.
\begin{itemize}
  \item The diagonal elements of $z=x - {\gamma^{-1}}x\gamma$ must necessarily vanish, i.e.,  $z$   belongs to the following set
   \begin{equation}
	\label{Notation:Oxi03}
	\Oxi^0:=\SET{ z\in \Oxi, z_{11}=z_{22}= z_{33}=0}.\end{equation}

      Furthermore, for all indices $1 \leqslant i, j \leqslant 3$ with $i \neq j$, the relationship between the off-diagonal elements of $x$ and $z$ is governed by the equation:  
\begin{equation}\label{Eqt:SU3eqtxijzij}
	(1 - \gamma_i^{-1} \gamma_j) x_{ij} = z_{ij}.
\end{equation}
  \item Since \( z \) is conjugate to \( \xi\), say $\xi =\imagunit\cdot \diag(a , b  , c ) $ (where $a$, $b$, $c$ are real numbers satisfying $a+b+c=0$), the off-diagonal elements \( z_{ij} \) (with \( i \neq j \)) must adhere to some specific conditions, derived from the associated characteristic polynomials. Firstly, the sum of products of these elements fulfills the equation $$z_{12}z_{21} + z_{13}z_{31} + z_{23}z_{32} = ab + bc + ca .$$ This implies that the modulus squared of these elements satisfies:
\begin{equation}\label{Eq:ModulusEntries(y)}
	\abs{z_{12}}^2 + \abs{z_{13}}^2 + \abs{z_{23}}^2 = a^2 + ab + b^2.
\end{equation}

Secondly, the product of these elements in a cyclic manner results in $$ z_{12}z_{23}z_{31} + z_{13}z_{32}z_{21} = -\imagunit abc  .$$ Consequently, the imaginary part of the product \( z_{12}z_{23}\overline{z_{13}} \) is expressed as:
\begin{equation}
	\label{Eq:ImaginaryEntries(y)}
	\Im{z_{12}z_{23}\overline{z_{13}}} = \frac{1}{2} abc.
\end{equation}

These conditions reflect the intricate relationship between the elements of \( z \) and the parameters \( a, b, \) and \( c \).
 
\end{itemize}

\subsection{For zero $\xi$}\label{subSec:xi=0}

The associated reduced spaces $\Rxizero $, $\Pxizero $, and $\Bxizero $ are all stratified spaces.  In the subsequent subsections, we will provide a detailed description of the strata and the arrows that connect them (see Diagrams \eqref{Diagram:SU3arrows000NEW}, \eqref{Diagram:SU3arrows000refined}).

First, we assert that the primary stratification of $\Rxizero $ has three distinct pieces: $(\Rxizero)^4_{(\Cartan)}$, $(\Rxizero)^2_{(K)}$, and $(\Rxizero)^0_{(\SU(3))}$.  
In fact, since   $z\in \mathcal{O}_{\xi=0} = \SET{0}$,      we only  need to consider pairs $[(x, \gamma)]\in \Rxizero $ such that $x\gamma = \gamma x$. Consequently, we can assume that \emph{$x$ and $\gamma$ are simultaneously diagonalized, } i.e., $x\in \cartan$ and $\gamma\in \Cartan$. Thus, the orbit type of a stratum in which  $[(x, \gamma)]$ lies, namely the stabilizer $G_{ x,\gamma }=G_x\cap G_{\gamma}$, can be determined by the results in Section \ref{Sec:stabilizerinSUthree}.

 \begin{itemize}
	\item[(1)] The maximum component, denoted as $(\Rxizero)_{(\Cartan)}$, is characterized by the orbit type $ \Cartan$. This stratum contains pairs $[(x,\gamma)]$, where either $x$ or $\gamma$ is regular. It also includes  configurations such as $x = \imagunit v \vec{\omega}_2$ ($\in {l_2}$) with      $v> 0$, and $\gamma = \diag(\gamma_1^{-2},\gamma_1,\gamma_1)$ ($\in H(1,2)\setminus H(3)$) with $\gamma_1\in \Sone$ and $\gamma_1^3\neq 1$ (i.e., $\gamma_1 \neq \gamma_1^{-2}$). A similar arrangement with $x\in {l_1}$ and $\gamma  \in H(2,1)\setminus H(3)$ is included as well.

	\item[(2)] 
	The middle component, denoted as $(\Rxizero)_{(K)}$, exhibits the orbit type $K$ (see Equation \eqref{Eqt:Kgroup}). 
	Within this stratum, pairs $[(x,\gamma)]$ can take the form $x = \imagunit v \vec{\omega}_2$  ($\in {l_2}\sqcup \SET{0}$) with $v\geqslant 0$, paired with $\gamma = \diag(\gamma_1,\gamma_1,\gamma_1^{-2})$ ($\in H(2,1)\setminus H(3)$), under the condition $\gamma_1^3 \neq 1$. Alternatively, it is possible that $x = \imagunit v \vec{\omega}_2 $ $\in {l_2}$  where $v > 0$, and $\gamma = \diag(\gamma_1,\gamma_1,\gamma_1^{-2})$ ($\in H(2,1)$), although $\gamma_1$ may satisfy    $\gamma_1^{3}=1$, among other similar configurations.
	
	\item[(3)] 
	The minimum component, denoted as $(\Rxizero)_{(\SU(3))}$, is characterized by the orbit type $\SUthree$. This stratum (being isomorphic to $\Uthree $) consists solely of three elements, each of the form $(x=0,\gamma )$ where $\gamma\in H(3)$.
	
\end{itemize}
Second, since $\Pxizero $  consists of elements of the form
$[(x,-{\gamma^{-1}}x\gamma )]=[(x,-x)]$ whose   stabilizer reads $G_{x,{\gamma^{-1}}x\gamma }=G_{x,x}=G_x$, and $\Bxizero $ consists of $[x]$, we have
   $$\Pxizero \cong \Bxizero \cong \su(3)/\SU(3)  \cong S, $$ which undergoes the same stratification \eqref{Eqt:su3quotientstratified}. For notational convenience, let us use the following notation  (and similar ones for $\Pxizero $) to denote the strata of $\Bxizero $:

\begin{itemize}
	\item[(1)] The maximum stratum   $(\Bxizero)^2$ ($\cong S_o$)  comprises elements of the form $[x]$, where $x$ is a regular element. This notation should be distinguished from $\Bxizero $, which represents the entire stratified space; in the latter, the superscript $2$ emphasizes that the maximum stratum possesses dimension $2$.

	\item[(2)] The middle stratum, $(\Bxizero)^1  $,   includes elements $[x]$ where $x$ is singular but nonzero. Note that $(\Bxizero)^1  $ has two connected components --- $(\Bxizero)^1_{(1)}  $ ($\cong {l_1}$) and $(\Bxizero)^1_{(2)}  $ ($\cong {l_2}$),   consisting of $[x]$ where $x\in {l_1}$  and   $ {l_2}$, respectively.
	
	\item[(3)] The minimum stratum, $(\Bxizero)^0   $,    consists solely of the point $[0]$.
\end{itemize}

 The maps $\psi: \Rxizero  \to \Pxizero $ and $\pi: \Pxizero  \to \Bxizero $ are decomposed as shown in the following diagram:

\begin{equation}
	\label{Diagram:SU3arrows000NEW}
	\begin{tikzcd}
		(\Rxizero)_{(\Cartan)}\arrow[rrddd] \arrow[d,dashed] \arrow[rrd] \arrow[rrdd] \arrow[rr] && (\Pxizero)^2    \arrow[dd,bend left=55,dashed]\arrow[d,dashed]  \arrow[rr,"\cong" ]   && (\Bxizero)^2  \arrow[d,dashed] \arrow[dd,bend left=55,dashed] \\
		(\Rxizero)_{(K)} \arrow[rrdd] \arrow[dd,dashed]  \arrow[rrd]\arrow[rr ] && (\Pxizero)^1_{(1)}    \arrow[dd,bend left=55,dashed] \arrow[rr,"\cong" ]  && (\Bxizero)^1_{(1)}   \arrow[dd,bend left=55,dashed]\\  ~
		&& (\Pxizero)^1_{(2)}    \arrow[d,dashed] \arrow[rr,"\cong"] && (\Bxizero)^1_{(2)}  \,\arrow[d,dashed] \\
(\Rxizero)_{(\SU(3))}  \arrow[rr]
		&& (\Pxizero)^0    \arrow[rr,"\cong"] && (\Bxizero)^0   \,.
	\end{tikzcd}
\end{equation}


The solid arrows stand for   components of the maps $\psi$ and $\pi$. The dashed arrows represent boundary relations of the relevant strata. 
 It is important to observe that, originating from $(\Rxizero)_{(\Cartan)}$, there are four arrows representing parts of $\psi$. These parts of the maps, however, are not   universally defined across the entire stratum $(\Rxizero)_{(\Cartan)}$. A similar  phenomenon occurs   for $\psi$ starting from $(\Rxizero)_{(K)}$, and is also observed  in subsequent analysis. For our purposes  of finding the  underlying sub-superintegrable systems, we cannot rely on the primary stratification of $\Rxizero$. Instead, we  should use the \emph{refined}   strata of $\Rxizero $, namely the inverse images of relevant strata as we have explained in Appendix \ref{Apd:refinebyinverseimage} and Section    \ref{Sec:refineReducedxi}. Specifically, we find the following subspaces consisting of $[(x,\gamma)]\in  \Rxizero$:
\begin{eqnarray*}
  (\Rxizero)^4_{(\Cartan)} &:=&   (\Rxizero)_{(\Cartan)}\cap \psi^{-1} (\Pxizero)^2   = \SET{[(x,\gamma)]| x\in S_o},\\
  (\Rxizero)^3_{(\Cartan),(1)} &:=&   (\Rxizero)_{(\Cartan)}\cap \psi^{-1} (\Pxizero)^1_{(1)}   \\
  &&= \SET{[(x,\gamma)]| x\in {l_1},\gamma \in \Hreg \cup H(2,1)\setminus H(3) },\\
    (\Rxizero)^3_{(\Cartan),(2)} &:=&   (\Rxizero)_{(\Cartan)}\cap \psi^{-1} (\Pxizero)^1_{(2)}   \\
  &&= \SET{[(x,\gamma)]| x\in {l_2},\gamma \in \Hreg \cup H(1,2) \setminus H(3) },\\
   (\Rxizero)^2_{(\Cartan)} &:=&   (\Rxizero)_{(\Cartan)}\cap \psi^{-1} (\Pxizero)^0  = \SET{[(0,\gamma)]| \gamma\in \Hreg},\\
 (\Rxizero)^2_{(K),(1)} &:=&   (\Rxizero)_{(K)} \cap \psi^{-1} (\Pxizero)^1_{(1)}   = \SET{[(x,\gamma)]|x\in {l_1}, \gamma\in H(1,2)},\\
 (\Rxizero)^2_{(K),(2)} &:=&   (\Rxizero)_{(K)}\cap \psi^{-1} (\Pxizero)^1_{(2)}   = \SET{[(x,\gamma)]|x\in {l_2}, \gamma\in H(2,1)},\\
 (\Rxizero)^1_{(K)} &:=&   (\Rxizero)_{(K)}\cap \psi^{-1} (\Pxizero)^0  = \SET{[(0,\gamma)]| \gamma\in H(2,1)\setminus H(3)}, 
\end{eqnarray*} 
and together with the original minimum stratum $(\Rxizero)_{(\SU(3))}$ ($=(\Rxizero)_{(\SU(3))}\cap \psi^{-1}(\Pxizero)^0  $), $\Rxizero $ is refined into eight strata. Note that among them $(\Rxizero)^3_{(\Cartan),(1)}$,   $(\Rxizero)^3_{(\Cartan),(2)}$,   and $(\Rxizero)^1_{(K)}$ are presymplectic manifolds. As a summary, one finds a total of eight (refined) strata of $\Rxizero $. Their relations with $\Pxizero $ ( $\cong \Bxizero $) and the stabilizers (orbit types) are listed as follows:

       \begin{table}[H]\centering
 \caption{The refined strata of  $\Rxizero $    for  $\xi=0$ } \label{Table:strataxizero}
 \begin{tabularx}{\textwidth}{|>{\raggedright\arraybackslash}X|p{2cm}  |>{\raggedright\arraybackslash}X|p{3cm}|>{\raggedright\arraybackslash}X|>{\raggedright\arraybackslash}X|>{\raggedright\arraybackslash}X|} 
 \hline
 &
 Stratum of    $\Rxizero  = \SET{[(x,\gamma)]}$   &  Orbit type of $G_{x, \gamma}$   & Range of $[x]$ (stratum of $\Bxizero $) & Range of $\gamma$	   &  Stratum of $\Pxizero $ ($\cong \Bxizero $)   & Orbit type of $G_{x,\gamma^{-1}x\gamma}$ ($=G_{x}$)  \\  
  \hline  No.1 ~& 
  $(\Rxizero)^4_{(\Cartan)}$ &	$\Cartan$ & $S_o$ (=$(\Bxizero)^2 $) & $\Cartan$     & $(\Pxizero)^2 $  & $\Cartan$ \\ 
  \hline  No.2 ~& $(\Rxizero)^3_{(\Cartan),(1)}$
   & $\Cartan$ &	$l_1$ (=$(\Bxizero)^1_{(1)}  $) &  $\Hreg$ $\cup$ $H(2,1)\setminus H(3)$    &  $(\Pxizero)^1_{(1)}   $ &  $K$  \\\hline
   No.3 ~& $(\Rxizero)^3_{(\Cartan),(2)}$
   & $\Cartan$ &	$l_2$ (=$ (\Bxizero)^1_{(2)}  $) &  $\Hreg$ $\cup$ $H(1,2)\setminus H(3)$    &  $(\Pxizero)^1_{(2)}   $ &  $K$  \\
   \hline
   No.4 ~& $(\Rxizero)^2_{(\Cartan)}$
   & $\Cartan$ &	$\SET{0}$ (=$ (\Bxizero)^0   $) &  $\Hreg  $    &  $(\Pxizero)^0 $ &  $\SU(3)$ \\
\hline  No.5 ~&
  $(\Rxizero)^2_{(K),(1)}$ & $K$ &	$l_1$ (=$(\Bxizero)^1_{(1)}  $) & $H(1,2)$    &  $(\Pxizero)^1_{(1)}    $ &  $K$  \\
  \hline  No.6 ~&
  $(\Rxizero)^2_{(K),(2)}$ & $K$ &	$l_2$ (=$(\Bxizero)^1_{(2)}  $) & $H(2,1)$    &  $(\Pxizero)^1_{(2)}    $ &  $K$  \\
  \hline  No.7 ~& 
  $(\Rxizero)^1_{(K)}$ & $K$ &	$\SET{0}$ (=$(\Bxizero)^0   $) & $H(2,1)\setminus H(3)$    &  $(\Pxizero)^0  $ &  $\SU(3)$  \\\hline  No.8 ~& 
  $(\Rxizero)_{(\SU(3))}$ & $\SU(3)$ &	$\SET{0}$   (=$(\Bxizero)^0   $) & $H(3) $    &  $(\Pxizero)^0  $ &  $\SU(3)$  \\  
  \hline     
    \end{tabularx}
   \end{table}

The following diagram shows how the relevant strata are linked by $\psi$ and $\pi$:

\begin{equation}
	\label{Diagram:SU3arrows000refined}
	\begin{tikzcd}
		(\Rxizero)^4_{(\Cartan)} \arrow[dd,bend right=55,dashed]\arrow[d,dashed] \arrow[rr] && (\Pxizero)^2    \arrow[dd,bend left=55,dashed]\arrow[d,dashed]  \arrow[rr,"\cong" ]   && (\Bxizero)^2  \arrow[d,dashed] \arrow[dd,bend left=55,dashed] \\
		(\Rxizero)^3_{(\Cartan),(1)}  
 \arrow[ddd,bend right=55,dashed] 
\arrow[dd,bend right=55,dashed]\arrow[rr ] && (\Pxizero)^1_{(1)}    \arrow[dd,bend left=55,dashed] \arrow[rr,"\cong" ]  && (\Bxizero)^1_{(1)}   \arrow[dd,bend left=55,dashed]\\  ~ (\Rxizero)^3_{(\Cartan),(2)} \arrow[ddd,bend right=55,dashed] \arrow[d, dashed]
\arrow[rr]		&& (\Pxizero)^1_{(2)}    \arrow[d,dashed] \arrow[rr,"\cong"] && (\Bxizero)^1_{(2)}  \,\arrow[d,dashed] \\
(\Rxizero)^2_{(\Cartan)} \arrow[ddd,bend right=55,dashed]   \arrow[rr]
		&& (\Pxizero)^0    \arrow[rr,"\cong"] && (\Bxizero)^0   \,\\
(\Rxizero)^2_{(K),(1)} \arrow[rruuu] \arrow[dd,bend right=55,dashed] &&  &&\\
(\Rxizero)^2_{(K),(2)} \arrow[rruuu] \arrow[d,dashed] &&  &&  \\
(\Rxizero)^1_{(K)} \arrow[rruuu] \arrow[d,dashed] &&  &&  \\
(\Rxizero)_{(\SU(3))}\,. \arrow[rruuuu] &&  &&  
	\end{tikzcd}
\end{equation} 
 As illustrated in the diagram, each sequence originating from a   space on the left corresponds to a sub-superintegrable system. The verification of this property involves only routine computations (e.g.,  verifying Poisson bracket relations) and is left to the reader.

\subsection{For generic   $\xi $}\label{subSec:xiabciallnonzero}

Suppose now that $\xi$ is of the generic form $\imagunit\cdot \diag(a  , b  , c  )$, where none of $a$, $b$, and $c$ is zero, and they are \textit{distinct}. Without loss of generality, we assume that $a>b>c$. Since $a+b+c=0$, we have $a>0$ and $c<0$. Through straightforward analysis, it can be determined that $\dim(\Oxi^0) = 4$.

For any solution $(x, \gamma, z)$ to Equation \eqref{Eqt:SU3eqt}, we have the following two observations: \begin{itemize}
	\item[--]   \textbf{Regularity of $\gamma$}:  The matrix $\gamma$ is regular. Specifically, when $\gamma$ is diagonalized as $\gamma = \diag(\gamma_1, \gamma_2, \gamma_3)$, the eigenvalues $\gamma_1$, $\gamma_2$, and $\gamma_3$ are distinct. This regularity is supported by Equation \eqref{Eq:ImaginaryEntries(y)}, which implies that none of $z_{12}$, $z_{23}$, or $z_{13}$ is zero. Consequently, by Equation \eqref{Eqt:SU3eqtxijzij}, $\gamma$ must be regular. It also ensures that all $x_{12}$, $x_{23}$, and $x_{13}$ are nonzero.
	
	\item[--]   \textbf{Regularity of $x$}:  The matrix $x$ is also regular. Clearly, it cannot be zero. If $x$ were diagonalized into a singular matrix, such as $x =  \imagunit\cdot \diag(p , p   , -2p   )$ ($=3\imagunit p \vec{\omega}_2\in l_2$) with $p > 0$,  it could be expressed as $x = \imagunit p (I_3 - 3L)$, where $L = \diag(0, 0, 1)$. This leads to the expression $$z = x - {\gamma^{-1}}x\gamma = 3p\imagunit  (-L + \gamma^{-1} L \gamma).$$ However, both $L$ and $\gamma^{-1} L \gamma$ are rank $1$ matrices, which contradicts the condition that $z \in \Oxi$, a matrix of rank $3$. The case where $x= \imagunit\cdot \diag(2p , -p   , - p   )$ ($=3\imagunit p \vec{\omega}_1\in l_1$) can be similarly seen to be impossible.
	
\end{itemize}

Consequently, we can assert that $\Reducedxi$ is an ordinary manifold. Indeed, we have just explained the regularity of $\gamma$. So when $\gamma\in \Hreg$,     none of the three numbers $x_{12}$, $x_{23}$, and $x_{13}$ is zero.  According to Proposition \ref{Prop:firstandsecondmethodg0}, the stabilizer group at every $(x,\gamma)$ is given by 
$$
G_{x,\gamma} = G_x \cap G_\gamma = G_x \cap \Cartan = H(3)  .
$$   In conclusion, for the primary stratification of $\Reducedxi$, there is only one stratum, $(\Reducedxi)_{(H(3))}$, which is a genuine smooth manifold. The refined stratification of $\Reducedxi$ is demonstrated in the sequel Section \ref{Sec:genericRxirefined}.

\subsubsection{$\Bxi  $      as a stratified space}  \label{Sec:Bxipolyarea}
 
We have shown that every $x$ in the triple $(x, \gamma, z)$ satisfying Equation \eqref{Eqt:SU3eqt} is regular. Hence   the set of $[x]$, namely $  \Bxi $,  must be a region within $S_o$. In fact, we have a precise description of   $\Bxi \subset S_o$.

\begin{lemma}\label{Lemma:genericBxishape} For generic $\xi= \imagunit  \diag(a  , b  , c  )$  as assumed above, we have
\begin{equation}\label{Eqt:Bxi2uvuplusv}
\Bxi =\SET{\imagunit ( u\vec{\omega}_1+v\vec{\omega}_2)~|~ u\geqslant {\abs{b}} ,v\geqslant {\abs{b}} ,u+v\geqslant \max(a,-c)}.
\end{equation}

\end{lemma}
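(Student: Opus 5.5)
The plan is to translate membership in $\Bxi$ into a Horn problem and then read off the inequalities. By definition, $[x]\in\Bxi$ if and only if there are $\gamma\in\SU(3)$ and $z\in\Oxi$ with $x+(-\gamma^{-1}x\gamma)=z$. Multiplying by $-\imagunit$, this asks for Hermitian matrices $A=-\imagunit x$ and $B=-\imagunit(-\gamma^{-1}x\gamma)$ such that $A+B$ has spectrum $(a,b,c)$. The spectrum of $B$ is forced to be that of $-A$.

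Conversely, suppose Hermitian $A$ and $B$ have spectra $\lambda$ and $-\lambda$, and $A+B$ has spectrum $(a,b,c)$. Since $B$ and $-A$ are isospectral, $B=-WAW^{-1}$ for some unitary $W$, which can be rescaled to lie in $\SU(3)$. Setting $\gamma=W^{-1}$ and $x=\imagunit A$ then produces a triple $(x,\gamma,z)$ solving \eqref{Eqt:SU3eqt}. Hence $\Bxi$ is exactly the set of $[x]$ whose eigenvalue triple is compatible, in the sense of Horn's theorem (Appendix \ref{appendixSec:Horn}), with the triples $\lambda$, $-\lambda$ and $(a,b,c)$.

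Next I will parametrize. For $x=\imagunit(u\vec\omega_1+v\vec\omega_2)$ with $u,v\geqslant 0$, the eigenvalues of $A$ are
$$\lambda_1=\tfrac{2u+v}{3}\geqslant\lambda_2=\tfrac{v-u}{3}\geqslant\lambda_3=-\tfrac{u+2v}{3},$$
so the gaps are $\lambda_1-\lambda_2=u$ and $\lambda_2-\lambda_3=v$. The ordered eigenvalues of $B$ are $\mu_1=-\lambda_3$, $\mu_2=-\lambda_2$, $\mu_3=-\lambda_1$, and those of $A+B$ are $\nu=(a,b,c)$ with $a>b>c$. The trace condition holds automatically. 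The Weyl inequalities $\nu_{i+j-1}\leqslant\lambda_i+\mu_j$ and their duals $\nu_{i+j-3}\geqslant\lambda_i+\mu_j$ (for $i+j\geqslant 4$) then give:
\begin{itemize}
\item $a\leqslant u+v$ and $c\geqslant -(u+v)$, that is, $u+v\geqslant\max(a,-c)$;
\item $b\leqslant\min(u,v)$ and $b\geqslant-\min(u,v)$, that is, $u,v\geqslant\abs{b}$;
\item the remaining ones ($c\leqslant 0\leqslant a$ and the like), which are automatic because $\lambda_i+\mu_{4-i}=0$.
\end{itemize}

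The step I expect to be the main obstacle is completeness. For $n=3$, Horn's list contains, besides the Weyl inequalities ($r=1$), the $r=2$ inequalities such as $\nu_2+\nu_3\leqslant\lambda_1+\lambda_3+\mu_1+\mu_3$, $\nu_1+\nu_3\leqslant\lambda_1+\lambda_2+\mu_1+\mu_3$, and their permutations. I plan to handle these via the trace identity: each $r=2$ inequality is equivalent to a reversed $r=1$ inequality for the complementary index. For example, $\nu_2+\nu_3\leqslant\lambda_1+\lambda_3+\mu_1+\mu_3$ becomes $a\geqslant\lambda_2+\mu_2=0$. Running through the full finite list with the explicit values above, every one of them reduces either to a trivially true statement or to one of the three conditions already found.

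Sufficiency then follows from Horn's theorem itself, since the inequalities are both necessary and sufficient. Finally, $\abs{b}>0$ forces $u,v>0$, which recovers the regularity of $x$ shown earlier and confirms that \eqref{Eqt:Bxi2uvuplusv} lies in $S_o$.
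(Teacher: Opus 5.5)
Your proposal is correct and takes essentially the same route as the paper: both families of Weyl inequalities give the necessity of $u,v\geqslant\abs{b}$ and $u+v\geqslant\max(a,-c)$, and Horn's theorem for $3\times 3$ Hermitian matrices gives sufficiency. You also spell out two points the paper passes over quickly, namely that the Horn summand isospectral to $-\underline{x}$ can be written as $-\gamma^{-1}\underline{x}\gamma$ with $\gamma\in\SU(3)$, and that the $r=2$ Horn inequalities reduce, via the trace identity, to the dual Weyl inequalities you already checked.
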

The shape of $\Bxi $ is illustrated by Figure \ref{Picture:genericBxiinchamber}.

 \begin{figure}[htbp]   \centering  
  \scalebox{0.9}{ 
      \begin{tikzpicture}[>=stealth, line width=0.8pt, font=\large]
            \coordinate (O) at (0,0);
       \def\a{60}    
      \def\b{0}     
      \def\c{-60}   
      \def\d{-120}  
      \def\e{180}   
      \def\f{120}   
      \def\lp{30}   
      \def\lm{90}  
    \filldraw[green,line width=2] (2.08,2.37) -- (5.39, 4.33)   arc (30:87.9:5.2) -- (1.04,3.03) -- (2.08,2.37);
         \draw[black] (1.3,1.5) node[right] {$S$};
\draw[black] (3.4,6.3) node[right,green] {$(\Bxi)^2   $};
\draw[black] (-0.1,-0.3) node[left] {0};
      \draw[->] (O) -- ++(\a:4.8) node[above right] {$ \vec{c}$};
      \draw[->] (O) -- ++(\b:4.7) node[right] {$\vec{\alpha}_{1}$};
       \draw[->] (O) -- ++(\f:4.7) node[above right] {$\vec{\alpha}_{2}$};
      \draw[black,   ->] (O) -- ++(\lp:2.9) node[below right, black] {$\vec{\omega}_1$};
      \draw[black,  ->] (O) -- ++(\lm:2.9) node[below left, black] {$\vec{\omega}_2$};
      \draw[black,dashed] (0.05,0.05)--(89.5:5.5)
      node[below left, black] {${l_2}$};
      \draw[black,dashed] (0.05,0.05)--(31:5.5)
      node[below right, black] {${l_1}$}; 
      \draw[blue,dashed] (1,3)--(1,7)
      node[below left, blue] {$(\Bxi)^1_{(2)}  $};
      \draw[blue,dashed] (1,3)--(2.06,2.32);
      \draw[blue] (1.2,2.3) node[blue] {$(\Bxi)^1_{(0)}  $};
            \draw[blue,dashed] (2.05,2.31)--(5.35, 4.24)
      node[below, blue] {$(\Bxi)^1_{(1)}  $};
      \draw[red] (1,3) circle (0.03) node[above left,red] {$(\Bxi)^0_{(2)}$};
\draw[red] (2.05,2.31) circle (0.03) node[right,red] {$(\Bxi)^0_{(1)}$};
        \end{tikzpicture}
        }
  \caption{The stratified space $\Bxi \subset S$ for generic $\xi$; \\ the corner point $(\Bxi)^0_{(1)}  $ is $\max(a,-c) \vec{\omega}_1+ \abs{b} \vec{\omega}_2$, and $(\Bxi)^0_{(2)}  $ is $\abs{b} \vec{\omega}_1+  \max(a,-c) \vec{\omega}_2$.\\ ~} \label{Picture:genericBxiinchamber}~ 
\end{figure}

\begin{proof}[Proof of Lemma \ref{Lemma:genericBxishape}] Write $x=\imagunit \underline{x}$ and $z=\imagunit \underline{z}$. Hence $\underline{x}$ and $\underline{z}$ are traceless Hermitian matrices. Equation \eqref{Eqt:SU3eqt} becomes
  \begin{equation}
	\label{Eqt:Hermitian3eqt}
	\underline{x} - {\gamma^{-1}}\underline{x}\gamma = \underline{z}.
\end{equation}

Suppose that $\underline{x}$ has eigenvalues $p$, $q$, and $r$, ordered by $p> q> r$ and satisfying  $p+q+r=0$. Then we simply assume that $x=\imagunit \underline{x}$ is of the form: \begin{equation}\label{Eqt:xinpqr}\imagunit\cdot \diag(p,q,r)=\imagunit\cdot (u\vec{\omega}_1+ v\vec{\omega}_2)\, \end{equation}
where
\begin{equation}\label{Eqt:uvpqr} u= {p-q} ,\quad v= {q-r} .\quad \mbox{ (Hence  } u+v=p-r.)
\end{equation}

Now $(-\gamma^{-1}\underline{x}\gamma)$ has eigenvalues ordered by  $-r> -q> -p$, and $\underline{z}$ has eigenvalues   ordered by $a> b> c$. According to Equation \eqref{Eqt:Hermitian3eqt} and   Weyl's inequalities (\eqref{Eqt:Weyl1} -- \eqref{Eqt:Weyl6} and \eqref{Eqt:Weyl1moremore} -- \eqref{Eqt:Weyl6moremore}) recalled by Theorem \ref{Thm:3X3Weyl},   we have
the following relations:
\begin{eqnarray}\label{tempineq1}
  a &\leqslant &  p-r, \\\label{tempineq2}
  b &\leqslant & \min(p-q,q-r) ,\\\label{tempineq3}
  \max(q-p,r-q) &\leqslant & b,\\\label{tempineq4}
  r-p &\leqslant& c.
\end{eqnarray}
The inequalities \eqref{tempineq1} together with \eqref{tempineq4} are combined to yield 
\begin{equation}\label{Eqt:p-r}
   p-r\geqslant \max(a,-c),
\end{equation}
and \eqref{tempineq2} together with \eqref{tempineq3} are merged into\begin{equation}\label{Eqt:p-qq-r} \min(p-q,q-r)\geqslant \abs{b}.\end{equation}
This proves that $\Bxi $ sits within the set described on the right-hand side of Equation \eqref{Eqt:Bxi2uvuplusv}.  

To see the converse inclusion, let $\underline{x}':=u\vec{\omega}_1+ v\vec{\omega}_2$ satisfy  $u\geqslant {\abs{b}}$, $v\geqslant {\abs{b}}$, and $u+v\geqslant \max(a,-c)$. We wish to show that $[\imagunit  \underline{x}']\in \Bxi $. Let us rewrite $$\underline{x}' =u\vec{\omega}_1+ v\vec{\omega}_2=\diag(p,q,r),$$
where $$p=\frac{2u+v}{3},\quad q=\frac{-u+v}{3},\quad r=\frac{-u-2v}{3}.$$ 
The inequalities \eqref{Eqt:p-r} and \eqref{Eqt:p-qq-r} hold true.

We resort to Horn's Theorem \ref{Thm:3X3Horn}, which is recalled in Appendix \ref{appendixSec:Horn}. In fact, if we set $(\mu_1,\mu_2,\mu_3)=(p,q,r)$, $(\nu_1,\nu_2,\nu_3)=(-r,-q,-p)$, and $(\lambda_1,\lambda_2,\lambda_3)=(a,b,c) $, then all conditions of Theorem \ref{Thm:3X3Horn} are satisfied because \eqref{Eqt:p-r} and \eqref{Eqt:p-qq-r} are true. So there exist Hermitian matrices $\underline{x}$ with eigenvalues $p> q> r $, $\underline{y}$ with eigenvalues $ -r>  -q>  -p $, and $\underline{z}$ with eigenvalues $ a> b> c $, such that $\underline{z}=\underline{x}+\underline{y}$. We have $\imagunit\underline{z}\in \Oxi$ and hence $[\imagunit \underline{x}']=[\imagunit \underline{x}] $ belongs to $\Bxi $. 
\end{proof}

Following the description of $\Bxi $ in \eqref{Eqt:Bxi2uvuplusv}, for generic $\xi$, we decompose  $\Bxi $ into six pieces to become a stratified space:
$$
\Bxi =(\Bxi)^2    \sqcup (\Bxi)^1_{(0)}  \sqcup (\Bxi)^1_{(1)}  \sqcup (\Bxi)^1_{(2)}  \sqcup
(\Bxi)^0_{(1)}  \sqcup(\Bxi)^0_{(2)}  ,
$$
as illustrated in Figure \ref{Picture:genericBxiinchamber}. In particular, if $a>b>0$, $c=-a-b$, then the two corner points are $(\Bxi)^0_{(1)}  =a\vec{\omega}_1+b \vec{\omega}_2$ and $(\Bxi)^0_{(2)}  =b\vec{\omega}_1+a \vec{\omega}_2$.

 \subsubsection{Classification of elements in $\Reducedxi   $ and $\Pxi $}
\label{Sec:genericRxirefined}

In the following analysis, let us assume that $a>b>0>c$, and hence $\max(a,-c)=-c$. The case where $a>0>b>c$ can be analogously analyzed and the results are similar.

Our notation for a pair $(x,\gamma)$  always refers to a solution to Equation \eqref{Eqt:SU3eqt} for some $z\in \Oxi$, where  $x\in\su(3)$ must be regular, as established in Section \ref{subSec:xiabciallnonzero}. So we will always suppose that $x$ is of the form $$x =\imagunit \cdot \diag(p, q, r)=\imagunit\cdot (u\vec{\omega}_1+ v\vec{\omega}_2) ,$$ where $p>q>r$, $p + q + r = 0$,  $u= {p-q}$, and $v= {q-r}$, according to \eqref{Eqt:uvpqr}.

We claim that   $\gamma$ \emph{cannot have any unimodular diagonal element}, i.e., $\gamma_{ii}\neq e^{\imagunit \alpha}$.  Otherwise, one of $a$, $b$, $c$   must be   zero. One can also see this fact by Proposition \ref{Prop:firstandsecondmethodg0} and by   $
G_{x,\gamma} = G_x \cap G_\gamma =\Cartan \cap  G_\gamma = H(3)$.  

Now we determine all possible forms of    $\gamma$ and the corresponding $x$.
 
\begin{flalign}~  \mbox{\it The case where   three  entries of  $\gamma$ are unimodular.} &&\label{CASE-A} 
    \tag{A}\end{flalign}
\begin{itemize}
  \item [] 
\begin{flalign}~\mbox{\it We suppose that  $\gamma$ is of    the following form:}&&\label{CASE-A-1} 
    \tag{A-1}\end{flalign} 
   \begin{equation*}
    \gamma=
  \begin{pmatrix}
		0 & e^{\imagunit  \alpha_1} & 0 \\
		0 & 0 & e^{\imagunit  \alpha_2} \\
		e^{\imagunit  \alpha_3} & 0 & 0
	\end{pmatrix}.\end{equation*} By proper conjugation to $\gamma$ induced by some  element in $\Cartan$, which preserves $x$, it becomes
  $$\gamma=\begin{pmatrix}
		0 & 1 & 0 \\
		0 & 0 & 1 \\
		1 & 0 & 0
	\end{pmatrix}.$$
Therefore, it suffices to  examine $(x,\gamma)$ where $\gamma$ is as above. We then have
$$
x-\gamma^{-1}x\gamma=z=\imagunit \cdot \diag(p-r ,  q-p, r-q).
$$
Since $q-p$ and $r-q$ are both negative, the above $z$ does not exist because we have assumed $a>b>0>c$. So we have ruled out the possible   existence of this kind of $\gamma$.
\item[] 
\begin{flalign}~\mbox{\it We suppose the other possibility of $\gamma$: }&&\label{CASE-A-2} 
    \tag{A-2}\end{flalign} 
   \begin{equation}
   \label{Eqt:modulusonegamma}
  \begin{pmatrix}
		0 & 0 & e^{\imagunit  \alpha_1} \\
		e^{\imagunit  \alpha_2} & 0 & 0 \\
		0 & e^{\imagunit  \alpha_3} & 0
	\end{pmatrix}, 
	\end{equation} which, after $\Cartan$-conjugation, can be simplified to \begin{equation}\label{Eqt:connergamma}
\gamma=\begin{pmatrix}
		0 & 0 & 1 \\
		1 & 0 & 0 \\
		0 & 1 & 0
	\end{pmatrix}.\end{equation}
    We then have
$$
x-\gamma^{-1}x\gamma=z=\imagunit \cdot \diag(p-q ,  q-r, r-p).
$$
    As $a>b>0>c$, we must have $r-p=c$,
     $p-q=a$ and $q-r=b$, or
     $p-q=b$ and $q-r=a$.
    The corresponding two solutions are
$$
p_1=\frac{a-c}{3},q_1=\frac{b-a}{3},r_1=\frac{c-b}{3},
\mbox{ and } 
p_2 =\frac{b-c}{3},q_2=\frac{a-b}{3},r_2=\frac{c-a}{3}. 
$$ 
We then obtain    two solutions of $x$:
\begin{equation}\label{Eqt:endpointsx1x2}
   x_1=\imagunit (a\vec{\omega}_1+ b\vec{\omega}_2),\mbox{ and } x_2=\imagunit (b\vec{\omega}_1+a\vec{\omega}_2).
\end{equation}
These two points correspond to   $(\Bxi)^0_{(1)}  $ and $(\Bxi)^0_{(2)}  $, respectively, marked   in Figure \ref{Picture:genericBxiinchamber}.
So we have found the only two points $[(x_1,\gamma)]$ and $[(x_2,\gamma)]$ in $\Reducedxi  $ with $\gamma$   given by \eqref{Eqt:connergamma}.

One can also easily verify the converse fact --- \emph{if $[(x_1,\gamma')]$ (or $[(x_2,\gamma')]$)is a point in $\Reducedxi  $, then $\gamma'$ must be of the form   \eqref{Eqt:modulusonegamma}, which is $\Cartan$-conjugate to $\gamma$ in \eqref{Eqt:connergamma}.
}

For the associated point $\psi([(x_1,\gamma)])=[(x_1,\gamma^{-1}x_1\gamma)]\in \Pxi $, the stabilizer reads
$$G_{x_1, {\gamma^{-1}}x_1\gamma } = G_{x_1} \cap \gamma^{-1}G_{x_1}\gamma=\Cartan \cap \gamma^{-1}\Cartan\gamma=\Cartan,
$$
according to Proposition \ref{Prop:typeII}-(1). The stabilizer at $\psi([(x_1,\gamma)])$ is also $\Cartan$.
\end{itemize}

\begin{flalign}~\mbox{\it There is only one unimodular entry of  $\gamma$.}&&\label{CASE-B} 
    \tag{B}\end{flalign}

  In this case, $\gamma$ takes any of the following forms or their transposes:
	$$
	\begin{pmatrix}
		0     		&     0				 &   e^{\imagunit  \alpha} \\
		   \neq 0    & \neq 0 & 0  \\
							 	\neq 0		    &  		\neq 0								 & 0
	\end{pmatrix},\mbox{ } \begin{pmatrix}
		0     		&   \neq 0				 & \neq 0  \\
		e^{\imagunit  \alpha}					 					    &  		0								 & 0\\
		0	    &  \neq 0    & \neq 0  \\
	\end{pmatrix}, \mbox{ } 
	\begin{pmatrix}
		\neq 0 &	0     		 			 & \neq 0  \\
		\neq 0 &	0     		 			 & \neq 0  \\
		0&	e^{\imagunit  \alpha}					 					    &  		0								  
	\end{pmatrix}.$$
For all solutions $x$ and $\gamma$ discussed here, the corresponding points $[(x,-\gamma^{-1}x\gamma)]\in \Pxi $ have the same stabilizer group:
$$G_{x, {\gamma^{-1}}x \gamma } = G_{x } \cap \gamma^{-1}G_{x }\gamma=\Cartan \cap \gamma^{-1}\Cartan\gamma\sim H(2,1),
$$
according to Proposition \ref{Prop:typeII}-(2).

By $\Cartan$-conjugation, we can also assume that the $e^{\imagunit  \alpha}$ entry in $\gamma$ is exactly $1$. Next,   let us discuss  all possible types of $\gamma$ one by one.  

\begin{itemize}
\item[] 
\begin{flalign}~\mbox{\it Suppose that}&&\label{CASE-B-1} 
    \tag{B-1}\end{flalign}   
\begin{equation}\label{Eqt:gammaB1}\gamma=\begin{pmatrix}
		0     		&     0				 &   1 \\
		   \gamma_{21}    & \gamma_{22} & 0  \\
		\gamma_{31} 		    &  \gamma_{32}	 								 & 0
	\end{pmatrix}
\end{equation} where every $\gamma_{ij}$ is \emph{nonzero} and 
$$\gamma_0=\begin{pmatrix}
		 \gamma_{21}    & \gamma_{22}    \\
		\gamma_{31} 		    &  \gamma_{32}	 
	\end{pmatrix}\in \SUtwo.
$$ 
A direct  computation shows that
$$
 x-\gamma^{-1}x\gamma=z=\begin{pmatrix}
		\bullet     		&    \bullet				 &   0 \\
		   \bullet    & \bullet & 0  \\
		0 		    &  0	 								 & \imagunit(r-p)
	\end{pmatrix}.
$$
Since $r-p$ is negative, we must have $r-p=c$. Also, the   upper left corner is indeed the   $2\times 2$-matrix (up to a multiplication of $\imagunit$):
$$
  \diag(p,q)-  \gamma_0^{-1}  \diag(q,r)\gamma_0\,
$$ 
which must have eigenvalues $a>b$ (both being positive). The existence of such solutions $p,q,r$, and $\gamma_0$ is guaranteed by Theorem \ref{Thm:2X2Horn}. 

Now, we have $p-r=-c$. By Equations \eqref{Eqt:xinpqr} and \eqref{Eqt:uvpqr}, any solution $x=\imagunit\cdot \diag(p,q,r)=\imagunit\cdot (u\vec{\omega}_1+ v\vec{\omega}_2)$ satisfies $u+v=-c$. In other words, this kind of $x$ lives within the   line between $(\Bxi)^0_{(1)}  $ and $(\Bxi)^0_{(2)}  $. This line segment is    marked by $(\Bxi)^1_{ (0)}$  in Figure \ref{Picture:genericBxiinchamber}. According to our discussions in the earlier Case \eqref{CASE-A-2},    the endpoints $x_1$ and $x_2$ can only be solved from $\gamma$ of the form \eqref{Eqt:modulusonegamma}. So we exclude these two endpoints when we require $\gamma$ to be of the form \eqref{Eqt:gammaB1}.

     A converse fact can be verified as well ---  \emph{if $[(x,\gamma)]$ is a point in $\Reducedxi  $ where $x$ belongs to the boundary $(\Bxi)^1_{(0)} $, then $\gamma$ must be $\Cartan$-conjugate to the form \eqref{Eqt:gammaB1}. }

  \item[] 
\begin{flalign}~\mbox{\it Suppose that}&&\label{CASE-B-2} 
    \tag{B-2}\end{flalign} 
 \begin{equation}\label{Eqt:gammaB2}\gamma=\begin{pmatrix}
		0     		&     \gamma_{12}				 &   \gamma_{13} \\
		     0  & \gamma_{22} & \gamma_{23}  \\
	1 		    &  0 	 								 & 0
	\end{pmatrix}
\end{equation} where every $\gamma_{ij}$ is \textit{nonzero} and 
$$\gamma_0=\begin{pmatrix}
		 \gamma_{12}    & \gamma_{13}    \\
		\gamma_{22} 		    &  \gamma_{23}	 
	\end{pmatrix}\in \SUtwo.
$$ 

We can directly compute that
$$
 x-\gamma^{-1}x\gamma=z=\begin{pmatrix}
 \imagunit(p-r) & 0 & 0\\0 &
		\bullet     		&    \bullet\\				 0 &
		   \bullet    & \bullet   
	\end{pmatrix}.
$$
Since $p-r$ is positive, we must have $p-r=a$ or $b$. However, from Equation \eqref{Eqt:uvpqr} we have $u+v=p-r$ and from Equation \eqref{Eqt:Bxi2uvuplusv} we have $u+v\geqslant -c$, which is strictly greater than $a$ and $b$. This contradiction implies that $\gamma$ of the form \eqref{Eqt:gammaB2} does not exist.  

  \item[] 
\begin{flalign}~\mbox{\it Suppose that}&&\label{CASE-B-3} 
    \tag{B-3}\end{flalign}  
\begin{equation}\label{Eqt:gammaB3}\gamma=\begin{pmatrix}
		0     		&     \gamma_{12}				 &   \gamma_{13} \\-1 		    &  0 	 								 & 0\\
		     0  & \gamma_{32} & \gamma_{33}	
	\end{pmatrix}
\end{equation} where every $\gamma_{ij}$ is \textit{nonzero} and 
$$\gamma_0= \begin{pmatrix}
		 \gamma_{12}    & \gamma_{13}    \\
		\gamma_{32} 		    &  \gamma_{33}	 
	\end{pmatrix}\in \SUtwo.
$$ 

Then we can directly compute that
\begin{equation}\label{Eqt:temptemp1}
 x-\gamma^{-1}x\gamma=z=\begin{pmatrix}
 \imagunit(p-q) & 0 & 0\\0 &
		\bullet     		&    \bullet\\				 0 &
		   \bullet    & \bullet   
	\end{pmatrix}.
\end{equation}
Since $p-q$ is positive, we either have $u=p-q=a$,  or $u=b$. Equation \eqref{Eqt:temptemp1} also implies the following relation:
$$
\diag(q,r)-\gamma_0^{-1}\diag(p,r)\gamma_0  \sim \diag(b,c) \mbox{ (or } \diag(a,c)~). 
$$
In fact, the existence of such $\gamma_0$ is verified by Theorem \ref{Thm:2X2Horn} and the assumption $p-q=u=a$ (or $b$) and $p-r=u+v> -c$.

In summary, we conclude that there are two types of solutions $x$ to $x-\gamma^{-1}x\gamma\in \Oxi$    (where $\gamma$ is of the form \eqref{Eqt:gammaB3}) ---  \begin{equation}\label{Eqt:xinC2}x=\imagunit\cdot (a\vec{\omega}_1+ v\vec{\omega}_2), \quad v> b, \end{equation}
and 
\begin{equation}\label{Eqt:xB1mid2}x=\imagunit\cdot (b\vec{\omega}_1+ v\vec{\omega}_2), \quad v> a. \end{equation}
The collection of such $[x]$ is illustrated by the lines $\mathcal{C}_{(2)}$ and $(\Bxi)^1_{(2)} $, respectively, in the following Figure \ref{Picture:genericBxiinchamberwith2morelines}.

A converse fact can be verified ---  \emph{if $[(x,\gamma)]$ is a point in $\Reducedxi  $ where $x$ belongs to the boundary $(\Bxi)^1_{(2)} $ (i.e., $x$ is of the form \eqref{Eqt:xB1mid2}), then $\gamma$ must be $\Cartan$-conjugate to the form \eqref{Eqt:gammaB3}. }

 \begin{figure}[htbp]   \centering  
  \scalebox{0.9}{ 
      \begin{tikzpicture}[>=stealth, line width=0.8pt, font=\large]
            \coordinate (O) at (0,0);
       \def\a{60.3}    
      \def\b{0}     
      \def\c{-60}   
      \def\d{-120}  
      \def\e{180}   
      \def\f{120}   
      \def\lp{30}   
      \def\lm{90}  
    \filldraw[green,line width=2] (2.08,2.37) -- (5.39, 4.33)   arc (30:87.9:5.2) -- (1.04,3.03) -- (2.08,2.37);
         \draw[black] (1.3,1.5) node[right] {$S$};
\draw[black] (3.4,6.3) node[right,green] {$(\Bxi)^2   $};
\draw[black] (-0.1,-0.3) node[left] {0};
      \draw[->] (O) -- ++(\a:4.8) node[above right] {$ \vec{c}$};
      \draw[->] (O) -- ++(\b:4.7) node[right] {$\vec{\alpha}_{1}$};
       \draw[->] (O) -- ++(\f:4.7) node[above right] {$\vec{\alpha}_{2}$};
      \draw[black,   ->] (O) -- ++(\lp:2.9) node[below right, black] {$\vec{\omega}_1$};
      \draw[black,  ->] (O) -- ++(\lm:2.9) node[below left, black] {$\vec{\omega}_2$};
      \draw[black,dashed] (0.05,0.05)--(89.5:5.5)
      node[below left, black] {${l_2}$};
      \draw[black,dashed] (0.05,0.05)--(31:5.5)
      node[below right, black] {${l_1}$}; 
      \draw[blue,dashed] (1,3)--(1,7)
      node[below left, blue] {$(\Bxi)^1_{(2)}  $};
      \draw[blue,dashed]  (2.06,2.32)--(2.06,7) node[blue,right]{$\mathcal{C}_{(2)}$};
      \draw[blue,dashed] (1,3)--(2.06,2.32);
      \draw[blue] (1.2,2.3) node[blue] {$(\Bxi)^1_{(0)}  $};
            \draw[blue,dashed] (2.05,2.31)--(5.35, 4.24)
      node[below, blue] {$(\Bxi)^1_{(1)}  $};
\draw[blue,dashed] (1,2.98)--(5.24, 5.50)
      node[below, blue] {$\mathcal{C}_{(1)}$};

      \draw[red] (1,3) circle (0.03) node[above left,red] {$(\Bxi)^0_{(2)}$};
\draw[red] (2.05,2.31) circle (0.03) node[right,red] {$(\Bxi)^0_{(1)}$};
        \end{tikzpicture}
        }
  \caption{The two lines $\mathcal{C}_1$ and $\mathcal{C}_2$  in $\Bxi \subset S$ for generic $\xi$\\~} \label{Picture:genericBxiinchamberwith2morelines}~ 
\end{figure}

  \item[] 
\begin{flalign}~\mbox{\it Suppose that}&&\label{CASE-B-4} 
    \tag{B-4}\end{flalign}   
\begin{equation}\label{Eqt:B4}\gamma=\begin{pmatrix}
		0     		&     -1				 &   0 \\\gamma_{21} 		    &  0 	 								 & \gamma_{23}\\
		       \gamma_{31} & 0& \gamma_{33}	
	\end{pmatrix}
\end{equation} where every $\gamma_{ij}$ is \textit{nonzero} and 
$$\gamma_0=\begin{pmatrix}
		 \gamma_{21}    & \gamma_{23}    \\
		\gamma_{31} 		    &  \gamma_{33}	 
	\end{pmatrix}\in \SUtwo.
$$ 

Then we can directly compute that
\begin{equation}\label{Eqt:temptemp2}
 x-\gamma^{-1}x\gamma=z=\begin{pmatrix}
\bullet  & 0 & \bullet\\0 &
		     		   \imagunit(q-p)& 0\\				 \bullet &
		   0    & \bullet   
	\end{pmatrix}.
\end{equation}
Since $q-p$ is negative, we must have $q-p=c$. Meanwhile, Equation \eqref{Eqt:temptemp2} implies that
$$
\diag(p,r)-\gamma_0^{-1}\diag(q,r)\gamma_0  \sim \diag(a,b)  . 
$$

According to Theorem \ref{Thm:2X2Horn}, this relation requires 
$p-q=-c\leqslant a$, which is impossible.  So we exclude $\gamma$ of the form \eqref{Eqt:B4}.

  \item[] 
\begin{flalign}~\mbox{\it Suppose that}&&\label{CASE-B-5} 
    \tag{B-5}\end{flalign}   
\begin{equation}\label{Eqt:gammaB5}\gamma=\begin{pmatrix}
		\gamma_{11}     		&     0				 &   \gamma_{13} \\\gamma_{21} 		    &  0 	 								 & \gamma_{23}\\
		       0 & -1& 0	
	\end{pmatrix}
\end{equation} where every $\gamma_{ij}$ is \textit{nonzero} and 
$$\gamma_0=\begin{pmatrix}
		 \gamma_{11}    & \gamma_{13}    \\
		\gamma_{21} 		    &  \gamma_{23}	 
	\end{pmatrix}\in \SUtwo.
$$ 
In fact, this situation can be analyzed analogously to the situation of  Case \eqref{CASE-B-3}, and the conclusion is as follows:   all solutions $x$ to $x-\gamma^{-1}x\gamma\in \Oxi$    (where $\gamma$ is of the form \eqref{Eqt:gammaB5}) are of the form \begin{equation}\label{Eqt:xB1mid1} x=\imagunit\cdot (u\vec{\omega}_1+ b\vec{\omega}_2), \quad u> a, 
\end{equation}
or 
\begin{equation}\label{Eqt:xinC1} x=\imagunit\cdot (u\vec{\omega}_1+ a\vec{\omega}_2), \quad u> b. \end{equation}
The set of  such $[x]$  consists of  two lines   $(\Bxi)^1_{(1)} $ and $\mathcal{C}_{(1)}$, as shown by   Figure \ref{Picture:genericBxiinchamberwith2morelines}.

Moreover, a converse fact can be verified ---  \emph{if $[(x,\gamma)]$ is a point in $\Reducedxi  $ where $x$ belongs to the boundary $(\Bxi)^1_{(1)} $ (i.e., $x$ is of the form \eqref{Eqt:xB1mid1}), then $\gamma$ must be $\Cartan$-conjugate to the form \eqref{Eqt:gammaB5}. }

\item[] 
\begin{flalign}~\mbox{\it Suppose that}&&\label{CASE-B-6} 
    \tag{B-6}\end{flalign} 
\begin{equation}\label{Eqt:gammaB6}
\gamma=\begin{pmatrix}
		 \gamma_{11}     		&        \gamma_{12}				 &   0 \\
		      0  & 0 & -1  \\
	 \gamma_{31} 		    &   \gamma_{32} 	 								 & 0
	\end{pmatrix}
\end{equation} 
where every $\gamma_{ij}$ is \textit{nonzero} and 
$$\gamma_0=\begin{pmatrix}
		 \gamma_{11}    & \gamma_{12}    \\
		\gamma_{31} 		    &  \gamma_{32}	 
	\end{pmatrix}\in \SUtwo.
$$ 
In fact, this situation can be analyzed analogously to the situation of Case \eqref{CASE-B-4}, and the conclusion is that $\gamma$ of the form \eqref{Eqt:gammaB6} does not exist.
\end{itemize} 
 
\begin{flalign}~\mbox{\it The case where none of the entries of  $\gamma$ is unimodular.}&&\label{CASE-C} 
    \tag{C}\end{flalign}

Suppose that $\gamma\in \SUthree$ is not in  Cases \eqref{CASE-A} and \eqref{CASE-B}. In other words, none of the entries $\gamma_{ij}$ is unimodular. Then any diagonal solution $x$ satisfying $x-\gamma^{-1}x\gamma\in \Oxi$ must be of the form
\begin{equation}\label{Eqt:xinBxiinnerpart}
x=\imagunit\cdot (u\vec{\omega}_1+ v\vec{\omega}_2), \quad \mbox{ where }~u>b, v> b, u+v>-c. \end{equation}
In other words, this kind of  $x$ belongs to the inner part $(\Bxi)^2   $ of $\Bxi $.
  
  The corresponding points   $[(x, -{\gamma^{-1}}x\gamma)] \in \Pxi $ are grouped into $(\Pxi)^4 $.

\subsubsection{Stratification of  $\Reducedxi   $ and $\Pxi $} 
  We apply the method of refined stratification in Sections \ref{subsubSec:RPBxi} and \ref{Sec:refineReducedxi}.  For all   $[(x,\gamma)]\in \Reducedxi  $, 
  we already have $G_{x,  \gamma } = H(3)$. To find $G_{x, {\gamma^{-1}}x\gamma}$, we can use    Proposition \ref{Prop:typeII}-(3). Therefore,  based on the stratification of $\Bxi $ in Section \ref{Sec:Bxipolyarea} and the classification cases of $\gamma$ in   Section  \ref{Sec:genericRxirefined},      
     what we obtain is a  decomposition of $\Reducedxi  $    into eight pieces as demonstrated in the following table,  and similarly for $\Pxi $.

       \begin{table}[H]\centering
 \caption{The refined strata of  $\Reducedxi  $   for generic $\xi$ } \label{Table:stratagenericxi} 
       \begin{supertabular}{|p{0.9cm}|p{2.3cm}|p{2.3cm}|p{2.1cm}|  p{2.9cm} | p{2.4cm}   |  }
 \hline 
& Stratum of   $\Reducedxi  $ $=$ $\SET{[(x,\gamma)]}$   & Range of $[x]$ & Range of $\gamma$	   &  Stratum of   $\Pxi $ $=$ $\SET{[(x,-\gamma^{-1}x\gamma)]}$  & Orbit type of $G_{x,\gamma^{-1}x\gamma}$ \\  
  \hline No.1~&
  $(\Reducedxi)^6  $ 	& $ (\Bxi)^2   $ & Case \eqref{CASE-C}     & $(\Pxi)^4  $  & $H(3)$ \\ 
  \hline No.2~&
  $(\Reducedxi)^2_{(0)}$ &	$ (\Bxi)^1_{(0)}  $ & Case \eqref{CASE-B-1}     &  $(\Pxi)^1_{(0)}$ &  $H(2,1)$  \\
\hline No.3~&
  $(\Reducedxi)^2_{(1)}$ &	$ (\Bxi)^1_{(1)}  $ & Case \eqref{CASE-B-5}     &  $(\Pxi)^1_{(1)}$ &  $H(2,1)$  \\
  \hline No.4~&
  $(\Reducedxi)^2_{(2)}$ &	$ (\Bxi)^1_{(2)}  $ & Case \eqref{CASE-B-3}     &  $(\Pxi)^1_{(2)}$ &  $H(2,1)$  \\  
  \hline No.5~&
  $(\Reducedxi)^2_{(1')}$ &	$\mathcal{C}_{(1)}\subset (\Bxi)^2   $ & Case \eqref{CASE-B-5}     &  $(\Pxi)^1_{(1')}$ & $H(2,1)$   \\
   \hline No.6~&
  $(\Reducedxi)^2_{(2')}$ &	$\mathcal{C}_{(2)}\subset (\Bxi)^2   $ & Case \eqref{CASE-B-3}     &  $(\Pxi)^1_{(2')}$ & $H(2,1)$   \\
   \hline No.7~&
  $(\Reducedxi)^0_{(1)}$ &	$(\Bxi)^0_{(1)}  $ & Case \eqref{CASE-A-2}   &  $(\Pxi)^0_{(1)}$ &  $\Cartan$   \\
  \hline No.8~&
  $(\Reducedxi)^0_{(2)}$ &	$(\Bxi)^0_{(2)}  $ & Case \eqref{CASE-A-2}   &  $(\Pxi)^0_{(2)}$ &  $\Cartan$   \\
   \hline
   \end{supertabular}

   \end{table}  
Note that the strata $(\Reducedxi)^2_{(1')}$ and $(\Reducedxi)^2_{(2')}$ (Nos.5 and 6) are the two connected components of the stratum which is denoted by $(\Reducedxi)_{(H(3)),(\Bxi)^2   ,(H(2,1))}$
 (in Section \ref{Sec:refineReducedxi}).

 Finally, we   draw a diagram to show the roles of these strata and the decomposition of maps $\psi$ and $\pi$.

\begin{equation}
	\label{Diagram:SU3arrowsabcrefined}
	\begin{tikzcd}
		(\Reducedxi)^6  \arrow[ddddd,bend right=55,dashed] \arrow[dddd,bend right=55,dashed]\arrow[ddd,bend right=55,dashed]\arrow[dd,bend right=55,dashed]\arrow[d,dashed] \arrow[rr] && (\Pxi)^4   \arrow[ddddd,bend right=55,dashed] \arrow[dddd,bend right=55,dashed]\arrow[ddd,bend right=55,dashed]\arrow[dd,bend right=55,dashed]\arrow[d,dashed] \arrow[rr]   && (\Bxi)^2    \arrow[d,dashed] \arrow[dd,bend left=55,dashed]\arrow[ddd,bend left=55,dashed] \\
		(\Reducedxi)^2_{(0)}  
 \arrow[ddddd,bend right=55,dashed] 
\arrow[dddddd,bend right=55,dashed]\arrow[rr ] && (\Pxi)^1_{(0)} \arrow[ddddd,bend right=55,dashed] 
\arrow[dddddd,bend right=55,dashed] \arrow[rr ]&& (\Bxi)^1_{(0)}   \arrow[ddddd,bend left=55,dashed]\arrow[dddddd,bend left=55,dashed]\\  
		(\Reducedxi)^2_{(1)}  
 \arrow[dddd,bend right=55,dashed] \arrow[rr ] && (\Pxi)^1_{(1)} \arrow[dddd,bend left=55,dashed] \arrow[rr ]  && (\Bxi)^1_{(1)}   \arrow[dddd,bend left=55,dashed]\\  
		(\Reducedxi)^2_{(2)}  
 \arrow[dddd,bend right=55,dashed] 
 \arrow[rr ] && (\Pxi)^1_{(2)} \arrow[dddd,bend left=55,dashed] \arrow[rr ]  && (\Bxi)^1_{(2)}   \arrow[dddd,bend left=55,dashed]\\
       (\Reducedxi)^2_{(1')}  
 \arrow[dd,bend right=55,dashed] \arrow[rr ] && (\Pxi)^1_{(1')} \arrow[dd,bend left=55,dashed] \arrow[rruuuu ]  &&  ~\\  
       (\Reducedxi)^2_{(2')}  
 \arrow[dd,bend right=55,dashed]  \arrow[rr ] && (\Pxi)^1_{(2')} \arrow[dd,bend left=55,dashed] \arrow[rruuuuu ]  && ~\\  
		(\Reducedxi)^0_{(1)}   \arrow[rr ] && (\Pxi)^0_{(1)}  \arrow[rr ]  && (\Bxi)^0_{(1)}    \\
		(\Reducedxi)^0_{(2)}  \arrow[rr ] && (\Pxi)^0_{(2)}  \arrow[rr ]  && (\Bxi)^0_{(2)}     
	\end{tikzcd}
\end{equation}
 In this diagram, there are 8 sub-superintegrable systems, each   sequence starting from a left-hand stratum of $\Reducedxi  $ and ending in a right-hand stratum of $\Bxi $.

\subsection{For   $\xi  $   on the wall}\label{subSec:rankone} 
When $\xi$ lies on either the ${l_1}$ or ${l_2}$ wall, the analysis proceeds analogously to that presented in the preceding section. Consequently, the regularity properties of $\gamma$ and $x$ remain valid as previously established. We have $\dim(\Oxi^0) = 2$. The reduced space $\Reducedxi $ is a $4$-dimensional smooth manifold (which will be refined to find sub-superintegrable systems).  The orbit type of $ \Reducedxi$ is again $H(3)$.

The space $\Bxi $ continues to be a   $2$-dimensional stratified space   described as follows. 

\begin{lemma}\label{Lemma:wallBxishape} Given   $\xi$ on the wall  $l_1$ or $l_2$, i.e., of the form $\imagunit d \vec{\omega}_1=\imagunit  \diag(2d  , -d  , -d  )$ or $ \imagunit d \vec{\omega}_2=\imagunit  \diag( d  ,  d  , -2d  )$ (where $d>0$), we have
\begin{equation}\label{Eqt:wallBxi2uvuplusv}
\Bxi =\SET{\imagunit ( u\vec{\omega}_1+v\vec{\omega}_2)~|~ u\geqslant d ,v\geqslant d ,u+v\geqslant 2d}.
\end{equation}
As a stratified space, we have 
$$
\Bxi =(\Bxi)^2   \sqcup (\Bxi)^1_{(1)}  \sqcup (\Bxi)^1_{(2)}  \sqcup (\Bxi)^0 .
$$
\end{lemma}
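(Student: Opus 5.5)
The plan is to follow the proof of Lemma \ref{Lemma:genericBxishape} closely; only the eigenvalue data of $\xi$ change. I treat $\xi=\imagunit d\vec{\omega}_1=\imagunit\diag(2d,-d,-d)$; the case $\imagunit d\vec{\omega}_2=\imagunit\diag(d,d,-2d)$ is obtained from it by $x\mapsto -x$ composed with the Weyl reflection exchanging $\vec{\omega}_1$ and $\vec{\omega}_2$, which maps both the set $\Bxi$ and the right-hand side of \eqref{Eqt:wallBxi2uvuplusv} to their counterparts. Write $x=\imagunit\underline{x}$ and $z=\imagunit\underline{z}$, so that \eqref{Eqt:Hermitian3eqt} holds. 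Let $\underline{x}$ have eigenvalues $p\geqslant q\geqslant r$, and set $u=p-q$, $v=q-r$ as in \eqref{Eqt:uvpqr}. The summand $-\gamma^{-1}\underline{x}\gamma$ has eigenvalues $-r\geqslant -q\geqslant -p$, and $\underline{z}$ has eigenvalues $\lambda=(2d,-d,-d)$.

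For the inclusion ``$\subseteq$'', I apply the Weyl inequalities of Theorem \ref{Thm:3X3Weyl} exactly as in \eqref{tempineq1}--\eqref{tempineq4}, with $a=2d$ and $b=c=-d$. From $\lambda_1\leqslant \mu_1+\nu_1$ we get $2d\leqslant p-r$, so $u+v\geqslant 2d$; the lower bound $c\geqslant r-p$ gives only the weaker condition $u+v\geqslant d$. The middle-eigenvalue inequalities give $\max(q-p,r-q)\leqslant b=-d$, that is, $u\geqslant d$ and $v\geqslant d$. These match the conditions in \eqref{Eqt:wallBxi2uvuplusv} with $\abs{b}=d$ and $\max(a,-c)=2d$. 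Since $u,v\geqslant d$ already forces $u+v\geqslant 2d$, the third condition is automatically satisfied, and I note this redundancy explicitly. It also shows that $x$ is regular, which agrees with the remark preceding the lemma.

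For the inclusion ``$\supseteq$'', given $u,v\geqslant d$, I set $(p,q,r)=\bigl(\tfrac{2u+v}{3},\tfrac{v-u}{3},-\tfrac{u+2v}{3}\bigr)$ and invoke Horn's Theorem \ref{Thm:3X3Horn} with $\mu=(p,q,r)$, $\nu=(-r,-q,-p)$ and $\lambda=(2d,-d,-d)$. The trace condition holds trivially. I expect the main obstacle to be checking the complete list of Horn inequalities: the generic proof only cites them, and here $\lambda$ has a repeated eigenvalue, so non-strict cases must be handled. I plan to write out each of the twelve inequalities for $n=3$ in the variables $u$, $v$ and $d$. Each should reduce to either $u\geqslant d$, $v\geqslant d$, $u+v\geqslant 2d$, or a trivial inequality such as $u+v\geqslant d$ or $0\leqslant u$. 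Horn's theorem then produces the required Hermitian triple with $\underline{z}=\underline{x}+\underline{y}$ and $\underline{y}\sim-\underline{x}$. Writing $\underline{y}=-\gamma^{-1}\underline{x}\gamma$ for a suitable $\gamma\in\SUthree$ gives $[x]\in\Bxi$.

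Finally, I read off the stratification. The region $\{u\geqslant d,\ v\geqslant d\}$ is a closed quadrant in the $(u,v)$-coordinates of $S$ with corner at $d\vec{\omega}_1+d\vec{\omega}_2$, and it lies inside $S_o$. Its interior is $(\Bxi)^2$. Its two open edges $\{u=d,\ v>d\}$ and $\{v=d,\ u>d\}$ are $(\Bxi)^1_{(1)}$ and $(\Bxi)^1_{(2)}$ (labelled to match the generic picture). The corner point is $(\Bxi)^0$. The edge $(\Bxi)^1_{(0)}$ of the generic case collapses because the segment $u+v=\max(a,-c)$ degenerates to the corner point, and the two corner points of the generic case coincide in this point. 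These pieces are smooth and satisfy the frontier condition, so $\Bxi$ is a stratified space with the four stated strata.
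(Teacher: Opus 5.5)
Your proposal is correct and follows the route the paper intends; the paper only says the wall case is analogous to Lemma~\ref{Lemma:genericBxishape}: Weyl's inequalities with $\lambda=(2d,-d,-d)$ give $u\geqslant d$, $v\geqslant d$, $u+v\geqslant 2d$, and Horn's Theorem~\ref{Thm:3X3Horn} gives the converse, and your explicit check of all twelve Horn inequalities is more careful than the generic proof, which merely asserts them. One cosmetic slip: in Figure~\ref{Picture:wallBxiinchamber} the edge $\{u=d,\ v>d\}$ (parallel to $l_2$) is $(\Bxi)^1_{(2)}$ and $\{v=d,\ u>d\}$ is $(\Bxi)^1_{(1)}$, so your pairing of edges with labels is reversed.
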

The shape of $\Bxi $ is illustrated by Figure \ref{Picture:wallBxiinchamber}. The vertex point $(\Bxi)^0 $ is $d(\vec{\omega}_1+\vec{\omega}_2)$.

 \begin{figure}[htbp]   \centering  
  \scalebox{0.9}{ 
      \begin{tikzpicture}[>=stealth, line width=0.8pt, font=\large]
            \coordinate (O) at (0,0);
       \def\a{60}    
      \def\b{0}     
      \def\c{-60}   
      \def\d{-120}  
      \def\e{180}   
      \def\f{120}   
      \def\lp{30}   
      \def\lm{90}  
    \filldraw[green,line width=2] (1.03,1.81) -- (5.39, 4.33)   arc (30:88.4:5.2) ;
         \draw[black] (1.9,1.8) node[right] {$S$};
\draw[black] (3.4,6.3) node[right,green] {$(\Bxi)^2   $};
\draw[black] (-0.1,-0.3) node[left] {0};
      \draw[->] (O) -- ++(\a:4.8) node[above right] {$ \vec{c}$};
      \draw[->] (O) -- ++(\b:4.7) node[right] {$\vec{\alpha}_{1}$};
       \draw[->] (O) -- ++(\f:4.7) node[above right] {$\vec{\alpha}_{2}$};
      \draw[black,   ->] (O) -- ++(\lp:2.9) node[below right, black] {$\vec{\omega}_1$};
      \draw[black,  ->] (O) -- ++(\lm:2.9) node[below left, black] {$\vec{\omega}_2$};
      \draw[black,dashed] (0.05,0.05)--(89.5:5.5)
      node[below left, black] {${l_2}$};
      \draw[black,dashed] (0.05,0.05)--(31:5.5)
      node[below right, black] {${l_1}$}; 
      \draw[blue,dashed] (1,1.777)--(1,7)
      node[below left, blue] {$(\Bxi)^1_{(2)}  $};
             \draw[blue,dashed] (1,1.745)--(5.35, 4.24)
      node[below, blue] {$(\Bxi)^1_{(1)}  $};
      \draw[red] (1,1.763) circle (0.03) node[below ,red] {$(\Bxi)^0 $};
         \end{tikzpicture}
        }
  \caption{The stratified space $\Bxi \subset S$ for  $\xi$ on the wall\\~} \label{Picture:wallBxiinchamber}~ 
\end{figure}

{The stratification of  $\Reducedxi  $ and $\Pxi $ can be considered as a limiting case of the previous section.  The eight pieces of $\Reducedxi  $ for generic $\xi$ in Table \ref{Table:stratagenericxi} now merge into four pieces and we obtain four   sub-superintegrable systems (as demonstrated in the following table and diagram).
   
      \begin{table}[H]
 \caption{The refined strata of  $\Reducedxi $   for   $\xi$ on the wall } \label{Table:stratawallxi} 
  \begin{supertabular}{|p{0.9cm}|p{2.4cm}|p{2.4cm}|p{2.1cm}|  p{3cm} | p{2.5cm}   |  }
 \hline
 &
 Stratum of   $\Reducedxi $ $=$ $\SET{[(x,\gamma)]}$   & Range of $[x]$ & Range of $\gamma$	   &  Stratum of   $\Pxi $ $=$ $\SET{[(x,-\gamma^{-1}x\gamma)]}$  &   Orbit~ type~of $G_{x,\gamma^{-1}x\gamma}$ \\  
 
 \hline No.1 ~& 
  $(\Reducedxi)^4$ 	& $ (\Bxi)^2   $ & Case \eqref{CASE-C}     & $(\Pxi)^2$  & $H(3)$ \\ 
\hline  No.2 ~& 
  $(\Reducedxi)^2_{(1)}$ &	$ (\Bxi)^1_{(1)}  $ & Case \eqref{CASE-B-5}     &  $(\Pxi)^1_{(1)}$ &  $H(2,1)$  \\
  \hline  No.3 ~& 
  $(\Reducedxi)^2_{(2)}$ &	$ (\Bxi)^1_{(2)}  $ & Case \eqref{CASE-B-3}     &  $(\Pxi)^1_{(2)}$ &  $H(2,1)$  \\  
  \hline  No.4 ~& 
  $(\Reducedxi)^0$   &	$(\Bxi)^0$ & Cases \eqref{CASE-B-1}   or   \eqref{CASE-A-2}   &  $(\Pxi)^0$ &  $\Cartan$   \\
  \hline
   \end{supertabular}
\end{table}

\begin{equation}
	\label{Diagram:SU3arrowsonwallrefined}
	\begin{tikzcd}
		(\Reducedxi)^4\arrow[dd,bend right=55,dashed] \arrow[d,dashed] \arrow[rr] && (\Pxi)^2  \arrow[dd,bend right=55,dashed]\arrow[d,dashed] \arrow[rr]   && (\Bxi)^2    \arrow[d,dashed] \arrow[dd,bend left=55,dashed]  \\
		(\Reducedxi)^2_{(1)}  
\arrow[dd,bend right=55,dashed]\arrow[rr ] && (\Pxi)^1_{(1)} \arrow[dd,bend right=55,dashed]  \arrow[rr ]&& (\Bxi)^1_{(1)}   \arrow[dd,bend left=55,dashed] \\  
		(\Reducedxi)^2_{(2)}  
 \arrow[d, dashed] \arrow[rr ] && (\Pxi)^1_{(2)} \arrow[d, dashed] \arrow[rr ]  && (\Bxi)^1_{(2)}   \arrow[d, dashed]\\  
		(\Reducedxi)^0   
 \arrow[rr ] && (\Pxi)^0  \arrow[rr ]  && (\Bxi)^0      
	\end{tikzcd}
\end{equation}

We also refer to \cite{MGRarxiv2506.16610v2} for  more studies of this particular stratified space.

\subsection{For $\xi $   in  the   $\vec{c}$ direction}\label{SubSec:xiinalpha13} 
This is indeed \emph{the most complicated case}. Let us fix $\xi =\imagunit a\vec{c}=\imagunit\cdot \diag(a $, $0$, $-a  )$, where $a$ is a positive constant. A straightforward analysis reveals that $\dim(\Oxi^0) = 4$.  
The associated reduced spaces $\Reducedxi  $, $\Pxi $, and $\Bxi $ are all stratified spaces.  
In the subsequent subsections, we will provide a detailed description of these strata and the arrows that connect them (see Diagram \ref{Diagram:SU3arrowscdirectionrefined}).

\begin{lemma}\label{Lemma:veccBxishape} For   $\xi =\imagunit a\vec{c}=\imagunit\cdot \diag(a $, $0$, $-a  )$ with $a>0$, the space $\Bxi$ is given by
\begin{eqnarray}\label{Eqt:veccBxi2uvuplusv}
\Bxi  &=& \SET{\imagunit ( u\vec{\omega}_1+v\vec{\omega}_2)~|~ u\geqslant 0 ,v\geqslant 0 ,u+v\geqslant a}
\\\nonumber 
&=&  (\Bxi)^2    \sqcup (\Bxi)^1_{(0)}  \sqcup (\Bxi)^1_{(1)}  \sqcup (\Bxi)^1_{(2)}  \sqcup
(\Bxi)^0_{(1)}  \sqcup(\Bxi)^0_{(2)}  .
\end{eqnarray}

\end{lemma}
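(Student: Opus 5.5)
We follow the proof of Lemma \ref{Lemma:genericBxishape} almost verbatim, now with $(\lambda_1,\lambda_2,\lambda_3)=(a,0,-a)$. As before, write $x=\imagunit\underline{x}$ and $z=\imagunit\underline{z}$, so Equation \eqref{Eqt:SU3eqt} becomes \eqref{Eqt:Hermitian3eqt}. Take $\underline{x}$ diagonal with eigenvalues $p\geqslant q\geqslant r$, $p+q+r=0$, and set $u=p-q\geqslant 0$, $v=q-r\geqslant 0$ as in \eqref{Eqt:uvpqr}. Unlike the generic case, we do not first prove that $x$ is regular: points on the walls $l_1,l_2$ (where $u=0$ or $v=0$) are expected to belong to $\Bxi$. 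So from the outset we allow weak inequalities $p\geqslant q\geqslant r$. Then $-\gamma^{-1}\underline{x}\gamma$ has eigenvalues $-r\geqslant -q\geqslant -p$, and $\underline{z}$ has eigenvalues $a>0>-a$.

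\textbf{Necessity.} We apply Weyl's inequalities (Theorem \ref{Thm:3X3Weyl}) exactly as in \eqref{tempineq1}--\eqref{tempineq4}. Setting $b=0$ and $c=-a$ gives:
\begin{itemize}
\item[] $a\leqslant p-r$ and $r-p\leqslant -a$, which together give $u+v=p-r\geqslant a$;
\item[] $0\leqslant \min(p-q,q-r)$ and $\max(q-p,r-q)\leqslant 0$, which reduce to $u\geqslant 0$, $v\geqslant 0$.
\end{itemize}
The second pair is automatic from the ordering. Hence $\Bxi$ is contained in the right-hand side of \eqref{Eqt:veccBxi2uvuplusv}.

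\textbf{Sufficiency.} Given $u,v\geqslant 0$ with $u+v\geqslant a$, put $p=\frac{2u+v}{3}$, $q=\frac{-u+v}{3}$, $r=\frac{-u-2v}{3}$. We then invoke Horn's Theorem \ref{Thm:3X3Horn} with $\mu=(p,q,r)$, $\nu=(-r,-q,-p)$, $\lambda=(a,0,-a)$. The trace condition holds trivially. The Horn inequalities should reduce to the two families just obtained, namely $p-r\geqslant \max(a,-c)=a$ and $\min(p-q,q-r)\geqslant |b|=0$, so they are all satisfied. Horn's theorem then supplies Hermitian $\underline{x}'$, $\underline{y}$, $\underline{z}$ with the prescribed spectra and $\underline{z}=\underline{x}'+\underline{y}$. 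Because $\underline{y}$ is isospectral to $-\underline{x}'$, we can write $\underline{y}=-\gamma^{-1}\underline{x}'\gamma$ for some $\gamma\in\SU(3)$; a $\mathrm{U}(3)$ conjugator can be rescaled by a scalar into $\SU(3)$. Therefore $[\imagunit\underline{x}']\in\Bxi$.

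\textbf{Main obstacle.} The delicate point is checking that Theorem \ref{Thm:3X3Horn}, and the Weyl inequalities as recalled, remain valid when eigenvalues coincide: here $b=0$ lies between the others, and $\mu,\nu$ may be degenerate on the walls. Horn's inequalities are stated for weakly decreasing spectra, so this should be harmless. I would still recheck the full list of Horn inequalities with $\lambda_2=0$ directly, rather than relying on the generic-case reduction. If that reduction is unclear, a fallback for the wall points is to build $\gamma$ explicitly from the $2\times 2$ Horn Theorem \ref{Thm:2X2Horn}, using a block-form $\gamma$ as in Case \eqref{CASE-B-3}.

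\textbf{Stratification.} The decomposition into six strata is read off from the polygon $\{u\geqslant 0,\ v\geqslant 0,\ u+v\geqslant a\}$:
\begin{itemize}
\item[] the interior $(\Bxi)^2$;
\item[] three open edges: $(\Bxi)^1_{(0)}$ on $u+v=a$, $(\Bxi)^1_{(1)}\subset l_1$ where $v=0$, and $(\Bxi)^1_{(2)}\subset l_2$ where $u=0$;
\item[] two corners $(\Bxi)^0_{(1)}=\imagunit a\vec{\omega}_1$ and $(\Bxi)^0_{(2)}=\imagunit a\vec{\omega}_2$.
\end{itemize}
Consistent with Assumption \ref{Assumption:sameorbittype}, the stabilizers are constant on each piece: $\Cartan$ on the interior and on $(\Bxi)^1_{(0)}$, and $K$ on the pieces lying in the walls.
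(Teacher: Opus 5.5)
Your proposal is correct and is essentially the argument the paper intends: the paper writes no separate proof for this lemma, and the intended proof is the Weyl-inequality/Horn-theorem proof of Lemma \ref{Lemma:genericBxishape} specialized to $(a,b,c)=(a,0,-a)$, with the regularity step dropped so that wall points ($u=0$ or $v=0$) are admitted, exactly as you do. The obstacle you flag is not a real one, since Horn's theorem holds for weakly ordered spectra, and with $\mu=(p,q,r)$, $\nu=(-r,-q,-p)$, $\lambda=(a,0,-a)$ every inequality in Theorem \ref{Thm:3X3Horn} reduces to one of $p-r\geqslant a$, $\min(p-q,q-r)\geqslant 0$, or $0\geqslant -a$.
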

The shape of $\Bxi $ is illustrated by Figure \ref{Picture:veccBxiinchamber}. The two corners are $(\Bxi)^0_{(1)}  =a\vec{\omega}_1$, and $(\Bxi)^0_{(2)}  =a\vec{\omega}_2$, respectively.

 \begin{figure}[htbp]   \centering  
  \scalebox{0.9}{ 
      \begin{tikzpicture}[>=stealth, line width=0.8pt, font=\large]
            \coordinate (O) at (0,0);
       \def\a{60}    
      \def\b{0}     
      \def\c{-60}   
      \def\d{-120}  
      \def\e{180}   
      \def\f{120}   
      \def\lp{30}   
      \def\lm{90}  
    \filldraw[green,line width=2] (3.28,1.98) -- (5.95, 3.33)   arc (30:89.2:6.9) -- (0.05,4.0) ;
         \draw[black] (1.1,1.5) node[right] {$S$};
\draw[black] (4.8,4.9) node[right,green] {$(\Bxi)^2   $};
\draw[black] (-0.1,-0.3) node[left] {0};
      \draw[->] (O) -- ++(\a:4.8) node[above right] {$ \vec{c}$};
      \draw[->] (O) -- ++(\b:4.7) node[right] {$\vec{\alpha}_{1}$};
       \draw[->] (O) -- ++(\f:4.7) node[above right] {$\vec{\alpha}_{2}$};
      \draw[black,   ->] (O) -- ++(\lp:2.9) node[below right, black] {$\vec{\omega}_1$};
      \draw[black,  ->] (O) -- ++(\lm:2.9) node[below left, black] {$\vec{\omega}_2$};
       \draw[blue,dashed] (0,4.1)--(0,7)
      node[below left, blue] {$(\Bxi)^1_{(2)}  $};

\draw[blue,dashed] (0,4.0)--(3.1,5.9)
      node[above, blue] {$\mathcal{C}_{(2)}$};
      
      \draw[blue,dashed] (0,4.0)--(3.30,1.93);
      \draw[blue] (1.4,2.5) node[blue] {$(\Bxi)^1_{(0)}  $};
            \draw[blue,dashed] (3.30,1.93)--(5.95, 3.28)
      node[below, blue] {$(\Bxi)^1_{(1)}  $};
      \draw[red] (0,4.0) circle (0.03) node[above left,red] {$(\Bxi)^0_{(2)}$};
\draw[red] (3.30,1.93) circle (0.03) node[right,red] {$(\Bxi)^0_{(1)}$};
\draw[blue,dashed] (3.30,1.93)--(3.3,5.3)
      node[below right, blue] {$\mathcal{C}_{(1)}$};
        \end{tikzpicture}
        }
  \caption{The stratified space $\Bxi \subset S$ for   $\xi$ in the $\vec{c} $ direction\\~} \label{Picture:veccBxiinchamber}~ 
\end{figure}

\subsubsection{Stratification of $\Reducedxi  $}

Our notation for a pair $(x,\gamma)$  always refers to a solution to $x-\gamma^{-1}x\gamma=z$ for some $z\in \Oxi$, where  $x $ is diagonalized as usual $x=\imagunit (p,q,r)$ with $p\geqslant q\geqslant r$ and $p+q+r=0$.

If $\xi$ is generic, we have claimed that none of the diagonal elements of $\gamma$ can be unimodular. But when $\xi=\imagunit\cdot \diag(a  ,  0 , -a  )$ is in the $\vec{c}$ direction, $\gamma$ can take this form. Specifically, it can be any of the following (D) and (E) cases up to $\Cartan$-conjugacy:
\begin{itemize}
  \item[] 
\begin{flalign}~\mbox{\it $\gamma$ is of the form}&&\label{CASE-D-1} 
    \tag{D-1} \end{flalign} $$ \begin{pmatrix}
	e^{\imagunit \theta_1}	& 0									  &0 \\0&0&e^{\imagunit \theta_2}
	 	  \\
	0 	  	&   			    -1 & 0
\end{pmatrix},\quad (\theta_1+\theta_2\equiv 2\pi).$$
  \item[] 
\begin{flalign}~\mbox{\it $\gamma$ is of the form}&&\label{CASE-D-2} 
    \tag{D-2}\end{flalign}  $$ \begin{pmatrix}
	0	& 0&			e^{\imagunit \theta_1}						  \\0&e^{\imagunit \theta_2}
	 		& 0									  \\
	  			    1 &0&0
\end{pmatrix},\quad (\theta_1+\theta_2\equiv \pi).$$
  \item[] 
\begin{flalign}~\mbox{\it $\gamma$ is of the form}&&\label{CASE-D-3} 
    \tag{D-3}\end{flalign}  
   $$ \begin{pmatrix}
	0	& 			-1						  &0 \\e^{\imagunit \theta_1}
	 		& 0									  &   0\\
	0 	  	& 0 			  & e^{\imagunit \theta_2}
\end{pmatrix},\quad (\theta_1+\theta_2\equiv 2\pi). $$
  \item[] 
\begin{flalign}~\mbox{\it $\gamma$ is of the form}&&\label{CASE-E-1} 
    \tag{E-1}\end{flalign}   $$ \begin{pmatrix}
	e^{\imagunit \theta_1}	& 0									  &0 \\0&\gamma_{22}&\gamma_{23}
	 	  \\
	0 	  	&   			   \gamma_{32} & \gamma_{33}
\end{pmatrix},\quad \mbox{ where every }\gamma_{ij}\neq 0.$$
  \item[] 
\begin{flalign}~\mbox{\it $\gamma$ is of the form}&&\label{CASE-E-2} 
    \tag{E-2}\end{flalign}  
  $$ \begin{pmatrix}
	\gamma_{11}	& 0&			\gamma_{13}						  \\0&e^{\imagunit \theta_2}
	 		& 0									  \\
	  			    \gamma_{31} &0&\gamma_{33}
\end{pmatrix},\quad \mbox{ where every }\gamma_{ij}\neq 0.$$
    
  \item[] 
\begin{flalign}~\mbox{\it $\gamma$ is of the form}&&\label{CASE-E-3} 
    \tag{E-3}\end{flalign}
   $$ \begin{pmatrix}
	\gamma_{11}	& 		\gamma_{12}						  &0 \\\gamma_{21}
	 		& \gamma_{22}									  &   0\\
	0 	  	& 0 			  & e^{\imagunit \theta_3}
\end{pmatrix},\quad \mbox{ where every }\gamma_{ij}\neq 0. $$
\end{itemize}

By carefully analyzing Cases \eqref{CASE-D-1}–\eqref{CASE-E-3}, together with Cases \eqref{CASE-A}, \eqref{CASE-B}, and \eqref{CASE-C} introduced in Section \ref{Sec:genericRxirefined}, we refine $\Pxi$ into 9 pieces and $\Reducedxi$ into 16 pieces. For brevity, we omit the details of this refinement. The properties of the resulting 16 strata of $\Reducedxi$ are summarized in the following three tables.
 
       \begin{table}[H]\centering
 \caption{The refined strata of  $\Reducedxi  $  for  $\xi$ in the $\vec{c}$ direction (Part I)} \label{Table:strataveccxi1} 
       \begin{supertabular}{|p{0.9cm}|p{2.2cm}| p{1.2cm}|p{2.3cm}|p{2.4cm}|  p{2.6cm} | p{1.5cm}      |  }
 \hline
 &
 Stratum of   $\Reducedxi  $  $=$ $\SET{[(x,\gamma)]}$   &  Orbit type of $G_{x, \gamma}$   & Range of $[x]$ & Range of $\gamma$	   &  Stratum of   $\Pxi $ $=$ $\SET{[(x,-\gamma^{-1}x\gamma)]}$  & Orbit type of $G_{x,\gamma^{-1}x\gamma}$  \\  
  \hline  No.1 ~& 
  $(\Reducedxi)^6  $ &	$H(3)$ & $ (\Bxi)^2   $ & Case \eqref{CASE-C}     & $(\Pxi)^4  $  & $H(3)$ \\ 
  \hline  No.2 ~&
  $(\Reducedxi)^2_{(0)}$ & $H(3)$ &	$ (\Bxi)^1_{(0)}  $ & Case \eqref{CASE-B-1} or \eqref{CASE-B-2}    &  $(\Pxi)^1_{(0)}$ &  $H(2,1)$  \\
\hline  No.3 ~&
  $(\Reducedxi)^2_{(1)}$ & $H(3)$ &	$ (\Bxi)^1_{(1)}  \subset l_1$ & Case \eqref{CASE-B-5} or \eqref{CASE-B-6}    &  $(\Pxi)^1_{(1)}$ &  $H(2,1)$  \\
  \hline  No.4 ~& 
  $(\Reducedxi)^2_{(2)}$ & $H(3)$ &	$ (\Bxi)^1_{(2)}  \subset l_2$ & Case \eqref{CASE-B-3} or \eqref{CASE-B-4}    &  $(\Pxi)^1_{(2)}$ &  $H(2,1)$  \\  
  \hline  No.5 ~&
  $(\Reducedxi)^2_{(1')}$ & $H(3)$ &	$\mathcal{C}_{(1)}\subset (\Bxi)^2   $ & Case \eqref{CASE-B-5} or \eqref{CASE-B-6}    &  $(\Pxi)^1_{(1')}$ & $H(2,1)$   \\
   \hline  No.6 ~&
  $(\Reducedxi)^2_{(2')}$ & $H(3)$ &	$\mathcal{C}_{(2)}\subset (\Bxi)^2   $ & Case \eqref{CASE-B-3}  or \eqref{CASE-B-4}   &  $(\Pxi)^1_{(2')}$ & $H(2,1)$   \\
   \hline  
   \end{supertabular}
   \end{table}
   The strata numbered 1 $\sim$ 6 in the above table arise for the same reasons as in the case of generic \(\xi\). However, the corresponding ranges of \(\gamma\) differ from those reported in Table \ref{Table:stratagenericxi}.

The following table presents four special strata, denoted by Nos.7, 8, \(7^{+}\), and \(8^{+}\). In fact, the two rows  Nos.7 and 7$^+$ together replace No.7   in Table \ref{Table:stratagenericxi}, in the generic-\(\xi\) case. The roles of Nos.8 and 8$^+$ are similar.
    

   \begin{table}[H]\centering
 \caption{The refined strata of  $\Reducedxi  $  for  $\xi$ in the $\vec{c}$ direction (Part II)} \label{Table:strataveccxi2} 
       \begin{supertabular}{|p{0.9cm}|p{2.2cm}| p{1.2cm}|p{2.3cm}|p{2.4cm}|  p{2.6cm} | p{1.5cm}      |  }\hline
         No.7 ~&
   $(\Reducedxi)^0_{(1)}$ & $H(3)$	& $(\Bxi)^0_{(1)}  $ & Case \eqref{CASE-A-1} or \eqref{CASE-A-2}   &  $(\Pxi)^0_{(1)}$ &  $\Cartan$   \\
   \hline  No.8 ~&
   $(\Reducedxi)^0_{(2)}$ & $H(3)$	& $(\Bxi)^0_{(2)}  $ & Case \eqref{CASE-A-1} or \eqref{CASE-A-2}   &  $(\Pxi)^0_{(2)}$ &  $\Cartan$   \\ 
\hline  
       No.7$^+$ ~&
  $(\Reducedxi)^2_{(1+)}$ & $H(3)$	& $(\Bxi)^0_{(1)}  $ & Case   \eqref{CASE-B-1}, \eqref{CASE-B-2},    \eqref{CASE-B-3}, or \eqref{CASE-B-6}   &  $(\Pxi)^0_{(1)}$ &  $\Cartan$   \\
 \hline   No.8$^+$ ~&
  $(\Reducedxi)^2_{(2+)}$ & $H(3)$	& $(\Bxi)^0_{(2)}  $ & Case   \eqref{CASE-B-1}, \eqref{CASE-B-2}, \eqref{CASE-B-4} or \eqref{CASE-B-5}    &  $(\Pxi)^0_{(2)}$ &  $\Cartan$   \\ 
   \hline  
   \end{supertabular}
   \end{table}
   The  strata $(\Reducedxi)^0_{(1)}$ and $(\Reducedxi)^0_{(2)}$ are, respectively, the limit points of $(\Reducedxi)^2_{(1+)}$ and $(\Reducedxi)^2_{(2+)}$, and  many other strata among Nos.1 $\sim$ 6 (see Diagram \ref{Diagram:SU3arrowsabcrefined}). 
   
   Notably, both corresponding strata $(\Reducedxi)^2_{(1+)}$ and $(\Reducedxi)^2_{(2+)}$ have dimension \(2\)  and \emph{trivial} presymplectic structure.

   The remaining  strata are described as in the following table.
   \begin{table}[H]\centering
 \caption{The refined strata of  $\Reducedxi  $  for  $\xi$ in the $\vec{c}$ direction (Part III) } \label{Table:strataveccxi3} 
       \begin{supertabular}{|p{0.9cm}|p{2.2cm}| p{1.2cm}|p{2.3cm}|p{2.4cm}|  p{2.6cm} | p{1.5cm}      |  }\hline
   No.9 ~& $(\Reducedxi)^4 $
& $H(2,1)$ & $(\Bxi)^2   $ & Case  \eqref{CASE-E-1},\eqref{CASE-E-2} or \eqref{CASE-E-3} & $(\Pxi)^2 $ & $H(2,1)$\\
   \hline  No.10 ~& $(\Reducedxi)^3_{(0)}$
 &$H(2,1)$& $ (\Bxi)^1_{(0)}  $ & Case \eqref{CASE-E-1},\eqref{CASE-E-2} or \eqref{CASE-E-3} & $(\Pxi)^1_{(0)}$ & $H(2,1)$\\
   \hline  No.11 ~& $(\Reducedxi)^3_{(1)}$
 &$H(2,1)$&$ (\Bxi)^1_{(1)}  \subset l_1$& Case \eqref{CASE-E-2} or \eqref{CASE-E-3} & $(\Pxi)^1_{(1)}$ & $H(2,1)$\\
   \hline   
   No.12 ~& $(\Reducedxi)^3_{(2)}$
 &$H(2,1)$& $ (\Bxi)^1_{(2)}  \subset l_2$ & Case \eqref{CASE-E-1} or \eqref{CASE-E-2}& $(\Pxi)^1_{(2)}$ & $H(2,1)$ \\
   \hline  
   No.13 ~& $(\Reducedxi)^1_{(1)}$
 &$H(2,1)$&$(\Bxi)^0_{(1)}  $& Case \eqref{CASE-D-2} or \eqref{CASE-D-3} & $(\Pxi)^0_{(1)}$ & $\Cartan$ \\
   \hline  No.14 ~& $(\Reducedxi)^1_{(2)}$
 &$H(2,1)$& $(\Bxi)^0_{(2)}  $ & Case \eqref{CASE-D-1} or \eqref{CASE-D-2} & $(\Pxi)^0_{(2)}$ & $\Cartan$\\
   \hline   
    \end{supertabular}
   \end{table}

Compared with Table \ref{Table:stratagenericxi} for generic $\xi$,  more strata of $\Reducedxi  $ appear for $\xi$ in the $\vec{c}$ direction (Nos.7$^+$, 8$^+$, and 9 $\sim$ 14). Note also that  $\Pxi $ has a stratum $(\Pxi)^2 $ (appeared in row No.9) which does not show up for generic $\xi$.    Moreover, five strata of $\Reducedxi  $ (Nos.10 $\sim$ 14) are   odd-dimensional, and they are  indeed presymplectic.
As for the boundary relations, we can draw the following diagram connecting  strata Nos.9 $\sim$ 14:

\begin{equation}
	\label{Diagram:SU3arrowscdirectionrefined}
	\begin{tikzcd}
		(\Reducedxi)^4  \arrow[ddd,bend right=55,dashed]\arrow[dd,bend right=55,dashed]\arrow[d,dashed] \arrow[rr] && (\Pxi)^2   \arrow[ddd,bend right=55,dashed]\arrow[dd,bend right=55,dashed]\arrow[d,dashed] \arrow[rr]   && (\Bxi)^2    \arrow[d,dashed] \arrow[dd,bend left=55,dashed]\arrow[ddd,bend left=55,dashed] \\
		(\Reducedxi)^3_{(0)}  
 \arrow[ddd,bend right=55,dashed] 
\arrow[dddd,bend right=55,dashed]\arrow[rr ] && (\Pxi)^1_{(0)} \arrow[ddd,bend right=55,dashed] 
\arrow[dddd,bend right=55,dashed] \arrow[rr ]&& (\Bxi)^1_{(0)}   \arrow[ddd,bend left=55,dashed]\arrow[dddd,bend left=55,dashed]\\  
		(\Reducedxi)^3_{(1)}  
 \arrow[dd,bend right=55,dashed] \arrow[rr ] && (\Pxi)^1_{(1)} \arrow[dd,bend right=55,dashed] \arrow[rr ]  && (\Bxi)^1_{(1)}   \arrow[dd,bend left=55,dashed]\\  
		(\Reducedxi)^3_{(2)}  
 \arrow[dd,bend right=55,dashed] 
 \arrow[rr ] && (\Pxi)^1_{(2)} \arrow[dd,bend right=55,dashed] \arrow[rr ]  && (\Bxi)^1_{(2)}   \arrow[dd,bend left=55,dashed]\\    
		(\Reducedxi)^1_{(1)}   \arrow[rr ] && (\Pxi)^0_{(1)}  \arrow[rr ]  && (\Bxi)^0_{(1)}    \\
		(\Reducedxi)^1_{(2)}  \arrow[rr ] && (\Pxi)^0_{(2)}  \arrow[rr ]  && (\Bxi)^0_{(2)}     
	\end{tikzcd}
\end{equation}

Of course there are other boundary relations between the strata Nos.1 $\sim$ 8, 7$^+$, 8$^+$, and   9 $\sim$ 14. To see the whole picture of relations between $\Reducedxi  $, $\Pxi $, and $\Bxi $, one needs to  modify Diagram \ref{Diagram:SU3arrowsabcrefined}, and combine it with Diagram \ref{Diagram:SU3arrowscdirectionrefined}.

\appendix

\section{Review of stratified spaces}\label{Apd:stratifiedspace}

To establish precise terminology, we adopt the definition of a stratified space as presented in \cite{Pflaumbook, SLannals,PreSLstratifiedspace1,PreSLstratifiedspace2}:

\begin{definition}
	A \textit{stratified space} $\mathcal{M}=(\mathcal{M},{A})$ is a Hausdorff and paracompact topological space $\mathcal{M}$ equipped with a decomposition, which is a collection of disjoint, locally finite, locally closed, and connected smooth manifolds $\mathcal{M}_\alpha\subset \mathcal{M}$, termed strata, indexed by a finite, partially ordered set ${A}$ (with the partial order denoted by $\partialless $). The following conditions must hold:
	\begin{itemize}
		\item The space $\mathcal{M}$ is the disjoint union of its strata: $\mathcal{M}=\bigsqcup_{\alpha\in{A}} \mathcal{M}_{\alpha}$.
		\item (The Frontier Condition) For all  strata $\mathcal{M}_{\alpha}$ and $\mathcal{M}_{\beta}$, the condition $\mathcal{M}_{\alpha}\cap \overline{\mathcal{M}}_{\beta}\neq \emptyset$ is equivalent to   $\mathcal{M}_{\alpha}\subseteq \overline{\mathcal{M}}_{\beta}$, and to  $\alpha\partialless \beta$.
	\end{itemize}
	Furthermore, the decomposition is required to satisfy Whitney's  (B)-regularity condition, which ensures that $\mathcal{M}$ possesses the property of conical local triviality (see \cite{Pflaumpaper,Pflaumbook}).
\end{definition}

 The stratum of highest dimension is called the \textit{maximum stratum}. If this stratum is unique, it is denoted by $\mathcal{M}_{\maxpart}$ or $(\mathcal{M})^m$, where $m = \dim \mathcal{M}_{\maxpart}$, and the dimension of the stratified space $\mathcal{M}$ is defined as $m$. Analogously, a unique stratum of dimension $k$ within $\mathcal{M} $ is denoted by $(\mathcal{M})^k$. In cases where the dimension $k$ stratum is not unique, it is instead denoted by $(\mathcal{M})^k_i$, where the index $i$ is employed to distinguish between multiple strata of the same dimension.

\begin{example}\label{Example:demodashlines} The half-disc $\mathcal{M}  \subset \mathbb{R}^2$, defined by the set of points $(x, y)$ satisfying $x^2 + y^2 \leqslant 1$ and $y \geqslant 0$, can be stratified into several parts. This stratification includes the interior, denoted as $(\mathcal{M})^2$; the upper semi-circular boundary, denoted as $(\mathcal{M})^1_{\mathrm{upp}}$; the lower straight edge, denoted as $(\mathcal{M})^1_{\lowpart}$; and finally, the two endpoints $(\mathcal{M})^0_{-} = \{(-1, 0)\}$ and $(\mathcal{M}^{0 })_{ +} = \{(1, 0)\}$. To visually represent such a stratified space, we adopt a diagrammatic style where a dashed arrow  connecting an upper stratum to a lower stratum indicates that the latter constitutes a boundary component of the former:
	
	\begin{equation*}
		\begin{tikzcd}
			&(\mathcal{M})^2 \arrow[ld,dashed]\arrow[rd,dashed]&   \\
			(\mathcal{M})^1_{\mathrm{upp}} \arrow[d,dashed]\arrow[rrd,dashed]&&  (\mathcal{M})^1_{\lowpart} \arrow[d,dashed]\arrow[lld,dashed]       \\ (\mathcal{M})^0_{-} && (\mathcal{M})^0_{+}   
		\end{tikzcd}
	\end{equation*}

In fact, there should be two dashed arrows starting from $(\mathcal{M})^{2} $ and pointing to $(\mathcal{M})^{0}_{-}$ and $(\mathcal{M})^{0 }_{ +}$, respectively, because $(\mathcal{M})^{0}_{ -}$ and $(\mathcal{M})^{0 }_{ +}$ are also boundaries of $(\mathcal{M})^{2} $. However, since these boundary relations are implied by compositions of other existing arrows via established paths within the diagram,  {the explicit drawings of the arrows  from $(\mathcal{M})^{2} $   to $(\mathcal{M})^{0}_{ -}$ and $(\mathcal{M})^{0 }_{ +}$ are  \emph{omitted}}.

\end{example}

 Pflaum also introduced   smooth structures on stratified spaces using maximal atlases \cite{Pflaumpaper,Pflaumbook}.   Note that Sjamaar and Lerman also developed a   theory of smooth structures on   stratified spaces \cite{SLannals}, which has been further elaborated upon by numerous researchers \cite{BatesLerman1997,Huebschmann2004}. In the current paper, we adhere to Pflaum's treatment of smooth structures.

\begin{definition}
	A singular chart on a stratified space $\mathcal{M}$ is defined as a continuous map $x: U\to \mathbb{R}^n$, where $U$ is an open subset of $\mathcal{M}$ and $n$ is a positive integer. Note that $n$ is not intrinsically determined by $\mathcal{M}$. This map $x$ must satisfy the following conditions: first, its image $x(U)$ is locally closed in $\mathbb{R}^n$, and $x$ is a homeomorphism onto $x(U)$. Second, for each stratum $\mathcal{M}_{\alpha}$ of $\mathcal{M}$, the image $x(\mathcal{M}_{\alpha}\cap U)$ is a smooth submanifold of $\mathbb{R}^n$, and the restriction $x|_{\mathcal{M}_{\alpha}\cap U}$ is a smooth diffeomorphism onto $x(\mathcal{M}_{\alpha}\cap U)$.
	
	Two singular charts $x: U\to \mathbb{R}^n$ and $y: V\to \mathbb{R}^{m}$ on $\mathcal{M}$ are called compatible if, for every point $p$ in the intersection $U\cap V$, there exists an open neighborhood $W$ of $p$ contained in $U\cap V$, an integer $k$ such that $k\geqslant \max(n,m)$, and a transition map $H: O\to O'$ between open sets of $\mathbb{R}^{k}$ such that $y|_W = H\circ x|_W$.
	
\end{definition}

Our definition of a stratified space corresponds to what is known as a Whitney stratified space as described in \cite{Pflaumpaper, Pflaumbook}. The definition ensures the existence of a smooth singular atlas on any stratified space $\mathcal{M}$, which is defined as a family of singular charts $x_j:U_j\to \mathbb{R}^{n_j}$, where $j\in J$. These charts satisfy the condition that their domains cover the entire stratified space, i.e., $\cup_{j\in J}U_j=\mathcal{M}$, and that every pair of charts is compatible.
\begin{definition}
	Let $U\subset \mathcal{M}$ be an open set. We define the space of smooth functions on $U$, denoted by $\Cinf_\mathcal{M}(U)$, as the commutative algebra of all continuous functions $f: U\to \mathbb{R}$ satisfying the following condition: for every singular chart $x: V\to \mathbb{R}^n$, there exists an open subset $W\subset V\cap U $ and a smooth function $g: \mathbb{R}^n\to \mathbb{R}$ such that   $g\circ x|_W=f|_W$. The space of smooth functions on the entire stratified space $\mathcal{M}$ is denoted by $\Cinf{(\mathcal{M})}$.

\end{definition}

\begin{definition}\label{Def:smoothmapstratified}
A     map $\phi: (\mathcal{M}, A) \to (\mathcal{M}', A')$ between stratified spaces is defined as \textbf{smooth} if it is  continuous and its smoothness is verifiable through local charts of $\mathcal{M}$ and $\mathcal{M}'$. In other words, for any open set $U \subset \mathcal{M}'$ and any smooth function $f \in \Cinf_{\mathcal{M}'}(U)$, the pullback function $\phi^*f$ is also smooth, i.e., $\phi^*f \in \Cinf_{\mathcal{M}}(\phi^{-1}(U))$.
	
 A   smooth map   $\phi:  (\mathcal{M},{A})\to (\mathcal{M}',A')$  is called   a \textbf{morphism} of stratified spaces if, for every stratum $\mathcal{M}_{\alpha}$ ($\alpha \in A$), its image under $\phi$ is entirely contained within a stratum $\mathcal{M}'_{\alpha'}$ ($\alpha' \in A'$).
\end{definition}

\section{Refined stratification by taking inverse images} \label{Apd:refinebyinverseimage} 
\begin{definition}
  Let $\mathcal{M}=(\mathcal{M},{A})$ be a stratified space. A \emph{refinement} of $(\mathcal{M},{A})$ is another stratification of the same topological space, given by
  $$\mathcal{M}=\sqcup_{\alpha'\in A'} \mathcal{M}_{\alpha'},$$
  such that the identity map on $\mathcal{M}$ is a morphism of stratified spaces from $(\mathcal{M},A')$ to $(\mathcal{M},A)$.
\end{definition}
In practice, refinements of $(\mathcal{M},{A})$ are often realized by further stratifying each stratum $\mathcal{M}_{\alpha}$ ($\alpha\in A$) as $\mathcal{M}_{\alpha}=\sqcup_{\beta\in B_\alpha} \mathcal{M}_{\alpha,\beta}$. In this setting, the refined decomposition of $\mathcal{M}$ is given by $\sqcup_{(\alpha,\beta)\in A' } \mathcal{M}_{\alpha,\beta}$, where the new index set $A'$ is defined as $A' =\{(\alpha,\beta):~\alpha\in A,\beta\in B_{\alpha}\}$. The partial order on $A'$ is then defined by
     $$(\alpha,\beta)\partialless (\alpha',\beta') \Leftrightarrow \mathcal{M}_{\alpha,\beta}\cap \overline{\mathcal{M}}_{\alpha',\beta'}\neq \emptyset \quad (\Leftrightarrow \mathcal{M}_{\alpha,\beta}\subseteq \overline{\mathcal{M}}_{\alpha',\beta'}).
     $$

Let \(\mathcal{M}=(\mathcal{M},A)\) and \(\mathcal{N}=(\mathcal{N},B)\) be stratified spaces, and let
\[
f:(\mathcal{M},A)\to(\mathcal{N},B)
\]
be a smooth map. In general, \(f\) need not map each stratum \(\mathcal{M}_{\alpha}\) into a single stratum \(\mathcal{N}_{\beta}\). To ensure that \(f\) has this property, and hence is a morphism of stratified spaces, we refine the stratification of \(\mathcal{M}\) by taking inverse images under \(f\). More precisely, for each \(\alpha\in A\), decompose \(\mathcal{M}_{\alpha}\) as
\[
\mathcal{M}_{\alpha}
=\sqcup_{\beta\in B}\mathcal{M}_{\alpha,\beta},
\qquad
\mathcal{M}_{\alpha,\beta}
:=\mathcal{M}_{\alpha}\cap f^{-1}(\mathcal{N}_{\beta}),
\]
so that
\begin{equation}\label{Eqt:MdecomposedtoAtimesB}
   \mathcal{M}=\sqcup_{(\alpha,\beta)\in A\times B}\mathcal{M}_{\alpha,\beta}.
\end{equation}
If \(\mathcal{M}_{\alpha,\beta}\) is empty, it is omitted, and the corresponding index \((\alpha,\beta)\) is removed from \(A\times B\). \emph{This convention is used throughout the paper.}

Although one might expect \eqref{Eqt:MdecomposedtoAtimesB} to define a new stratification of $\mathcal{M}$, this is not true in general. The following example illustrates the issue.

\begin{example}
   \label{Example:Mdecomposesticky}
   Let
   \[
   \mathcal{M}=\SET{(x,y)\in \mathbb{R}^2\mid x\geqslant 0}
   =(\mathcal{M})^1\sqcup (\mathcal{M})^2,
   \]
   where $(\mathcal{M})^1=\SET{x=0}$ is the boundary line and $(\mathcal{M})^2=\SET{  x>0}$ is the interior. Let $\mathcal{N}=\mathbb{R}^{\geqslant 0}$ be stratified into its endpoint $(\mathcal{N})^0=\{0\}$ and its interior $(\mathcal{N})^1=\mathbb{R}^{>0}$. Consider the smooth map
   \[
   f:\mathcal{M}\to\mathcal{N},\qquad (x,y)\mapsto x(x-1)^2y^2.
   \]
   Taking the inverse images of the strata of $\mathcal{N}$ decomposes $\mathcal{M}$ into the following three subsets:
   \[
   (\mathcal{M})^1_{\mathrm{v}}
   =(\mathcal{M})^1\cap f^{-1}((\mathcal{N})^0)
   =(\mathcal{M})^1=\SET{x=0},
   \]
   \[
   (\mathcal{M})^1_{\mathrm{x}}
   =(\mathcal{M})^2\cap f^{-1}((\mathcal{N})^0)
   =\SET{  x=1 \mbox{ or } y=0},
   \]
   and
   \[
   (\mathcal{M})^2_{\mathrm{o}}
   =(\mathcal{M})^2\cap f^{-1}((\mathcal{N})^1)
   =\SET{x\neq 0,x\neq 1,y\neq 0}.
   \]
   These three subsets do \emph{not} form a stratification of $\mathcal{M}$ for three reasons. First, $(\mathcal{M})^2_{\mathrm{o}}$ is disconnected, having four connected components. Second, $(\mathcal{M})^1_{\mathrm{x}}$ is not a smooth manifold. Third, although the closure $\overline{(\mathcal{M})^1_{\mathrm{x}}}$ meets $(\mathcal{M})^1_{\mathrm{v}}$, the latter is not contained in $\overline{(\mathcal{M})^1_{\mathrm{x}}}$ (violating the Frontier Condition). Hence, a further refinement is required.

 \begin{figure}[htbp]   \centering  
  \scalebox{0.9}{ 
      \begin{tikzpicture}[>=stealth, line width=0.8pt, font=\large]
            \coordinate (O) at (0,0);
\filldraw[green,line width=2] (0.04,0.07) -- (8.65, 0.07)     -- (8.65,5.4) -- (0.04,5.4) ;
         \draw[blue] (0,-0.25)--(0,5.5);
       \draw[red,dashed] (0,2.5)--(9,2.5);
       \draw[red,dashed] (3.5,-0.25)--(3.5, 5.5);
\draw[black] (3.5,2.5) circle (0.04) ; 
\draw[black] (0,2.5) circle (0.04) ;
\draw[black] (1.1,1.1) node[right] {$(\mathcal{M})^2_c$};
\draw[black] (1.1,4.1) node[right] {$(\mathcal{M})^2_a$};
\draw[black] (6.1,1.1) node[right] {$(\mathcal{M})^2_d$};
\draw[black] (6.1,4.1) node[right] {$(\mathcal{M})^2_b$};

\draw[black] (1.4,2.5) node {$(\mathcal{M})^1_{ac}$};

\draw[black] (6.4,2.5) node {$(\mathcal{M})^1_{bd}$};
\draw[black] (3.5,0.9) node {$(\mathcal{M})^1_{cd}$};
\draw[black] (3.5,4.5) node {$(\mathcal{M})^1_{ab}$};

\draw[black] (0,2.5) node[left] {$(\mathcal{M})^0_{\mathrm{v}} $};
\draw[black] (3.5,2.7) node[right] {$(\mathcal{M})^0_{\mathrm{x}} $};
\draw[black] (0,4.7) node[left] {$(\mathcal{M})^1_{\mathrm{v}+} $};
\draw[black] (0,0.7) node[left] {$(\mathcal{M})^1_{\mathrm{v}-} $};

\draw[black] (0,0) node[below] {$x=0$}; 
\draw[black] (3.5,0) node[below] {$x=1$};
\draw[black] (9.2,2.5) node[right] {$y=0$};
        \end{tikzpicture}
        }
  \caption{Refined stratification of the half plane $\mathcal{M}=\SET{x\geqslant 0}$} \label{Fig:refinedhalfplane}~ 
\end{figure}
 We separate the four connected components of $(\mathcal{M})^2_{\mathrm{o}}$ and denote them by $(\mathcal{M})^2_{a}$, $(\mathcal{M})^2_{b}$, $(\mathcal{M})^2_{c}$, and $(\mathcal{M})^2_{d}$, as shown in Figure \ref{Fig:refinedhalfplane}. Next, we decompose $(\mathcal{M})^1_{\mathrm{v}}$ into three pieces: $(\mathcal{M})^1_{\mathrm{v+}}$ (the upper line), $(\mathcal{M})^1_{\mathrm{v-}}$ (the lower line), and $(\mathcal{M})^0_{\mathrm{v}}$ (the origin). Similarly, $(\mathcal{M})^1_{\mathrm{x}}$ is decomposed into four one-dimensional pieces, $(\mathcal{M})^1_{ab}$, $(\mathcal{M})^1_{bd}$, $(\mathcal{M})^1_{cd}$, and $(\mathcal{M})^1_{ac}$, together with the zero-dimensional stratum $(\mathcal{M})^0_{\mathrm{x}}$, as illustrated in Figure \ref{Fig:refinedhalfplane}. These twelve pieces collectively constitute a genuine stratification of $\mathcal{M}$. With respect to this refined stratification, the map $f:\mathcal{M}\to\mathcal{N}$ is a morphism of stratified spaces.
 
  \end{example}

This example indicates that the decomposition \eqref{Eqt:MdecomposedtoAtimesB} generally requires three additional refinement procedures to define a stratification of $\mathcal{M}$ appropriately: separating the connected components of each $\mathcal{M}_{\alpha,\beta}$, decomposing each connected component into stratified subsets, and incorporating the limit points of each candidate stratum. The principle is to modify \eqref{Eqt:MdecomposedtoAtimesB}   minimally by introducing suitable subsets. Throughout this paper, a \textbf{refined stratification} of $\mathcal{M}$ via the inverse images of $f$ refers to a stratification obtained after these procedures have been performed, such that $f$ becomes a morphism of stratified spaces. For notational simplicity, however, we continue to refer to \eqref{Eqt:MdecomposedtoAtimesB} as the refined stratification and retain $A\times B$ as its index set. Accordingly, a refined stratum denoted by $\mathcal{M}_{\alpha,\beta}$ may itself represent a subset that requires further decomposition.
 
\section{Surjectivity of the moment   map}\label{Sec:JTstarGsurjectiveSUn}

Let us consider the setting where an integer \( n \geqslant 2 \) is fixed, along with the Lie group \( G = \SU(n) \), and the Lie algebra \( \galgebra = \su(n) \).  
We need the following fact,  well known   among experts. 
For the sake of completeness, we provide a proof.  \begin{proposition}\label{Prop:anyxitoO0} Any nonzero matrix in $\su(n)$ can be conjugated via some element in $\SU(n)$ to a matrix within $\su(n)$ such that all of its diagonal elements are zero and all elements on its   superdiagonal are equal to a single nonzero complex number. \end{proposition}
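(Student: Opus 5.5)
The plan is to write the required matrix explicitly in a Fourier-type basis built from an eigenbasis, rather than constructing it step by step. Let $X\in\su(n)$ be nonzero. Since $X$ is skew-Hermitian, there is an orthonormal eigenbasis $e_1,\dots,e_n$ of $\CC^n$ with $Xe_j=\imagunit\lambda_j e_j$. Here the $\lambda_j\in\RR$ satisfy $\sum_j\lambda_j=0$ because $X$ is traceless, and the $\lambda_j$ are not all equal because $X\neq 0$.

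Let $\omega=\rme^{2\pi\imagunit/n}$ and define
\[
f_k=\frac{1}{\sqrt n}\sum_{j=1}^n \omega^{jk}e_j ,\qquad k=1,\dots,n .
\]
This is again an orthonormal basis, since the discrete Fourier matrix is unitary. A direct computation gives
\[
\langle f_k, Xf_l\rangle=\frac{\imagunit}{n}\sum_{j=1}^n\lambda_j\,\omega^{j(l-k)},
\]
which depends only on $l-k \pmod n$. So in the basis $(f_k)$ the matrix of $X$ is circulant. Its diagonal entries equal $\frac{\imagunit}{n}\sum_j\lambda_j=0$, and every superdiagonal entry ($l=k+1$) equals the same number $c=\frac{\imagunit}{n}\sum_j\lambda_j\omega^{j}$.

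The one real point is to ensure $c\neq0$, and we get this by reordering the eigenbasis. Suppose $\sum_j\lambda_{\sigma(j)}\omega^j=0$ for every permutation $\sigma$. Choose indices $a,b$ with $\lambda_a\neq\lambda_b$, and compare $\sigma$ with $\sigma$ composed with the transposition exchanging the positions $1$ and $2$, where $\sigma(1)=a$ and $\sigma(2)=b$. Subtracting the two sums gives $(\lambda_a-\lambda_b)(\omega-\omega^2)=0$. Since $n\geqslant2$ gives $\omega\neq\omega^2$, this is a contradiction. Hence some ordering of the eigenvectors makes $c\neq0$, and we fix that ordering before forming the $f_k$.

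Finally, let $U$ be the unitary matrix with columns $f_1,\dots,f_n$. Then $U^{-1}XU$ has the required form. To make the conjugating element lie in $\SU(n)$, replace $U$ by $\mu U$, where $\mu$ is an $n$-th root of $\det U^{-1}$. Conjugation by $\mu U$ is the same as conjugation by $U$, and $\det(\mu U)=1$. There is no substantial obstacle once the circulant construction is chosen; the only step needing care is the permutation argument that guarantees the common superdiagonal entry is nonzero.
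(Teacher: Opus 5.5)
Your proof is correct and follows the paper's route: diagonalize, then conjugate by the normalized Fourier matrix so that the result is circulant, with zero diagonal and constant superdiagonal entry $\frac{\imagunit}{n}\sum_j\lambda_j\omega^j$. The differences are only in two details. You secure $c\neq 0$ by the transposition argument, whereas the paper fixes the sorted ordering $\xi_1\leqslant\cdots\leqslant\xi_n$, shows that the real part $-\sum_{j}(\xi_{j+1}-\xi_{n+1-j})\sin(j\theta)$ is nonnegative, and treats $\xi_2=\cdots=\xi_n$ separately. Your rescaling by $\mu$ also supplies a step the paper skips: $F/\sqrt{n}$ is unitary but not in general special unitary, since its determinant is $-1$ for $n=2$.
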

\begin{proof}Suppose that $\xi\in \su(n)$ is given and $\xi\neq 0$. It can be diagonalized by conjugation. So we simply consider  \(\xi = \imagunit \cdot \diag(\xi_1, \ldots, \xi_n)\), where \(\xi_i\) are real numbers satisfying \(\xi_1 + \cdots + \xi_n = 0\).   	
	Without loss of generality, we assume the ordering \(\xi_1 \leqslant \xi_2 \leqslant \cdots \leqslant \xi_n\). Let \( F \) denote the Fourier matrix of size \( n \), defined by the elements \( F_{ij} = \zeta^{(i-1)(j-1)} \), where \(\zeta = \rme^{\imagunit \theta} = \cos \theta + \imagunit \sin \theta\) and \(\theta = \frac{2\pi}{n}\), representing the primitive \( n \)-th root of unity. It is a well-established fact that \( g = \frac{1}{\sqrt{n}} F \) is a special unitary matrix. We proceed to show that the matrix \[ z := g^{-1} \xi g = \frac{1}{n} \overline{F}^T \xi F \in \Oxi \] possesses the required properties: specifically, \( z_{ii} = 0 \) and \( z_{i(i+1)} \neq 0 \).

	To compute the desired results, we first check the expression: $$ (\overline{F}^T \xi F)_{ii} = \imagunit \sum_{k=1}^n \xi_k = 0. $$ Next, we evaluate: \[ (\overline{F}^T \xi F)_{i(i+1)} = \imagunit \sum_{k=1}^n \xi_k \zeta^{k-1} = \imagunit (\xi_1 + \xi_2 \zeta + \cdots + \xi_n \zeta^{n-1}). \] This expression can be further expanded as follows: \[ \begin{aligned} \imagunit &\left(\xi_1 + (\xi_2 + \xi_n)\cos \theta + (\xi_3 + \xi_{n-1})\cos 2\theta + \cdots + (\xi_{m+1} + \xi_{m+2})\cos m\theta\right) \\ &- \left((\xi_2 - \xi_n)\sin \theta + (\xi_3 - \xi_{n-1})\sin 2\theta + \cdots + (\xi_{m+1} - \xi_{m+2})\sin m\theta\right), \end{aligned} \] for $n = 1 + 2m$, and \[ \begin{aligned} \imagunit &\left(\xi_1 + \xi_{m+1} + (\xi_2 + \xi_n)\cos \theta + (\xi_3 + \xi_{n-1})\cos 2\theta + \cdots + (\xi_{m+1} + \xi_{m+3})\cos m\theta\right) \\ &- \left((\xi_2 - \xi_n)\sin \theta + (\xi_3 - \xi_{n-1})\sin 2\theta + \cdots + (\xi_{m+1} - \xi_{m+3})\sin m\theta\right), \end{aligned} \] for $n = 2 + 2m$. Given the assumption $\xi_2 \leqslant \xi_3 \leqslant \cdots \leqslant \xi_n$, and since all terms $\sin \theta, \ldots, \sin m\theta$ are positive, the real part of $(\overline{F}^T \xi F)_{i(i+1)}$ is non-negative. In the special case where $\xi_2 = \xi_3 = \cdots = \xi_n > 0$, it follows that $\xi_1 = -(n-1)\xi_2$. Consequently, $(\overline{F}^T \xi F)_{i(i+1)}$ simplifies to a purely imaginary number: $$ \imagunit \cdot \xi_2 \cdot (-(n-1) + \zeta + \zeta^2 + \cdots + \zeta^{n-1}) = -n\xi_2 \imagunit. $$ In all scenarios, it is evident that $(\overline{F}^T \xi F)_{i(i+1)}$ is nonzero.

\end{proof}

We can now prove the surjectivity of the moment map.

\begin{proposition}
	\label{Prop:SUnmomentmapsurjective}
	For the group $G=\SU(n)$ and the Lie algebra $\galgebra\cong \gstar=\su(n)$, the associated moment map $J_{T^*G}: \su(n)\times \SU(n)\to \su(n)$  given by 
	$J_{T^*G}(x, \gamma)=x- \Adjointaction_{\gamma^{-1}}(x) $ (see Equation \eqref{Eqt:JTstarGgeneral}) is surjective.
	
\end{proposition}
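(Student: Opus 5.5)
Given $\xi\in\su(n)$, we will produce $(x,\gamma)$ with $x-\gamma^{-1}x\gamma=\xi$. The case $\xi=0$ is trivial: take $x=0$ and any $\gamma$. So assume $\xi\neq 0$. Since $J_{T^*G}$ is $G$-equivariant, $J(\mathrm{Ad}_g x, g\gamma g^{-1})=\mathrm{Ad}_g J(x,\gamma)$, so the image is a union of adjoint orbits. It therefore suffices to hit one element of $\mathcal{O}_\xi$. By Proposition \ref{Prop:anyxitoO0}, we may replace $\xi$ by a conjugate $z\in\mathcal{O}_\xi$ whose diagonal entries all vanish.

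The plan is to take $\gamma$ to be a regular diagonal element, $\gamma=\diag(\gamma_1,\dots,\gamma_n)\in\Hreg$, with pairwise distinct $\gamma_i\in\Sone$ and $\prod\gamma_i=1$. Such $\gamma$ exist, e.g.\ $\gamma_k=\zeta^{k-1}$ times a suitable scalar correction, or any generic choice. For a diagonal $\gamma$ the map $x\mapsto x-\gamma^{-1}x\gamma$ kills the diagonal, and on the off-diagonal entries it acts by
$$(x-\gamma^{-1}x\gamma)_{ij}=(1-\gamma_i^{-1}\gamma_j)\,x_{ij},$$
exactly as in \eqref{Eqt:SU3eqtxijzij}. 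Since $\gamma$ is regular, each factor $1-\gamma_i^{-1}\gamma_j$ is nonzero for $i\neq j$. We therefore define
$$x_{ii}=0,\qquad x_{ij}=\frac{z_{ij}}{1-\gamma_i^{-1}\gamma_j}\quad (i\neq j).$$
This gives $x-\gamma^{-1}x\gamma=z$, because the diagonal of $z$ vanishes.

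The one point to check is that this $x$ lies in $\su(n)$, i.e.\ is skew-Hermitian and traceless. Tracelessness is clear from $x_{ii}=0$. For skew-Hermiticity we need $x_{ji}=-\overline{x_{ij}}$. Using $\overline{\gamma_i}=\gamma_i^{-1}$, we have $\overline{1-\gamma_i^{-1}\gamma_j}=1-\gamma_i\gamma_j^{-1}=1-\gamma_j^{-1}\gamma_i$, which is exactly the denominator for $x_{ji}$. Combined with $z_{ji}=-\overline{z_{ij}}$, this gives the required identity.

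Finally, conjugating back by the element $g$ with $z=g^{-1}\xi g$ yields a preimage of $\xi$ itself. The only substantive input is Proposition \ref{Prop:anyxitoO0} (the zero-diagonal conjugate); the rest is the short entrywise computation above. The superdiagonal nonvanishing part of that proposition is not even needed here.
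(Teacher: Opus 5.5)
Your proof is correct and follows the same route as the paper: conjugate $\xi$ to a zero-diagonal $z$ via Proposition~\ref{Prop:anyxitoO0}, fix a regular diagonal $\gamma$, and solve $(1-\gamma_i^{-1}\gamma_j)x_{ij}=z_{ij}$ entrywise. You also spell out details the paper leaves implicit, and they all check out: that the resulting $x$ is skew-Hermitian and traceless, the equivariance used to pass from $z$ back to $\xi$, and the trivial case $\xi=0$.
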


\begin{proof}  Given any $z \in \su(n)$ with zero diagonal entries, and any regular diagonal matrix $\gamma = \diag(\gamma_1, \dots, \gamma_n) \in \SU(n)$, we can find an $x \in \su(n)$ such that $J_{T^*G}(x, \gamma) = z$. Furthermore, according to Proposition \ref{Prop:anyxitoO0}, any nonzero $\xi \in \su(n)$ is conjugate to some such $z$. Therefore, a preimage of $\xi$ with respect to the moment map $J_{T^*G}$ can be found.
\end{proof}

\section{Weyl's inequalities and Horn's theorem}\label{appendixSec:Horn}

We hereby recall Weyl's inequalities and Horn’s theorem \cite{Horn}, which address the relationship between the eigenvalues of two Hermitian matrices    and those of their sum.  For the purposes of this study, we restrict our attention to the  $3$-dimensional case, formulated as follows:

\begin{theorem}
  \label{Thm:3X3Weyl}[Weyl's inequalities] \cite{Weylinequality1,Weylinequality2} Let 
  $\underline{x}$, $\underline{y}$ be  $3\times 3$ Hermitian matrices, and  let $\underline{z}=\underline{x}+\underline{y}$. Then   the 
   eigenvalues    $\mu_1\geqslant \mu_2\geqslant \mu_3$ of $\underline{x}$,   $\nu_1\geqslant \nu_2\geqslant \nu_3$ of $\underline{y}$, and $\lambda_1\geqslant \lambda_2\geqslant \lambda_3$ of $\underline{z}$ satisfy the following inequalities:
   \begin{align}\label{Eqt:Weyl1}
\mu_1 + \nu_1 &\geqslant\lambda_1, \\\label{Eqt:Weyl2}
\min(\mu_2 + \nu_1,\mu_1 + \nu_2) &\geqslant\lambda_2, \\\label{Eqt:Weyl6} 
\min(\mu_3 + \nu_1,\mu_2 + \nu_2,\mu_1 + \nu_3) &\geqslant\lambda_3.
\end{align}
\end{theorem}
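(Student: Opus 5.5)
The plan is to derive all three families of inequalities in Theorem \ref{Thm:3X3Weyl} from the Courant--Fischer min-max characterization of eigenvalues, combined with a dimension count on subspaces of $\mathbb{C}^3$.

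First I will fix orthonormal eigenbases: $e_1,e_2,e_3$ for $\underline{x}$ with eigenvalues $\mu_1\geqslant\mu_2\geqslant\mu_3$, and $f_1,f_2,f_3$ for $\underline{y}$ with eigenvalues $\nu_1\geqslant\nu_2\geqslant\nu_3$. Then I will recall the variational description of the eigenvalues of $\underline{z}$:
\[
\lambda_k=\max_{\dim V=k}\ \min_{v\in V,\ \|v\|=1}\langle \underline{z}v,v\rangle=\min_{\dim W=4-k}\ \max_{w\in W,\ \|w\|=1}\langle \underline{z}w,w\rangle .
\]
The second form is the useful one here, since each inequality asks for an upper bound on $\lambda_k$.

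Next, fix indices $i,j$ with $i+j-1=k$. Put
\[
W_x=\mathrm{span}(e_i,\dots,e_3),\qquad W_y=\mathrm{span}(f_j,\dots,f_3).
\]
On $W_x$ the quadratic form of $\underline{x}$ is at most $\mu_i$, and on $W_y$ the form of $\underline{y}$ is at most $\nu_j$. The dimensions are $4-i$ and $4-j$, so
\[
\dim(W_x\cap W_y)\geqslant (4-i)+(4-j)-3=4-k .
\]
Choose a subspace $W\subseteq W_x\cap W_y$ of dimension exactly $4-k$. For every unit vector $w\in W$,
\[
\langle \underline{z}w,w\rangle=\langle \underline{x}w,w\rangle+\langle \underline{y}w,w\rangle\leqslant \mu_i+\nu_j ,
\]
and the min-max formula gives $\lambda_k\leqslant\mu_i+\nu_j$.

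The inequalities then follow by listing the admissible pairs:
\begin{itemize}
\item $k=1$ allows only $(i,j)=(1,1)$, which is \eqref{Eqt:Weyl1};
\item $k=2$ allows $(1,2)$ and $(2,1)$, and taking the minimum gives \eqref{Eqt:Weyl2};
\item $k=3$ allows $(1,3)$, $(2,2)$ and $(3,1)$, which gives \eqref{Eqt:Weyl6}.
\end{itemize}

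The only step I expect to need care is the direction of the min-max characterization and the dimension bookkeeping: the intersection must have dimension at least $4-k$, not $k$. This is where an index error could give the wrong family of inequalities, so I will check it on the case $k=3$, where the bound reduces to the triviality $\dim\geqslant 1$. The companion lower bounds used in the paper, such as \eqref{tempineq3} and \eqref{tempineq4}, should follow by applying the same argument to $-\underline{x},-\underline{y},-\underline{z}$.
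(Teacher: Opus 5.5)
Your proof is correct and complete. You use the min--max formula in the right direction, since the upper bounds come from $\lambda_k=\min_{\dim W=4-k}\max_{w\in W}$. The dimension count $\dim(W_x\cap W_y)\geqslant (4-i)+(4-j)-3=4-k$ is right for $i+j-1=k$. Listing the admissible pairs $(i,j)$ for $k=1,2,3$ gives exactly \eqref{Eqt:Weyl1}--\eqref{Eqt:Weyl6}.

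The paper itself gives no proof of this statement. It quotes the inequalities from the cited literature and uses them, together with the sign-flipped family \eqref{Eqt:Weyl1moremore}--\eqref{Eqt:Weyl6moremore}, to prove one inclusion in Lemma \ref{Lemma:genericBxishape}. You obtain that second family correctly by passing to $-\underline{x},-\underline{y},-\underline{z}$.

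Your route differs from the paper's only in being self-contained. It is the standard Courant--Fischer proof, and it carries over unchanged to $n\times n$ matrices, giving $\lambda_{i+j-1}\leqslant\mu_i+\nu_j$ whenever $i+j-1\leqslant n$. Citing buys the paper brevity. It also matches how the paper treats the other inclusion of Lemma \ref{Lemma:genericBxishape}: that needs the sufficiency statement of Horn's theorem (Theorem \ref{Thm:3X3Horn}). A dimension-counting argument like yours yields only necessary conditions, so it cannot give that converse.
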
 
By considering the  matrices $-\underline{x}, -\underline{y},$ and $-\underline{z}$, we have another family of Weyl's inequalities:
 \begin{align}\label{Eqt:Weyl1moremore}
\mu_3 + \nu_3 &\leqslant\lambda_3, \\\label{Eqt:Weyl2moremore}
\max(\mu_2 + \nu_3,\mu_3 + \nu_2) &\leqslant\lambda_2, \\\label{Eqt:Weyl6moremore} 
\max(\mu_1 + \nu_3,\mu_2 + \nu_2,\mu_3 + \nu_1) &\leqslant\lambda_1.
\end{align}

 In 1962, A. Horn conjectured that the eigenspectrum of the sum of two Hermitian matrices could be fully characterized by a set of recursive inequalities derived from the eigenspectra of the summands. This conjecture was definitively established following the proof of the saturation conjecture by A. Knutson and T. Tao \cite{Tao}. Again,   we only need to recall the $3$-dimensional case:
 
\begin{theorem}\label{Thm:3X3Horn}[Horn] 
There exist $3\times 3$ Hermitian matrices 
  $\underline{x}$, $\underline{y}$, and $\underline{z}$ satisfying $\underline{z}=\underline{x}+\underline{y}$  
  if and only if the 
   eigenvalues    $\mu_1\geqslant \mu_2\geqslant \mu_3$ of $\underline{x}$,   $\nu_1\geqslant \nu_2\geqslant \nu_3$ of $\underline{y}$, and $\lambda_1\geqslant \lambda_2\geqslant \lambda_3$ of $\underline{z}$  satisfy    Weyl's inequalities \eqref{Eqt:Weyl1} -- \eqref{Eqt:Weyl6} and the  following relations:
  \begin{align*} 
\mu_1 + \mu_2 + \nu_1 + \nu_2 &\geqslant\lambda_1 + \lambda_2, \\
\min(\mu_1 + \mu_3 + \nu_1 + \nu_2,\mu_1 + \mu_2 + \nu_1 + \nu_3 ) &\geqslant\lambda_1 + \lambda_3, \\ 
\min(\mu_1 + \mu_3 + \nu_1 + \nu_3,\mu_2 + \mu_3 + \nu_1 + \nu_2,\mu_1 + \mu_2 + \nu_2 + \nu_3) &\geqslant\lambda_2 + \lambda_3,  
\\
\mu_1 + \mu_2 + \mu_3 + \nu_1 + \nu_2 + \nu_3 &= \lambda_1 + \lambda_2 + \lambda_3\,.
\end{align*}
     
\end{theorem}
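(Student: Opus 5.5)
The plan has two halves: necessity, which reduces to Weyl's inequalities plus a trace identity, and sufficiency, which is the substantial direction.

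\textbf{Necessity.} The trace identity $\mu_1+\mu_2+\mu_3+\nu_1+\nu_2+\nu_3=\lambda_1+\lambda_2+\lambda_3$ is immediate from $\mathbf{tr}(\underline{z})=\mathbf{tr}(\underline{x})+\mathbf{tr}(\underline{y})$. The inequalities \eqref{Eqt:Weyl1}--\eqref{Eqt:Weyl6} are Theorem \ref{Thm:3X3Weyl}. Each two-term inequality then comes from one of the ``reversed'' Weyl inequalities \eqref{Eqt:Weyl1moremore}--\eqref{Eqt:Weyl6moremore}, obtained by applying Theorem \ref{Thm:3X3Weyl} to $-\underline{x},-\underline{y},-\underline{z}$, combined with the trace identity. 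Concretely:
\begin{itemize}
\item $\lambda_1+\lambda_2=\mathbf{tr}-\lambda_3\leqslant \mathbf{tr}-(\mu_3+\nu_3)=\mu_1+\mu_2+\nu_1+\nu_2$, using \eqref{Eqt:Weyl1moremore};
\item $\lambda_1+\lambda_3=\mathbf{tr}-\lambda_2\leqslant \mathbf{tr}-\max(\mu_2+\nu_3,\mu_3+\nu_2)$, which gives both entries of the corresponding minimum, using \eqref{Eqt:Weyl2moremore};
\item $\lambda_2+\lambda_3=\mathbf{tr}-\lambda_1\leqslant\mathbf{tr}-\max(\mu_1+\nu_3,\mu_2+\nu_2,\mu_3+\nu_1)$, which gives the three entries, using \eqref{Eqt:Weyl6moremore}.
\end{itemize}
So necessity is purely formal once Weyl's inequalities are available.

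\textbf{Sufficiency.} I would fix $\mu,\nu$ and study the set $\Lambda(\mu,\nu)$ of spectra of $\underline{x}+\underline{y}$ with $\underline{x}\in\mathcal{O}_\mu$ and $\underline{y}\in\mathcal{O}_\nu$.
\begin{enumerate}
\item Identify $\Lambda(\mu,\nu)$ with the intersection of the moment image of $\mathcal{O}_\mu\times\mathcal{O}_\nu$, under the diagonal $\mathrm{U}(3)$-action with moment map $(\underline{x},\underline{y})\mapsto\underline{x}+\underline{y}$, with the positive Weyl chamber. By Kirwan's convexity theorem this set is a convex polytope.
\item The polytope cut out by the stated inequalities is also convex. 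It therefore suffices to show that every vertex of the inequality polytope is realized.
\item In dimension $3$ the vertices are explicit and can be listed case by case. Each is realized by block constructions. Commuting diagonal pairs $\underline{x}=\diag(\mu_{\sigma(i)})$, $\underline{y}=\diag(\nu_{\tau(i)})$ give the permutation-sum vertices. Vertices lying on a single two-term equality are handled by a direct sum of a $1\times 1$ block with a $2\times 2$ block; the $2\times2$ block is realizable by the $2\times 2$ Horn result (Theorem \ref{Thm:2X2Horn}), i.e.\ an explicit rotation computation.
\end{enumerate}

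The main obstacle is exactly step 3, the vertex enumeration and the check that each vertex is realized. Equivalently, the difficulty is proving that the Horn inequalities are sufficient, not merely necessary. If the hand enumeration becomes unwieldy, I would fall back on the general theorem: Klyachko's description of $\Lambda(\mu,\nu)$ by Schubert-calculus inequalities, together with Knutson--Tao saturation \cite{Tao}, shows that the listed inequalities, which are the $n=3$ Horn list, are sufficient. Specializing that result to $n=3$ yields Theorem \ref{Thm:3X3Horn}.
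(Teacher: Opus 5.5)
Your proposal is correct, and it takes a different route from the paper. The paper gives no proof of Theorem~\ref{Thm:3X3Horn}: it recalls the result as the $n=3$ case of Horn's conjecture, settled via Klyachko's inequalities and Knutson--Tao saturation \cite{Tao}, which is exactly your fallback.

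Your main route is a self-contained $n=3$ argument. Necessity is formal. Your observation that, given the trace identity, the three two-term inequalities are exactly the reversed Weyl inequalities \eqref{Eqt:Weyl1moremore}--\eqref{Eqt:Weyl6moremore} is correct. For sufficiency, Kirwan convexity for $\mathcal{O}_\mu\times\mathcal{O}_\nu$ under the diagonal action correctly reduces the problem to the vertices of the inequality polygon.

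Step~3, which you flag as the obstacle and leave as a case list, closes with no enumeration. You should state and check the following two facts rather than leave them implicit.
\begin{itemize}
\item \emph{Every vertex lies on a Weyl-type line.} In the plane $\sum_k\lambda_k=\sum_i\mu_i+\sum_j\nu_j$, the bounding lines are the twelve lines $\lambda_k=\mu_i+\nu_j$ coming from the upper and reversed Weyl inequalities, plus the two chamber walls. A vertex lies on two independent bounding lines. If both are walls, then $\lambda_3\leqslant\mu_2+\nu_2\leqslant\lambda_1=\lambda_3$. So every vertex satisfies some equality $\lambda_k=\mu_i+\nu_j$.
\item \emph{On such a line the problem splits.} Subtracting the equality from the appropriate $3\times 3$ inequalities gives exactly the conditions of Theorem~\ref{Thm:2X2Horn} for the complementary eigenvalues. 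For example, if $\lambda_2=\mu_1+\nu_2$, the $\lambda_1+\lambda_2$ inequality gives $\lambda_1\leqslant\mu_2+\nu_1$, and the $\lambda_2+\lambda_3$ inequality gives $\lambda_3\leqslant\mu_2+\nu_3$. The remaining conditions come directly from the Weyl and reversed Weyl bounds, and the other cases follow by the symmetries $\mu\leftrightarrow\nu$ and negation.
\end{itemize}
Hence your $1\oplus 2$ block construction works at every vertex, and your separate ``permutation-sum'' class is not needed.

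The two routes buy different things. Yours gives an elementary proof for $n=3$ that rests only on Weyl's inequalities, Kirwan's theorem and the $2\times 2$ rotation argument. The cited route is valid for all $n$. There, vertices need not lie on Weyl-type facets, and splitting along a facet requires the full recursive Horn machinery.
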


Let us also   recall the $2$-dimensional case of Horn's theorem, which is relatively simple.
 
\begin{theorem}\label{Thm:2X2Horn}[Horn] 
There exist $2\times 2$ Hermitian matrices 
  $\underline{x}$, $\underline{y}$, and $\underline{z}$ satisfying $\underline{z}=\underline{x}+\underline{y}$  
  if and only if the 
   eigenvalues    $\mu_1\geqslant \mu_2 $ of $\underline{x}$,   $\nu_1\geqslant \nu_2 $ of $\underline{y}$, and $\lambda_1\geqslant \lambda_2 $ of $\underline{z}$  satisfy  the following Weyl inequalities   
   \begin{align*} 
\max(\mu_1 + \nu_2,     \mu_2 + \nu_1)   &\leqslant \lambda_1 \leqslant  \mu_1 + \nu_1  \,,\\
\mu_2+\nu_2 &\leqslant \lambda_2 \leqslant \min(\mu_1 + \nu_2,     \mu_2 + \nu_1),
\end{align*}
   
   and the linearity relation:
  \begin{align*} 
\mu_1 + \mu_2   + \nu_1 + \nu_2   &= \lambda_1 + \lambda_2  \,.
\end{align*}
     
\end{theorem}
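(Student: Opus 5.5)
The plan is to treat necessity and sufficiency separately. The trace identity $\mu_1+\mu_2+\nu_1+\nu_2=\lambda_1+\lambda_2$ lets us reduce everything to one real parameter: the gap $\lambda_1-\lambda_2$.

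For necessity, the linearity relation is just $\mathrm{tr}\,\underline{z}=\mathrm{tr}\,\underline{x}+\mathrm{tr}\,\underline{y}$. The upper bound $\lambda_1\leqslant\mu_1+\nu_1$ follows from $\lambda_1=\max_{\|v\|=1}\langle \underline{z}v,v\rangle$ together with $\langle \underline{x}v,v\rangle\leqslant\mu_1$ and $\langle \underline{y}v,v\rangle\leqslant\nu_1$. Dually, $\lambda_2\geqslant\mu_2+\nu_2$ follows from the minimum characterization of $\lambda_2$. For the middle inequalities, I would use Courant--Fischer in dimension $2$. Take $v$ a unit eigenvector of $\underline{y}$ for $\nu_2$, and $w$ a unit vector orthogonal to the $\lambda_1$-eigenvector of $\underline{z}$. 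In $\CC^2$ these two lines determine $v$ up to phase, so one can choose a vector $v'$ with $\langle \underline{z}v',v'\rangle\leqslant\lambda_2$ on the appropriate line. This gives $\lambda_2\leqslant\mu_1+\nu_2$, and symmetrically $\lambda_2\leqslant\mu_2+\nu_1$. Equivalently, these are the already-recalled Weyl inequalities specialized to $n=2$, so I may simply cite them. The remaining inequalities $\lambda_1\geqslant\max(\mu_1+\nu_2,\mu_2+\nu_1)$ then follow by subtracting from the trace identity.

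For sufficiency, the key observation is that, given the trace identity, all four displayed inequalities are equivalent to the single condition
\[
\abs{(\mu_1-\mu_2)-(\nu_1-\nu_2)}\;\leqslant\;\lambda_1-\lambda_2\;\leqslant\;(\mu_1-\mu_2)+(\nu_1-\nu_2).
\]
To check this, write $s=\lambda_1+\lambda_2$ (which is fixed) and express everything in terms of $\lambda_1$. Each lower bound on $\lambda_1$ becomes one of the two branches of the absolute value, and each inequality on $\lambda_2$ is redundant with the corresponding one on $\lambda_1$.

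The construction is then explicit. Set $\underline{x}=\diag(\mu_1,\mu_2)$ and $\underline{y}=R_\theta\diag(\nu_1,\nu_2)R_\theta^{-1}$, where $R_\theta$ is the real rotation by angle $\theta$. Put $a=\mu_1-\mu_2$ and $b=\nu_1-\nu_2$. Subtracting scalar parts, the traceless part of $\underline{x}+\underline{y}$ is $\frac12$ times the sum of two reflection-type matrices whose angles differ by $2\theta$. A direct computation of the discriminant then gives
\[
(\lambda_1-\lambda_2)^2 = a^2+b^2+2ab\cos 2\theta .
\]
As $\theta$ runs over $[0,\pi/2]$, this quantity varies continuously from $(a+b)^2$ to $(a-b)^2$. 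By the intermediate value theorem, every admissible gap is attained. Since the trace is automatically $\mu_1+\mu_2+\nu_1+\nu_2=\lambda_1+\lambda_2$, the resulting sum $\underline{z}$ has exactly the eigenvalues $\lambda_1\geqslant\lambda_2$.

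I do not expect a serious obstacle here. The only step requiring care is the bookkeeping that shows the four Weyl inequalities, under the trace constraint, collapse exactly to the two-sided bound on $\lambda_1-\lambda_2$. If one wants a self-contained argument rather than citing Weyl, the Courant--Fischer step for $\lambda_2\leqslant\min(\mu_1+\nu_2,\mu_2+\nu_1)$ is the one to write out carefully.
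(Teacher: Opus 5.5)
Your proof is correct, but it takes a genuinely different route from the paper. The paper gives no proof at all: it simply recalls the statement as the $n=2$ instance of Horn's theorem, whose general validity rests on the Knutson--Tao saturation theorem. You instead argue directly. Necessity comes from the trace and the Rayleigh-quotient characterizations of $\lambda_1,\lambda_2$. For sufficiency, you note that under the trace identity all the inequalities collapse to $\abs{a-b}\leqslant\lambda_1-\lambda_2\leqslant a+b$; the bookkeeping checks out, since each bound on $\lambda_2$ is equivalent to one of the bounds on $\lambda_1$. You then realize every admissible gap with $\underline{x}=\diag(\mu_1,\mu_2)$ and $\underline{y}=R_\theta\diag(\nu_1,\nu_2)R_\theta^{-1}$, for which indeed $(\lambda_1-\lambda_2)^2=a^2+b^2+2ab\cos 2\theta$. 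Your route gives a self-contained elementary proof with explicit real symmetric witnesses. It also has a transparent geometric meaning: the traceless parts behave like vectors of lengths $a/2$ and $b/2$, so the condition is a triangle inequality. The citation only buys uniformity with the $3\times 3$ case, where Horn's additional inequalities are genuinely needed.

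One local repair is needed. Your Courant--Fischer sketch for $\lambda_2\leqslant\mu_1+\nu_2$ does not work as written: the auxiliary vector $w$ plays no role, and $\langle\underline{z}v',v'\rangle\leqslant\lambda_2$ is the wrong direction. Because $\lambda_2$ is the smallest eigenvalue of $\underline{z}$, just take $v$ a unit $\nu_2$-eigenvector of $\underline{y}$. This gives $\lambda_2\leqslant\langle\underline{z}v,v\rangle=\langle\underline{x}v,v\rangle+\nu_2\leqslant\mu_1+\nu_2$, and symmetrically $\lambda_2\leqslant\mu_2+\nu_1$. Alternatively, cite the general Weyl inequality $\lambda_{i+j-1}\leqslant\mu_i+\nu_j$, since the paper only records the $3\times3$ case.
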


\end{document}